\theoremstyle{definition}
\newtheorem{assumption}[theorem]{Assumption}
\definecolor{dark-gray}{gray}{0.4}
\definecolor{grayish}{gray}{0.4}
\definecolor{appledarkgray}{rgb}{0.55, 0.71, 0.0}
\definecolor{darkolive}{rgb}{0.0, 0.5, 0.0}
\tikzset{shorten >=1pt, >=stealth, auto, node distance=40, initial text=}
\title{Decision problems for origin-close top-down tree transducers (full version)} 
\titlerunning{Decision problems for origin-close top-down tree transducers} 
\author{Sarah Winter}{Université libre de Bruxelles, Belgium \and \url{http://www.sarahwinter.net}}{}{https://orcid.org/0000-0002-3499-1995}{}
\authorrunning{S. Winter} 
\keywords{tree tranducers, equivalence, uniformization, synthesis, origin semantics} 
\newcommand{\charfusion}[2]{%
  \def\ch@rfusion##1##2{%
    \ooalign{\hfil$##1#1$\hfil\cr\hfil$##2$\hfil\crcr}}%
      \mathop{%
      \vphantom{#1}%
      \mathpalette\ch@rfusion#2}\displaylimits}
\newcommand{\cupcdot}{\charfusion{\cup}{\cdot}}
\newcommand{\fixed@sra}{\ensuremath{\vrule height 2\fontdimen22\textfont2 width 0pt\shortrightarrow}}
\newcommand{\shortarrow}[1]{%
  \mathrel{\text{\rotatebox[origin=c]{\numexpr#1*45}{\fixed@sra}}}
}
\newcommand{\bigplus}{%
  \DOTSB\mathop{\mathpalette\mattos@bigplus\relax}\slimits@
}
\newcommand\mattos@bigplus[2]{%
  \vcenter{\hbox{%
    \sbox\z@{$#1\sum$}%
    \resizebox{!}{0.9\dimexpr\ht\z@+\dp\z@}{\raisebox{\depth}{$\m@th#1+$}}%
  }}%
  \vphantom{\sum}%
}
\newcommand{\dom}[1]{\ensuremath{\mathit{dom}_{#1}}}
\newcommand{\val}[1]{\ensuremath{#1}}
\newcommand{\SigmaI}{\ensuremath{\Sigma}}
\newcommand{\SigmaO}{\ensuremath{\Gamma}}
\newcommand{\SigmaIrk}[1]{\ensuremath{\Sigma_{#1}}}
\newcommand{\Out}{\textsf{Out}}
\newcommand{\In}{\textsf{In}}
\newcommand{\pta}{\ensuremath{\textup{\textsc{pta}}}\xspace}
\newcommand{\fst}{\ensuremath{\textup{\textsc{fst}}}\xspace}
\newcommand{\nfst}{\ensuremath{\textup{\textsc{fst}}}\xspace}
\newcommand{\tdtt}{\ensuremath{\textup{\textsc{tdtt}}}\xspace}
\newcommand{\ntdtt}{\ensuremath{\textup{\textsc{tdtt}}}\xspace}
\newcommand{\dtdtt}{\ensuremath{\textup{\textsc{dtdtt}}}\xspace}
\newcommand{\dtdttr}{\ensuremath{\textup{\textsc{dtdtt}}^{\textup{R}}}\xspace}
\newcommand{\mysubparagraph}[1]{\vspace{0mm}\subparagraph{#1}}
\begin{document}

\maketitle

\begin{abstract}
Tree transductions are binary relations of finite trees.
For tree transductions defined by non-deterministic top-down tree transducers, inclusion, equivalence and synthesis problems are known to be undecidable.
Adding origin semantics to tree transductions, i.e., tagging each output node with the input node it originates from, is a known way to recover decidability for inclusion and equivalence.
The origin semantics is rather rigid, in this work, we introduce a similarity measure for transducers with origin semantics and show that we can decide inclusion, equivalence and synthesis problems for origin-close non-deterministic top-down tree transducers.
\end{abstract}

\section{Introduction}
\label{sec:intro}

In this paper we study decision problems for top-down tree transducers over finite trees with origin semantics.
Rounds \cite{rounds1970mappings} and Thatcher \cite{thatcher1970generalized2} independently invented tree transducers (their model is known today as top-down tree transducer) as a generalization of finite state word transducers in the context of natural language processing and compilers in the beginning of the 1970s.
Nowadays, there is a rich landscape of various tree transducer models used in many fields, for example, syntax-directed translation \cite{fulop2012syntax}, databases \cite{MiloSV03,hakuta2014xquery}, linguistics \cite{maletti2009power,braune2013shallow}, programming languages \cite{voigtlander2004composition,matsuda2012polynomial}, and security analysis \cite{kusters2007transducer}.

%
Unlike tree automata, tree transducers have undecidable inclusion and equivalence problems \cite{esik1980decidability}.
This is already the case for word transducers \cite{DBLP:journals/jacm/Griffiths68,FischerR68}.
%
%
The intractability of, e.g., the equivalence problem for transducers (whether two given transducers recognize the same transduction, that is, the same relation) mainly stems from the fact that two transducers recognizing the same transduction may produce their outputs very differently.
One transducer may produce its output fast and be ahead of the other.
In general, there is an infinite number of transducers for a single transduction.
To overcome this difficulty Bojanczyk \cite{DBLP:conf/icalp/Bojanczyk14} has introduced origin semantics, that is, additionally, there is an origin function that maps output positions to their originating input positions.
The main result of \cite{DBLP:conf/icalp/Bojanczyk14} is a machine-independent characterization of transductions defined by deterministic two-way transducers with origin semantics.
Word transducers with origin semantics where further investigated in \cite{DBLP:conf/icalp/BojanczykDGP17}, and properties of subclasses of transductions with origin semantics definable by one-way word transducers have been studied in \cite{conf/stacs/FigueiraL14,DBLP:conf/stacs/DescotteFF19}.
Under origin semantics, many interesting problems become decidable, e.g., equivalence of one-way word transducers. 
This is not surprising as a transduction now incorporates \emph{how} it translates an input word into an output word providing much more information.

In \cite{DBLP:journals/iandc/FiliotMRT18}, the authors have initiated a study of several decision problems for different tree transducer models on finite trees with origin semantics.
More concretely, they studied inclusion, equivalence, injectivity and query determinacy problems for top-down tree transducers, tree transducers definable in monadic second order logic, and top-down tree-to-word transducers.
They showed (amongst other results) that inclusion and equivalence become decidable for all models except tree-to-string transducers with origin semantics.

In general, there has been an interest to incorporate some kind of origin information (i.e., \emph{how} a transduction works) into tree transductions, in order to gain more insight on different tree transductions, see, e.g., \cite{van1993origin,engelfriet2003macro,maletti2011tree}.

%
However, the origin semantics is rather rigid. To mitigate this, in \cite{DBLP:conf/icalp/FiliotJLW16}, the authors have introduced a similarity measure between (one-way) word transducers with origin semantics which amounts to a measure that compares the difference between produced outputs on the same input prefix, in short, the measure compares their output delays.
They show that inclusion, equivalence, and sequential uniformization (see next paragraph) problems become decidable for transducers that have bounded output delay.
These problem are undecidable for word transducers in general, see \cite{DBLP:journals/jacm/Griffiths68,FischerR68,CarayolL14}.
The introduction of this similarity measure has triggered similar works on two-way word transducers, see \cite{DBLP:conf/fsttcs/BoseMPP18,DBLP:conf/mfcs/BoseKMPP19}.

In order to obtain decidability results (in a less rigid setting than origin semantics), we initiate the study of inclusion, equivalence, and uniformization problems for top-down tree transducers under similarity measures which are based on the behavior of the transducers.

%
A uniformization of a binary relation is a function that selects for each element of the domain of the relation an element in its image.
Synthesis problems are closely related to \emph{effective} uniformization problems; algorithmic synthesis of specifications (i.e., relations) asks for effective uniformization by functions that can be implemented in a specific way.
The classical setting is Church's synthesis problem \cite{church1962logic}, where logical specifications over infinite words are considered. 
Büchi and Landweber \cite{buechi} showed that for specifications in monadic second order logic, that is, specifications that can be translated into synchronous finite automata, it is decidable whether they can be realized by a synchronous sequential transducer. 
Later, decidability has been extended to asynchronous sequential transducers \cite{hosch1972finite,holtmann2010degrees}.
Detailed studies of the synthesis of sequential transducers from synchronous and asynchronous finite automata on finite words are provided in \cite{DBLP:conf/icalp/FiliotJLW16,DBLP:conf/icalp/Winter18}, for an overview see \cite{CarayolL14}.

Uniformization questions in this spirit have been first studied for relations over finite trees in \cite{DBLP:journals/iandc/LodingW17,DBLP:conf/mfcs/LodingW16}.
The authors have considered tree-automatic relations, that is, relations definable by tree automata over a product alphabet.
They have shown that for tree-automatic relations definable by deterministic top-down tree automata uniformization by deterministic top-down tree transducers (which are a natural extension of sequential transducer on words) is decidable. However, for non-deterministic top-down tree automata it becomes undecidable.

%
Our contribution is the introduction of two similarity measures for top-down tree transducers.
The first measure is an extension of the output delay measure introduced for word transducers in \cite{DBLP:conf/icalp/FiliotJLW16} to tree transducers.
Comparing top-down tree transducers based on their output delay has also been done in e.g., \cite{DBLP:journals/tcs/EngelfrietMS16}, we use the same notion of delay to define our measure.
Unfortunately, while decidability for major decision problems is regained in the setting of word transducers, we show that it is not in the setting of tree transducers.
The second similarity measure is more closely connected to the origin semantics.
We define two transducers as origin-close if there is a bound on the distance of two positions which are origins of the same output node by the two transducers.
Our main result is that inclusion, equivalence and uniformization by deterministic top-down tree transducers is decidable for origin-close top-down tree transducers.

%
The paper is structured as follows.
In \cref{sec:prelims} we provide definitions and terminology used throughout the paper.
In \cref{sec:similarity} we present two similarity measures for (top-down tree) transducers and provide a comparison of their expressiveness, and in \cref{sec:origin-close} we consider decision problems for origin-close top-down tree transducers.

\section{Preliminaries}
\label{sec:prelims}


\mysubparagraph{Words, trees, and contexts.}

An \emph{alphabet} $\Sigma$ is a finite non-empty set of \emph{letters} or \emph{symbols}.
A finite \emph{word} is a finite sequence of letters.
The set of all finite words over $\Sigma$ is denoted by $\Sigma^*$.
The length of a word $w \in \Sigma^*$ is denoted by $|w|$, the empty word is denoted by $\varepsilon$.
We write $u \sqsubseteq w$ if there is some $v$ such that $w = uv$ for $u,v \in \Sigma^*$.
A subset $L \subseteq \Sigma^*$ is called \emph{language} over $\Sigma$.
A \emph{ranked alphabet} $\Sigma$ is an alphabet where each letter $f \in \Sigma$ has a rank $rk(f) \in \mathbbm{N}$.
The set of letters of rank $i$ is denoted by $\Sigma_i$.
A tree domain $\dom{}$ is a non-empty finite subset of $(\mathbbm N\setminus\{0\})^*$ such that $\dom{}$ is prefix-closed and for each $u \in (\mathbbm N\setminus\{0\})^*$ and $i \in \mathbbm N\setminus\{0\}$ if $ui \in \dom{}$, then $uj \in \dom{}$ for all $1 \leq j < i$.
We speak of $ui$ as successor of $u$ for each $u \in \dom{}$ and $i \in \mathbbm N\setminus\{0\}$, and the $\sqsubseteq$-maximal elements of \dom{} are called \emph{leaves}.

A (finite $\Sigma$-labeled) \emph{tree} is a mapping $\val{t}: \dom{t} \rightarrow \Sigma$ such that for each node $u \in \dom{t}$ the number of successors of $u$ is a rank of $\val{t}(u)$.
The height $h$ of a tree $t$ is the length of its longest path, i.e., $h(t) = max\{ |u| \mid u \in \dom{t}\}$.
The set of all $\Sigma$-labeled trees is denoted by $T_{\Sigma}$.
A subset $T \subseteq T_{\Sigma}$ is called \emph{tree language} over $\Sigma$.

A \emph{subtree} $t|_u$ of a tree $t$ at node $u$ is defined by $\dom{t|_u} = \{ v \in \mathbbm{N}^* \mid uv \in \dom{t} \}$ and $\val{t|_u}(v) = \val{t}(uv)$ for all $v \in \dom{t|_u}$.
In order to formalize concatenation of trees, we introduce the notion of special trees.
A \emph{special tree} over $\Sigma$ is a tree over \(\Sigma \cupcdot \{\circ\}\) such that $\circ$ has rank zero and occurs exactly once at a leaf.
Given $t \in T_{\Sigma}$ and $u \in \dom{t}$, we write $t[\circ/u]$ for the special tree that is obtained by deleting the subtree at $u$ and replacing it by $\circ$.
Let $S_{\Sigma}$ be the set of special trees over $\Sigma$.
For $t \in S_{\Sigma}$ and $s \in T_{\Sigma}$ or $s \in S_{\Sigma}$ let the \emph{concatenation} $t \cdot s$ be the tree that is obtained from $t$ by replacing $\circ$ with $s$.


Let $X_n$ be a set of $n$ variables $\{x_1,\dots,x_n\}$ and $\Sigma$ be a ranked alphabet.
We denote by $T_{\Sigma}(X_n)$ the set of all trees over $\Sigma$ which additionally can have variables from $X_n$ at their leaves.
We define $X_0$ to be the empty set, the set $T_\Sigma(\emptyset)$ is equal to $T_\Sigma$.
Let $X = \bigcup_{n > 0} X_n$.
A tree from $T_{\Sigma}(X)$ is called \emph{linear} if each variable occurs at most once.
For $t \in T_{\Sigma}(X_n)$ let $t[x_1 \leftarrow t_1, \dots, x_n \leftarrow t_n]$ be the tree that is obtained by substituting each occurrence of $x_i \in X_n$ by $t_i \in T_{\Sigma}(X)$ for every $1 \leq i \leq n$.

A tree from $T_{\Sigma}(X_n)$ such that all variables from $X_n$ occur exactly once and in the order $x_1,\dots,x_n$ when reading the leaf nodes from left to right, is called $n$-\emph{context} over $\Sigma$.
Given an $n$-context, the node labeled by $x_i$ is referred to as $i$th hole for every $1 \leq i \leq n$. 
A special tree can be seen as a $1$-context, a tree without variables can be seen a $0$-context.
If $C$ is an $n$-context and $t_1, \dots, t_n \in T_{\Sigma}(X)$ we write $C[t_1,\dots,t_n]$ instead of $C[x_1 \leftarrow t_1, \dots, x_n \leftarrow t_n]$.

\mysubparagraph{Tree transductions, origin mappings, and uniformizations.}

Let $\Sigma,\Gamma$ be ranked alphabets.
A \emph{tree transduction (from $T_\Sigma$ to $T_\Gamma$)} is a relation $R \subseteq T_\Sigma \times T_\Gamma$.
Its \emph{domain}, denoted $\mathrm{dom}(R)$, is the projection of $R$ on its first component. 
Given trees $t_1,t_2$, an \emph{origin mapping} of $t_2$ in $t_1$ is a function $o\colon \dom{t_2} \to \dom{t_1}$.
Given $v \in \dom{t_2}$, $u \in \dom{t_1}$, we say $v$ has origin $u$ if $o(v) = u$.
Examples are depicted in \cref{fig:intro-mapping,fig:k-origin}.
A \emph{uniformization} of a tree transduction $R \subseteq T_\Sigma \times T_\Gamma$ is a function $f\colon \mathrm{dom}(R) \to T_\Gamma$ such that $(t,f(t)) \in R$ for all $t \in \mathrm{dom}(R)$.

\mysubparagraph{Top-down tree transducers.}

We consider top-down tree transducers, which read the tree from the root to the leaves in a parallel fashion and produce finite output trees in each step that are attached to the already produced output.

A \emph{top-down tree transducer} (a \tdtt) is of the form $\mathcal T = (Q,\Sigma,\Gamma,q_0,\Delta)$ consisting of a finite set of states $Q$, a finite input alphabet $\Sigma$, a finite output alphabet $\Gamma$, an initial state $q_0 \in Q$, and $\Delta$ is a finite set of transition rules of the form
\[
  q(f(x_1,\dots,x_i)) \rightarrow w[q_1(x_{j_1}),\dots,q_n(x_{j_n})], 
\]
where $f \in \Sigma_i$, $w$ is an $n$-context over $\Gamma$, $q,q_1,\dots,q_n \in Q$ and variables $x_{j_1},\dots,x_{j_n} \in X_i$.
A  \emph{deterministic} \tdtt (a \dtdtt) has no two rules with the same left-hand side.

We now introduce a non-standard notion of configurations which is more suitable to prove our results.
Usually, a configuration is a partially transformed input tree; the upper part is the already produced output, the lower parts are remainders of the input tree.
Here, we keep the input and output tree separate and introduce a mapping from nodes of the output tree to nodes of the input tree from where the transducer continues to read.
A visualization of several configurations is given in \cref{fig:configuration}.

A \emph{configuration} of a top-down tree transducer is a triple $c = (t,t',\varphi)$ of an input tree $t \in T_{\Sigma}$, an output tree $t' \in T_{\Gamma \cup Q}$ and a function $\varphi: D_{t'} \rightarrow \dom{t}$, where
\begin{itemize}
 \item $\val{t'}(u) \in \Gamma_i$ for each $u \in \dom{t'}$ with $i > 0$ successors, and
 \item $\val{t'}(u) \in \Gamma_0$ or $\val{t'}(u) \in Q$ for each leaf $u \in \dom{t'}$, and
 \item $D_{t'} \subseteq \dom{t'}$ with $D_{t'} = \{ u \in \dom{t'} \mid \val{t'}(u) \in Q\}$, i.e., $\varphi$ maps every node from the output tree $t'$ that has a state-label to a node of the input tree $t$.
\end{itemize}
\indent Let $c_1 = (t,t_1,\varphi_1)$ and $c_2 = (t,t_2,\varphi_2)$ be configurations of a top-down tree transducer over the same input tree.
We define a \emph{successor relation} $\rightarrow_{\mathcal T}$ on configurations as usual by applying one rule.
Figure \ref{fig:configuration} illustrates a configuration sequence explained in Example \ref{ex:transducer} below.
Formally, for the application of a rule, we define $c_1 \rightarrow_{\mathcal T}c_2 :\Leftrightarrow$
\begin{itemize}
 \item There is a state-labeled node $u \in D_{t'}$ of the output tree $t_1$ that is mapped to a node $v \in \dom{t}$ of the input tree $t$, i.e., $\varphi_1(u) = v$, and
 \item there is a rule $\val{t_1}(u)\left(\val{t}(v)(x_1,\dots,x_i)\right) \rightarrow w[q_1(x_{j_1}),\dots,q_n(x_{j_n})] \in \Delta$ such that the output tree is correctly updated, i.e., $t_2 = t_1[\circ/u] \cdot w[q_1,\dots,q_n]$, and
 \item the mapping $\varphi_2$ is correctly updated, i.e., $\varphi_2(u') = \varphi_1(u')$ if $u' \in D_{t_1}\setminus \{u\}$ and $\varphi_2(u') = v.j_i$ if $u' = u.u_i$ with $u_i$ is the $i$th hole in $w$.
\end{itemize}
Furthermore, let $\rightarrow_{\mathcal T}^*$ be the reflexive and transitive closure of $\rightarrow_{\mathcal T}$.
From here on, let $\varphi_0$ always denote the mapping $\varphi_0(\varepsilon)=\varepsilon$.
A configuration $(t,q_0,\varphi_0)$ is called \emph{initial configuration} of $\mathcal T$ on $t$.
A configuration sequence starting with an initial configuration where each configuration is a successor of the previous one is called a \emph{run}.
For a tree $t \in T_{\Sigma}$ let $\mathcal T(t) \subseteq T_{\Gamma \cup Q}$ be the set of \emph{final transformed outputs} of a computation of $\mathcal T$ on $t$, that is the set $\{t' \mid (t,q_0,\varphi_0)\rightarrow_{\mathcal T}^* (t,t',\varphi) \text{ s.t.\ there is no successor configuration of } (t,t',\varphi)\}.$
Note, we explicitly do not require that the final transformed output is a tree over $\Gamma$.
In the special case that $\mathcal T(t)$ is a singleton set $\{t'\}$, we also write $\mathcal T(t) = t'$.
The \emph{transduction} $R(\mathcal T)$ induced by a \tdtt $\mathcal T$ is $R(\mathcal T) = \{(t,t') \mid t' \in \mathcal T(t) \cap T_\Gamma\}.$
The class of relations definable by \tdtt{}s is called the class of \emph{top-down tree transductions}, conveniently denoted by \tdtt.

Let $\mathcal T$ be a \tdtt, and let $\rho = c_0\dots c_n$ be a run of $\mathcal T$ on an input tree $t \in T_{\SigmaI}$ that results in an output tree $s \in T_{\SigmaO}$.
The \emph{origin function $o$} of $\rho$ maps a node $u$ of the output tree to the node $v$ of the input tree that was read while producing $u$, formally $o:\dom{s} \rightarrow \dom{t}$ with $o(u) = v$ if there is some $i$, such that $c_i = (t,t_i,\varphi_i)$, $c_{i+1} = (t,t_{i+1},\varphi_{i+1})$ and $\varphi_i(u) = v$ and $\val{t_{i+1}}(u) = \val{s}(u)$, see \cref{fig:configuration}.
We define $R_o(\mathcal T)$ to be the set 
\[\{(t,s,o) \mid t \in T_{\SigmaI}, s\in T_{\SigmaO} \text{ and } \exists\thinspace \rho:(t,q_0,\varphi) \rightarrow_{\mathcal T}^* (t,s,\varphi') \text{ with origin } o\}.\]


\begin{example}\label{ex:transducer}
 Let $\Sigma$ be a ranked alphabet given by $\Sigma_2 = \{f\}$, $\Sigma_1 = \{g,h\}$, and $\Sigma_0 = \{a\}$.
 Consider the \tdtt\ $\mathcal T$ given by $(\{q\},\Sigma,\Sigma,\{q\},\Delta)$ with $\Delta$ $=$ $\{$ $q(a) \rightarrow a$, $q(g(x_1)) \rightarrow q(x_1)$, $q(h(x_1)) \rightarrow h(q(x_1))$, $q(f(x_1,x_2)) \rightarrow f(q(x_1),q(x_2))$  $\}$.
 For each $t \in T_{\Sigma}$ the transducer deletes all occurrences of $g$ in $t$.
 Consider $t := f(g(h(a)),a)$. A possible sequence of configurations of $\mathcal T$ on $t$ is $c_0 \rightarrow_{\mathcal T}^5 c_5$ such that $c_0 := (t,q,\varphi_0)$ with $\varphi_0(\varepsilon) = \varepsilon$, $c_1 := (t,f(q,q),\varphi_1)$ with $\varphi_1(1) = 1$, $\varphi_1(2) = 2$, $c_2 := (t,f(q,q),\varphi_2)$ with $\varphi_2(1) = 11$, $\varphi_2(2) = 2$, $c_3 := (t,f(q,a),\varphi_3)$ with $\varphi_3(1) = 11$, $c_4 := (t,f(h(q),a),\varphi_4)$ with $\varphi_4(11) = 111$, and $c_5 := (t,f(h(a),a),\varphi_5)$.
 A visualization of this sequence and resulting origin mapping is shown in \cref{fig:configuration}.
\end{example}

\begin{figure}[t!]
  \centering
  \begin{subfigure}{0.2\textwidth}
  \begin{tikzpicture}[scale=0.8,baseline=(current bounding box.base)]
    \tikzstyle{level 1}=[sibling distance=8mm]
    \tikzstyle{level 2}=[sibling distance=8mm]
    \tikzstyle{level 3}=[sibling distance=8mm]
    \tikzstyle{level 4}=[sibling distance=8mm]
    \begin{scope}[xshift = -1.25cm]
    \path[level distance=10mm] node (root)[draw = gray, circle, inner sep=0pt, minimum size = 4mm]{$f$}
      child{
        node(0){$g$}
    child{
      node(00){$h$}
        child{
          node(000){$a$}
        }
    }
      }
      child{
        node(1){$a$}
      }
    ;
    \end{scope}
    \path[level distance=10mm] node (roota)[draw = gray, circle, inner sep=0pt, minimum size = 4mm]{$q$}
      (roota) edge [bend left=15, ->, >=stealth, below, color = gray] node {\small{$\varphi_0$}} (root)
    ;
  \end{tikzpicture}
  \caption{$c_0$}
  \end{subfigure}
  \begin{subfigure}{0.2\textwidth}
  \begin{tikzpicture}[scale=0.8,baseline=(current bounding box.base)]
    \tikzstyle{level 1}=[sibling distance=8mm]
    \tikzstyle{level 2}=[sibling distance=8mm]
    \tikzstyle{level 3}=[sibling distance=8mm]
    \tikzstyle{level 4}=[sibling distance=8mm]
    \begin{scope}[xshift = -1.5cm]
    \path[level distance=9mm] node (root){$f$}
      child{
        node(0)[draw = gray, circle, inner sep=0pt, minimum size = 4mm]{$g$}
    child{
      node(00){$h$}
        child{
          node(000){$a$}
        }
    }
      }
      child{
        node(1)[draw = gray, circle, inner sep=0pt, minimum size = 4mm]{$a$}
      }
    ;
    \end{scope}
    \path[level distance=9mm] node (root){$f$}
      child{
        node(a)[draw = gray, circle, inner sep=0pt, minimum size = 4mm]{$q$}
      }
      child{
        node(b)[draw = gray, circle, inner sep=0pt, minimum size = 4mm]{$q$}
      }
      (a) edge [bend left = 60, ->, >=stealth, below, color = gray] node {\small{$\varphi_1$}} (0)
      (b) edge [bend left = 60, ->, >=stealth, below, color = gray] node {\small{$\varphi_1$}} (1)
    ;
  \end{tikzpicture}
  \caption{$c_1$}
  \end{subfigure}
  \begin{subfigure}{0.2\textwidth}
  \begin{tikzpicture}[scale=0.8,baseline=(current bounding box.base)]
    \tikzstyle{level 1}=[sibling distance=8mm]
    \tikzstyle{level 2}=[sibling distance=8mm]
    \tikzstyle{level 3}=[sibling distance=8mm]
    \tikzstyle{level 4}=[sibling distance=8mm]
    \begin{scope}[xshift = -1.5cm]
    \path[level distance=9mm] node (root){$f$}
      child{
        node(0){$g$}
    child{
      node(00)[draw = gray, circle, inner sep=0pt, minimum size = 4mm]{$h$}
        child{
          node(000){$a$}
        }
    }
      }
      child{
        node(1)[draw = gray, circle, inner sep=0pt, minimum size = 4mm]{$a$}
      }
    ;
    \end{scope}
    \path[level distance=9mm] node (root){$f$}
      child{
        node(a)[draw = gray, circle, inner sep=0pt, minimum size = 4mm]{$q$}
      }
      child{
        node(b)[draw = gray, circle, inner sep=0pt, minimum size = 4mm]{$q$}
      }
      (a) edge [bend left, ->, >=stealth, below, color = gray] node {\small{$\varphi_2$}} (00)
      (b) edge [bend left = 60, ->, >=stealth, below, color = gray] node {\small{$\varphi_2$}} (1)
    ;
  \end{tikzpicture}
  \caption{$c_2$}
  \end{subfigure}
  \vskip 0.5em
  \begin{subfigure}{0.2\textwidth}
  \begin{tikzpicture}[scale=0.8,baseline=(current bounding box.base)]
    \tikzstyle{level 1}=[sibling distance=8mm]
    \tikzstyle{level 2}=[sibling distance=8mm]
    \tikzstyle{level 3}=[sibling distance=8mm]
    \tikzstyle{level 4}=[sibling distance=8mm]
    \begin{scope}[xshift = -1.5cm]
    \path[level distance=9mm] node (root){$f$}
      child{
        node(0){$g$}
    child{
      node(00)[draw = gray, circle, inner sep=0pt, minimum size = 4mm]{$h$}
        child{
          node(000){$a$}
        }
    }
      }
      child{
        node(1){$a$}
      }
    ;
    \end{scope}
    \path[level distance=9mm] node (root){$f$}
      child{
        node(a)[draw = gray, circle, inner sep=0pt, minimum size = 4mm]{$q$}
      }
      child{
        node(b){$a$}
      }
      (a) edge [bend left, ->, >=stealth, below, color = gray] node {\small{$\varphi_3$}} (00)
    ;
  \end{tikzpicture}
  \caption{$c_3$}
  \end{subfigure}
  \begin{subfigure}{0.2\textwidth}
  \begin{tikzpicture}[scale=0.8,baseline=(current bounding box.base)]
    \tikzstyle{level 1}=[sibling distance=8mm]
    \tikzstyle{level 2}=[sibling distance=8mm]
    \tikzstyle{level 3}=[sibling distance=8mm]
    \tikzstyle{level 4}=[sibling distance=8mm]
    \begin{scope}[xshift = -1.5cm]
    \path[level distance=9mm] node (root){$f$}
      child{
        node(0){$g$}
    child{
      node(00){$h$}
        child{
          node(000)[draw = gray, circle, inner sep=0pt, minimum size = 4mm]{$a$}
        }
    }
      }
      child{
        node(1){$a$}
      }
    ;
    \end{scope}
    \path[level distance=9mm] node (root){$f$}
      child{
        node(a){$h$}
        child{
          node(aa)[draw = gray, circle, inner sep=0pt, minimum size = 4mm]{$q$}
        }
      }
      child{
        node(b){$a$}
      }
      (aa) edge [bend left=15, ->, >=stealth, below, color = gray] node {\small{$\varphi_4$}} (000)
    ;
  \end{tikzpicture}
  \caption{$c_4$}
  \end{subfigure}
  \begin{subfigure}{0.2\textwidth}
  \begin{tikzpicture}[scale=0.8,baseline=(current bounding box.base)]
    \tikzstyle{level 1}=[sibling distance=8mm]
    \tikzstyle{level 2}=[sibling distance=8mm]
    \tikzstyle{level 3}=[sibling distance=8mm]
    \tikzstyle{level 4}=[sibling distance=8mm]
    \begin{scope}[xshift = -1.5cm]
    \path[level distance=9mm] node (root){$f$}
      child{
        node(0){$g$}
    child{
      node(00){$h$}
        child{
          node(000){$a$}
        }
    }
      }
      child{
        node(1){$a$}
      }
    ;
    \end{scope}
    \path[level distance=9mm] node (root){$f$}
      child{
        node(a){$h$}
        child{
          node(aa){$a$}
        }
      }
      child{
        node(b){$a$}
      }
    ;
  \end{tikzpicture}
  \caption{$c_5$}
  \end{subfigure}
  \begin{subfigure}{0.2\textwidth}
    \begin{tikzpicture}[scale=0.8,baseline=(current bounding box.base)]
      \tikzstyle{level 1}=[sibling distance=8mm]
      \tikzstyle{level 2}=[sibling distance=8mm]
      \tikzstyle{level 3}=[sibling distance=8mm]
      \tikzstyle{level 4}=[sibling distance=8mm]
      \begin{scope}[xshift = -1.5cm]
      \path[level distance=9mm] node (root){$f$}
        child{
          node(0){$g$}
      child{
        node(00){$h$}
          child{
            node(000){$a$}
          }
      }
        }
        child{
          node(1){$a$}
        }
      ;
      \end{scope}
      \path[level distance=9mm] node (roota){$f$}
        child{
          node(a){$h$}
          child{
            node(aa){$a$}
          }
        }
        child{
          node(b){$a$}
        }

        (roota) edge [bend left = 0, ->, >=stealth, below, color = gray] node {} (root)
        (a) edge [bend left = 30, ->, >=stealth, below, color = gray] node {} (00)
        (aa) edge [bend left = 30, ->, >=stealth, below, color = gray] node {} (000)
        (b) edge [bend left = 60, ->, >=stealth, below, color = gray] node {} (1)

      ;
    \end{tikzpicture}
    \caption{origin mapping}
    \label{fig:intro-mapping}
    \end{subfigure}

    \caption{The configuration sequence $c_0$ to $c_5$ of $\mathcal T$ on $f(g(h(a)),a)$ and resulting origin mapping from Example \ref{ex:transducer}.}
    \label{fig:configuration}
    \end{figure}
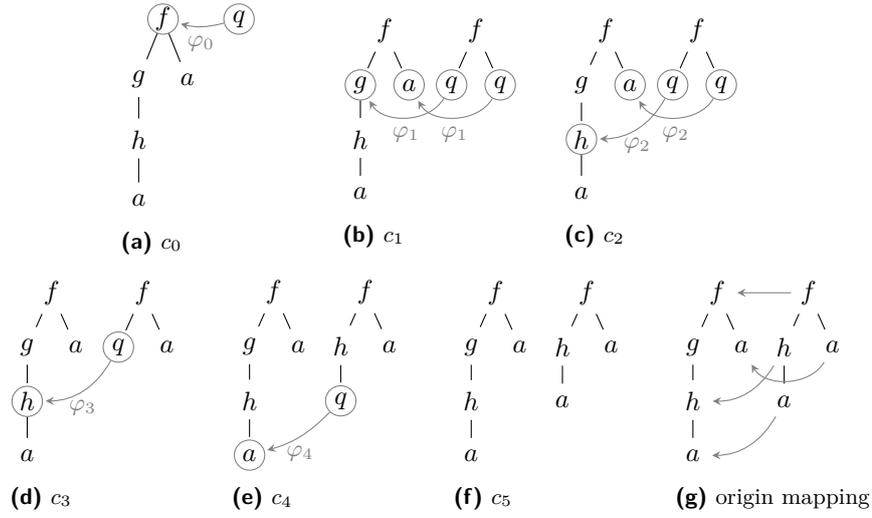

\begin{example}\label{ex:zwei}
Let $\Sigma,\Gamma$ be given by $\Sigma_2 = \{f\}$, $\Sigma_0 = \{a\}$, $\Gamma_1=\{h\}$, and $\Gamma_0 = \{b\}$.
Consider the \tdtt\ $\mathcal T$ given by $(\{q\},\Sigma,\Gamma,\{q\},\Delta)$ with $\Delta$ $=$ $\{$ $q(a) \rightarrow b$, $q(f(x_1,x_2)) \rightarrow h(q(x_1))$, $q(f(x_1,x_2)) \rightarrow h(q(x_2))$ $\}$.
Basically, when reading an $f$-labeled node, the \tdtt can non-deterministically decide whether to continue reading in left or the right subtree.
In \cref{fig:k-origin} two origin mappings $o\colon\dom{s} \to \dom{t}$ and $o'\colon\dom{s} \to \dom{t}$ are given that are result of runs of $\mathcal T$ on $t = f(f(f(a,a),a),a),f(f(a,a),a),a))$ with final output $s = h(h(h(b)))$.
\end{example}

 
\begin{figure}[t!]
    \begin{center}

      \begin{subfigure}{0.4\textwidth}
        \begin{tikzpicture}[scale=0.8,baseline=(current bounding box.base)]
          \tikzstyle{level 1}=[sibling distance=16mm]
          \tikzstyle{level 2}=[sibling distance=8mm]
          \tikzstyle{level 3}=[sibling distance=8mm]
          \tikzstyle{level 4}=[sibling distance=8mm]
          \begin{scope}[xshift = -2.5cm]
          \path[level distance=9mm] node (root){$f$}
          child{
              node(0){$f$}
            child{
              node(00){$f$}
                child{
                node(000){$a$}
                }
                child{node {$a$}}
            }
          child{node {$a$}}
          }
          child{
            node(1){$f$}
          child{
            node(10){$f$}
              child{
              node(100){$a$}
              }
              child{node {$a$}}
          }
        child{node {$a$}}
        }
          ;
          \end{scope}
          \path[level distance=9mm] node (roota){$h$}
            child{
              node(a){$h$}
              child{
                node(aa){$h$}
                 child{
                   node(aaa) {$b$}
                 }
              }
            }

  (roota) edge [bend right = 20, ->, >=stealth, below, color = gray] node {} (root)
  (a) edge [bend right = 20, ->, >=stealth, below, color = gray] node {} (0)
  (aa) edge [bend right = 20, ->, >=stealth, below, color = gray] node {} (00)
  (aaa) edge [bend right = 20, ->, >=stealth, below, color = gray] node {} (000)
    
          ;
        \end{tikzpicture}
        \caption{$o\colon\dom{s} \to \dom{t}$}
        \end{subfigure}
        \begin{subfigure}{0.4\textwidth}
          \begin{tikzpicture}[scale=0.8,baseline=(current bounding box.base)]
            \tikzstyle{level 1}=[sibling distance=16mm]
            \tikzstyle{level 2}=[sibling distance=8mm]
            \tikzstyle{level 3}=[sibling distance=8mm]
            \tikzstyle{level 4}=[sibling distance=8mm]
            \begin{scope}[xshift = -2.5cm]
            \path[level distance=9mm] node (root){$f$}
            child{
                node(0){$f$}
              child{
                node(00){$f$}
                  child{
                  node(000){$a$}
                  }
                  child{node {$a$}}
              }
            child{node {$a$}}
            }
            child{
              node(1){$f$}
            child{
              node(10){$f$}
                child{
                node(100){$a$}
                }
                child{node {$a$}}
            }
          child{node {$a$}}
          }
            ;
            \end{scope}
            \path[level distance=9mm] node (roota){$h$}
              child{
                node(a){$h$}
                child{
                  node(aa){$h$}
                   child{
                     node(aaa) {$b$}
                   }
                }
              }

    (roota) edge [bend right = 20, ->, >=stealth, below, color = gray] node {} (root)
    (a) edge [bend right = 20, ->, >=stealth, below, color = gray] node {} (1)
    (aa) edge [bend right = 20, ->, >=stealth, below, color = gray] node {} (10)
    (aaa) edge [bend right = 20, ->, >=stealth, below, color = gray] node {} (100)
      
            ;
          \end{tikzpicture}
          \caption{$o'\colon\dom{s} \to \dom{t}$}
          \end{subfigure}

    \end{center}
     \caption{Origin mappings $o,o'$. We have that $\mathit{dist}(o(111),o'(111))$, that is, the distance of the origins of the leaf node, is the length of the shortest path from node $111$ to node $211$ which is $6$.
     }
     \label{fig:k-origin}
    \end{figure}
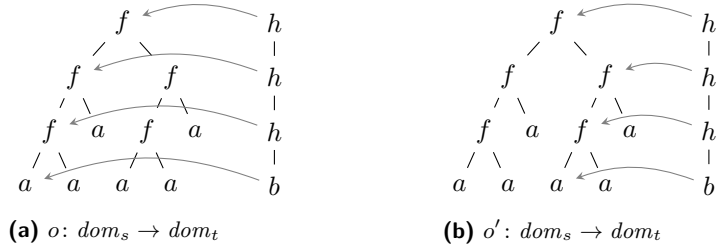

In this work, we focus on decision problems for transducers with origin semantics.
To begin with, we introduce some notations and state relevant known results in this context.

\mysubparagraph{Shorthand notations.}

Let $\mathcal C$ denote a class of transducers with origin semantics, e.g., \tdtt or \dtdtt.
Given a class $\mathcal C$ and $\mathcal T_1,\mathcal T_2 \in \mathcal C$, if $R(\mathcal T_1) \subseteq R(\mathcal T_2)$ (resp.\ $R_o(\mathcal T_1) \subseteq R_o(\mathcal T_2)$), we write $\mathcal T_1 \subseteq \mathcal T_2$ (resp.\ $\mathcal T_1 \subseteq_o \mathcal T_2$).
Furthermore, if $R(\mathcal T_1) = R(\mathcal T_2)$ (resp.\ $R_o(\mathcal T_1) = R_o(\mathcal T_2)$), we write $\mathcal T_1 = \mathcal T_2$ (resp.\ $\mathcal T_1 =_o \mathcal T_2$).
Given classes $\mathcal C_1,\mathcal C_2$, $\mathcal T_1 \in \mathcal C_1$, and $\mathcal T_2 \in \mathcal C_2$, if $\mathcal T_1$ defines a function $f$ that is a uniformization of $R(\mathcal T_2)$, we say $\mathcal T_1$ uniformizes $\mathcal T_2$, if additionally $\mathcal T_1 \subseteq_o \mathcal T_2$, we say $\mathcal T_1$ origin uniformizes $\mathcal T_2$.

\mysubparagraph{Decision problems.}

The \emph{inclusion} resp.\ \emph{origin inclusion problem} for a class $\mathcal C$ asks, given $\mathcal T_1, \mathcal T_2 \in \mathcal C$, whether $\mathcal T_1 \subseteq \mathcal T_2$ resp.\ $\mathcal T_1 \subseteq_o \mathcal T_2$.
The \emph{equivalence} resp.\ \emph{origin equivalence problem} for a class $\mathcal C$ asks, given $\mathcal T_1, \mathcal T_2 \in \mathcal C$, whether $\mathcal T_1 = \mathcal T_2$ resp.\ $\mathcal T_1 =_o \mathcal T_2$.
Lastly, the \emph{uniformization} resp.\ \emph{origin uniformization problem} for classes $\mathcal C_1,\mathcal C_2$ asks, given $\mathcal T_2 \in \mathcal C_2$, whether there exists $\mathcal T_1 \in \mathcal C_1$ such that $\mathcal T_1$ uniformizes (resp.\ origin uniformizes) $\mathcal T_2$.

As mentioned in the introduction, generally, a transduction can be defined by several transducers behaving very differently, making many problems intractable.
Adding origin semantics to transducers, i.e., seeing the transducer behavior as part of the transduction, allows to recover decidability.
The following is known for the class \ntdtt.

\begin{restatable}[\cite{esik1980decidability}]{theorem}{thmundec}\label{thm:undec}
  Inclusion and equivalence are undecidable for the class \ntdtt.
\end{restatable}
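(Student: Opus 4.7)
The plan is to reduce from the equivalence and inclusion problems for non-deterministic finite-state word transducers (\nfst{}s), which are known to be undecidable by Griffiths~\cite{DBLP:journals/jacm/Griffiths68} and Fischer--Rosenberg~\cite{FischerR68}. The underlying observation is that \nfst{}s are essentially \ntdtt{}s restricted to unary trees, so the word-transducer undecidability should transfer directly.

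More concretely, I would encode words as unary trees: a word $a_1 a_2 \cdots a_n$ is represented by $a_1(a_2(\cdots a_n(\bot) \cdots))$, where each $a_i$ is treated as a rank-$1$ symbol and $\bot$ is a fresh rank-$0$ end marker. Given an \nfst $T$ with states $Q$, initial state $q_0$, final states $F$, and transitions of the form $(p, a, w, q)$ meaning ``from state $p$, read $a$, output $w$, move to $q$,'' I would construct a \ntdtt $\mathcal{T}_T$ with the same state set and initial state, a rule $p(a(x_1)) \rightarrow w \cdot q(x_1)$ for each such transition (interpreting the word $w$ as a unary output context with a single hole at the bottom), and a rule $q(\bot) \rightarrow \bot$ for every $q \in F$. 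Under the unary tree encoding of both input and output words, $R(\mathcal{T}_T)$ is exactly the graph of $T$.

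Applying this construction to two \nfst{}s $T_1, T_2$ yields \ntdtt{}s $\mathcal{T}_1, \mathcal{T}_2$ with $\mathcal{T}_1 \subseteq \mathcal{T}_2$ iff $T_1 \subseteq T_2$, and similarly for equivalence. Undecidability of both problems for \nfst{}s then immediately transfers, giving the theorem for \ntdtt{}. The main point to verify is that the encoding is faithful, i.e., that runs of $\mathcal{T}_T$ on encoded unary inputs correspond bijectively to accepting computations of $T$; this is essentially immediate because all constructed rules operate on unary input nodes and produce only unary output contexts, so no ``spurious'' branching behavior of \ntdtt{}s can arise. The ``hard'' part is therefore only conceptual, namely the realization that the richer branching and duplication features of top-down tree transducers are not needed to inherit undecidability --- unary trees already suffice. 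Alternative reductions directly from Post's Correspondence Problem, as in Ésik~\cite{esik1980decidability}, are possible but longer and yield no stronger conclusion.
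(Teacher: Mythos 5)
Your reduction is correct, but note that the paper does not prove this statement at all: it is imported verbatim from Ésik \cite{esik1980decidability}, so your argument is a genuinely different (and self-contained) route rather than a reconstruction of the paper's. The key point that makes your reduction sound is exactly the one you flag implicitly: \tdtt{} rules consume exactly one input symbol per step (the paper's ``real-time'' convention), so you need undecidability for $\varepsilon$-input-free word transducers, which is precisely what Griffiths proves for $\Lambda$-free NGSMs \cite{DBLP:journals/jacm/Griffiths68}; with that, the monadic-tree encoding (end marker $\bot$, rules $p(a(x_1))\to w\cdot q(x_1)$, and $q(\bot)\to\bot$ only for accepting $q$, so that non-accepting runs get stuck with a state-labelled leaf and are discarded by the definition $R(\mathcal T)=\{(t,t')\mid t'\in\mathcal T(t)\cap T_\Gamma\}$) is faithful, and undecidability of inclusion and equivalence transfers, with inclusion following from equivalence in any case. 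The paper itself endorses the identification of real-time \nfst{}s with \tdtt{}s on monadic trees (in the discussion around Proposition~\ref{prop:origin}), so nothing further is needed. What the two approaches buy: citing Ésik gives the result for tree transducers as originally established, while your embedding makes explicit that the branching and copying power of \tdtt{}s plays no role in the undecidability — it is inherited already on monadic trees — which is arguably more informative in the context of this paper, whose positive results hinge on taming exactly the tree-specific behaviour. Only cosmetic caveats: make sure $\bot$ (or some rank-$0$ symbol) is added to the output alphabet, and be careful attributing a specific proof technique (PCP) to Ésik unless you have checked it.
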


\begin{restatable}[\cite{DBLP:journals/iandc/FiliotMRT18}]{theorem}{thmorigindec}\label{thm:origin-dec}
  \mbox{Origin inclusion and origin equivalence are decidable for the class \ntdtt.}
\end{restatable}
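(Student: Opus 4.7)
The plan is to recognize, for each $\mathcal T \in \ntdtt$, the set $R_o(\mathcal T)$ as a regular tree language under a suitable encoding of origin graphs, and then reduce origin inclusion to inclusion of non-deterministic tree automata (which is decidable via complementation and emptiness); origin equivalence is handled by two inclusion checks. The crucial preliminary observation is that for every $(t,s,o) \in R_o(\mathcal T)$, the origin map is ancestor-preserving, i.e., $o(\varepsilon)=\varepsilon$ and $u \sqsubseteq u'$ in $\dom{s}$ implies $o(u) \sqsubseteq o(u')$ in $\dom{t}$. This follows by induction on the length of the run: a rule applied at input node $v$ produces output whose nodes all have origin $v$, and subsequent state calls descend to children of $v$.

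Using this property, I would encode an origin graph as a tree over a finite product alphabet by taking the output tree $s$ as the base structure and annotating each output node $u$ with $\val{s}(u)$, $\val{t}(o(u))$, and a step marker indicating whether $o(u) = o(\mathit{parent}(u))$ or $o(u)$ is the $i$-th child of $o(\mathit{parent}(u))$; subtrees of $t$ not visited by the run are attached as auxiliary input-labeled decorations at designated positions so that $(t,s,o)$ is recoverable from the encoding. Given $\mathcal T$, I would then build a non-deterministic bottom-up tree automaton $\mathcal A_{\mathcal T}$ whose states consist of a $\mathcal T$-state together with local bookkeeping. At each output node $u$, $\mathcal A_{\mathcal T}$ guesses the rule applied at $o(u)$ by its current state and verifies that the rule's right-hand side is consistent with the children of $u$: internal nodes of the rule propagate the same state on children with the same origin, state-call positions launch new states at the indicated input child, and leaf symbols of the right-hand side match $\val{s}(u)$. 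Acceptance at the root of $s$ enforces the initial state $q_0$. By construction, $L(\mathcal A_{\mathcal T})$ coincides with the encoding of $R_o(\mathcal T)$.

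With $\mathcal A_{\mathcal T_1}$ and $\mathcal A_{\mathcal T_2}$ in hand, origin inclusion reduces to $L(\mathcal A_{\mathcal T_1}) \subseteq L(\mathcal A_{\mathcal T_2})$ and origin equivalence to equality of these languages, both decidable. The main obstacle is designing the encoding so that a single finite ranked alphabet suffices despite two sources of unboundedness: (i) unvisited input subtrees that can be arbitrary, and (ii) for non-linear transducers, a single input node being visited by unboundedly many state calls in one run. Encoding at the output side rather than the input side circumvents (ii) neatly, since each output node is generated by exactly one rule application and thus carries a bounded local annotation; issue (i) is handled by explicitly encoding unvisited input subtrees as passive decorations that $\mathcal A_{\mathcal T}$ merely skims through, without them interfering with the simulation of the transducer's run on the visited skeleton.
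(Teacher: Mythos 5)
The paper does not actually prove this statement; it is imported verbatim from \cite{DBLP:journals/iandc/FiliotMRT18}. Judged on its own terms, your proposal has a genuine gap at its central claim, namely that the output-based encoding of $R_o(\mathcal T)$ is recognized by a tree automaton. For non-linear \tdtt{s} this is false. Take $\Sigma_1=\{g,h\}$, $\Sigma_0=\{a\}$, output symbols $f$ (binary), $g,h,a$, and the rules $q_0(\sigma(x_1)) \to f(q(x_1),q(x_1))$ for $\sigma\in\{g,h\}$ together with $q(\sigma(x_1))\to\sigma(q(x_1))$ and $q(a)\to a$. Every triple in $R_o(\mathcal T)$ has output $f(t',t')$ in which the two copies carry identical input-label annotations, so under your encoding the language is essentially $\{f(c,c)\mid c\in C\}$ for an infinite set $C$ --- the standard example of a non-regular tree language, since a tree automaton cannot compare two unbounded sibling subtrees. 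Hence no \ntdta\ $\mathcal A_{\mathcal T}$ with $L(\mathcal A_{\mathcal T})$ equal to the encoding of $R_o(\mathcal T)$ exists. Weakening $\mathcal A_{\mathcal T}$ to accept all \emph{locally} consistent encodings does not rescue the reduction: taking $\mathcal T_2$ to be the origin-equivalent variant that splits $q$ into $q_g,q_h$ according to the root label of the copied subtree, $\mathcal A_{\mathcal T_1}$ accepts the inconsistent encoding $f(c_g,c_h)$ while $\mathcal A_{\mathcal T_2}$ rejects it, so $L(\mathcal A_{\mathcal T_1})\subseteq L(\mathcal A_{\mathcal T_2})$ fails even though $R_o(\mathcal T_1)\subseteq R_o(\mathcal T_2)$ holds. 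The sentence ``encoding at the output side circumvents (ii) neatly'' is therefore exactly where the argument breaks; copying is the genuinely hard case, and the dual input-side encoding fails symmetrically because a single input node can be the origin of unboundedly many output nodes.

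Two further, more local problems. First, your step marker (``$o(u)=o(\mathit{parent}(u))$ or $o(u)$ is the $i$-th child of it'') is too coarse: non-producing rules such as $q(g(x_1))\to q(x_1)$ make the origin of an output node an arbitrarily deep descendant of its parent's origin (and already $o(\varepsilon)=\varepsilon$ can fail), so the skipped input path must itself appear in the encoding rather than in a bounded node label. Second, with that fix your construction is essentially correct for \emph{linear} \tdtt{s}, and is close in spirit to the regularity statement this paper proves for its own purposes (\cref{lemma:linear} in the appendix, which likewise treats the linear case first); but lifting it to transducers with copy is precisely what requires the extra machinery --- bounding the number of distinct continuations that must be tracked per input node and a merging argument --- that the appendix proof of \cref{lemma:regular}, and the cited proof in \cite{DBLP:journals/iandc/FiliotMRT18}, spend most of their effort on.
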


Turning to uniformization problems, it is known that every \tdtt is uniformizable by a \dtdtt with regular lookahead (a \dtdttr), that is, the transducer can check membership of the subtrees of a node in regular tree-languages before processing the node.

\begin{restatable}[\cite{DBLP:journals/mst/Engelfriet77}]{theorem}{thmalways}\label{thm:always}
  Every \ntdtt has a \dtdttr-uniformization.
\end{restatable}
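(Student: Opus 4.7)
\emph{Proof plan.} The plan is to use the regular lookahead to resolve nondeterminism by always picking a canonical rule that is guaranteed to lead to a successful completion. Given an \ntdtt $\mathcal T = (Q,\Sigma,\Gamma,q_0,\Delta)$, the first step is to associate, for each state $q \in Q$, the set
\[
  L_q = \{\, t \in T_\Sigma \mid \exists\, s \in T_\Gamma\ \text{and a run}\ (t,q,\varphi_0) \to_{\mathcal T}^* (t,s,\varphi)\,\}
\]
of input trees on which $\mathcal T$, started in state $q$, admits at least one successful run producing an output in $T_\Gamma$. I would show that the family $\{L_q\}_{q\in Q}$ is a tuple of regular tree languages, by giving a bottom-up tree automaton whose states are subsets of $Q$: on a leaf $a \in \Sigma_0$ it computes $\{q \mid q(a) \to w \in \Delta\ \text{with}\ w \in T_\Gamma\}$, and on an internal node labelled $f \in \Sigma_i$ with children assigned sets $S_1,\dots,S_i \subseteq Q$ it collects those $q$ for which some rule $q(f(x_1,\dots,x_i)) \to w[q_1(x_{j_1}),\dots,q_n(x_{j_n})]$ in $\Delta$ satisfies $q_k \in S_{j_k}$ for all $k$.

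Next, I would fix an arbitrary total order on $\Delta$ and use the languages $L_q$ as the regular lookahead relation. The deterministic transducer $\mathcal T'$ with lookahead will have the same states as $\mathcal T$; on input $q(f(x_1,\dots,x_i))$ with children-subtrees $t_1,\dots,t_i$, it inspects, via the lookahead, the tuple $(S_1,\dots,S_i)$ where $S_j = \{q' \mid t_j \in L_{q'}\}$, and selects the $\Delta$-smallest rule $q(f(x_1,\dots,x_i)) \to w[q_1(x_{j_1}),\dots,q_n(x_{j_n})]$ such that $q_k \in S_{j_k}$ for every $k$. If no such rule exists, $\mathcal T'$ has no transition, which happens precisely when $t \notin L_{q_0}$, i.e., when the input is not in $\mathrm{dom}(R(\mathcal T))$.

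Finally, I would verify by induction on the height of the input subtree that whenever $\mathcal T'$ applies a rule in state $q$ on some $t'$, the lookahead condition guarantees the existence of a successful continuation from each child; hence every run of $\mathcal T'$ on an input $t \in \mathrm{dom}(R(\mathcal T))$ terminates, produces a tree in $T_\Gamma$, and corresponds to a valid run of $\mathcal T$. Uniqueness of the output follows from determinism: the rule chosen is uniquely determined by $q$, the symbol read, and the lookahead tuple. Hence $\mathcal T'$ defines a function $f \colon \mathrm{dom}(R(\mathcal T)) \to T_\Gamma$ with $(t,f(t)) \in R(\mathcal T)$ for every $t$, i.e., a \dtdttr-uniformization of $\mathcal T$.

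The main obstacle is really just step one: formalising $L_q$ and arguing regularity uniformly in $q$ (bottom-up determinisation of the subset automaton described above). Once the lookahead languages are in hand, the construction of $\mathcal T'$ and its correctness are routine inductive arguments on tree height.
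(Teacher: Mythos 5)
Your proof is correct and is essentially the standard argument behind the cited result of Engelfriet: the paper itself gives no proof (it only references \cite{DBLP:journals/mst/Engelfriet77}), and the canonical construction there is exactly what you describe — take the regular ``domain'' languages $L_q$ computed by a bottom-up subset automaton as the lookahead, and use them to deterministically select a least applicable rule whose continuation states are all viable on the corresponding subtrees. The only cosmetic point is that ``$\mathcal T'$ has no transition precisely when $t\notin L_{q_0}$'' should be read as a statement about the root: at non-root positions your inductive invariant guarantees a rule always exists once $t\in\mathrm{dom}(R(\mathcal T))$.
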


However, when requiring that the input should be transformed on-the-fly (without regular lookahead), the uniformization problem becomes undecidable.
In \cite{CarayolL14}, it was shown that it is undecidable whether a one-way (non-deterministic) word transducer has a uniformization by a sequential transducer (that is, basically, a one-way deterministic transducer).
So, we get undecidability in the tree setting for free (as stated in \cref{thm:undec-unif}).
This problem has not been investigated with origin semantics so far.
We show decidability (also for more relaxed versions), see \cref{thm:korigin}.

\begin{restatable}{theorem}{thmundecunif}\label{thm:undec-unif}
  \dtdtt-uniformization is undecidable for the class \ntdtt.
\end{restatable}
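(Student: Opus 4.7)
The plan is to lift the undecidability of sequential uniformization for one-way nondeterministic word transducers, shown by Carayol and Löding in \cite{CarayolL14}, to the tree setting via the standard encoding of words as monadic trees. Under this encoding, each input letter becomes a rank-$1$ symbol and a fresh rank-$0$ end marker $\bot$ is appended as the unique leaf; likewise on the output side. A one-way nondeterministic word transducer $T$ then translates into a \ntdtt $\hat T$ on monadic trees whose rules mirror those of $T$: a transition $q \xrightarrow{a/v} q'$ becomes a rule $q(a(x_1)) \to v[q'(x_1)]$, where $v$ is read as a $1$-context over the output alphabet, and terminating transitions become $q(\bot) \to v[\bot]$. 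Sequential word transducers correspond, under the same encoding, exactly to \dtdtt{}s whose reachable rules on monadic inputs have a monadic right-hand side.

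The key reduction claim is then that $T$ admits a sequential uniformization if and only if $\hat T$ admits a \dtdtt-uniformization. The "only if" direction is immediate by applying the encoding to a sequential uniformization of $T$. For the "if" direction, suppose $U$ is a \dtdtt uniformizing $\hat T$. Fix any input $t \in \mathrm{dom}(R(\hat T))$; since $t$ is monadic and $U$ is a uniformization, the unique final transformed output $U(t)$ must lie in $R(\hat T)$ and is therefore itself a monadic tree. Hence every rule of $U$ that is reachable from an initial configuration on some input in $\mathrm{dom}(R(\hat T))$ must have a right-hand side of the form $w[q'(x_1)]$ with $w$ a monadic $1$-context over the output alphabet, or of the form $v[\bot]$ at the end marker. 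Restricting $U$ to its reachable rules on monadic inputs thus yields the encoding of a sequential word transducer uniformizing $T$, completing the reduction and giving undecidability.

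The main obstacle is justifying that $U$ cannot exploit tree-specific features, namely copying or dropping of subtrees in right-hand sides, to gain uniformization power beyond what a sequential word transducer has. A rule of $U$ whose right-hand side has the shape $w[q_1(x_1), q_2(x_1)]$ with $w$ a genuine $2$-context, or more generally uses the variable $x_1$ multiple times with a branching $w$, would on firing produce an output tree with a branching node that is unreachable from by any output of $\hat T$, contradicting $U$ being a uniformization; similarly, a rule that drops $x_1$ and yields a state-free right-hand side terminates the subrun early, which conflicts with the unique monadic output of $\hat T$ on any input that strictly extends the current prefix. These local observations together with the determinism of $U$ force the desired monadic normal form for all reachable rules, and the rest of the reduction is routine bookkeeping.
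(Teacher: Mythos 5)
Your proposal is correct and takes essentially the same approach as the paper: the paper proves this theorem only by remarking that the undecidability of sequential uniformization for one-way nondeterministic word transducers \cite{CarayolL14} transfers ``for free'' because words are monadic trees, which is precisely the reduction you carry out in detail (and your observation that produced output can never be retracted, so no branching symbol can ever appear in a reachable run on a domain input, is the right way to rule out tree-specific behaviour). One small inaccuracy: a reachable rule with a state-free right-hand side does not actually \emph{contradict} uniformization --- the \dtdtt may legitimately commit to a constant output on all extensions of the current input prefix --- but such early termination is harmlessly simulated by a sequential transducer that emits $\varepsilon$ on the remaining input, so the reduction goes through unchanged.
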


Since the origin semantics is rather rigid, in the next section, we introduce two similarity measures between transducers which are based on their behavior and re-investigate the introduced decision problems for transducers with `similar' behavior.

\section{Similarity measures for transducers}
\label{sec:similarity}

An idea that naturally comes to mind is to say that two transducers behave similarly if for two computations over the same input that yield the same output their respective origin mappings are `similar'.

The other idea is to say that two computations are similar if their output delay is small, roughly meaning that for the same prefix (for an adequate notion of prefix for trees) of the input the so-far produced output is of similar size.
Decision problems using this measure have already been investigated for (one-way) word transducers \cite{DBLP:conf/icalp/FiliotJLW16}, we lift the measure to top-down tree transducers.

\mysubparagraph{Origin distance.}

Given a tree $t$, let the distance between two nodes $u,v \in \dom{t}$, written $\mathit{dist}(u,v)$, be the shortest path between $u$ and $v$ (ignoring the edge directions), an example is given in \cref{fig:k-origin}.

Given $\mathcal T,\mathcal T_1,\mathcal T_2 \in \mathcal C$, where $\mathcal C$ is a class of transducers with origin semantics. 
We say $(t,s,o)$ is \emph{$k$-origin included} in $R_o(\mathcal T)$, written $(t,s,o) \in_k R_o(\mathcal T)$, if there is $(t,s,o') \in R_o(\mathcal T)$ such that $\mathit{dist}(o(i),o'(i)) \leq k$ for all $i \in \dom{s}$.
We say $\mathcal T_1$ is \emph{$k$-origin included} in $\mathcal T_2$, written $\mathcal T_1 \subseteq_k \mathcal T_2$, if $(s,t,o) \in_k R_o(\mathcal T_2)$ for all $(s,t,o) \in R_o(\mathcal T_1)$.
We say $\mathcal T_1$ and $\mathcal T_2$ are \emph{$k$-origin equivalent}, written $\mathcal T_1 =_k \mathcal T_2$, if $\mathcal T_1 \subseteq_k \mathcal T_2$ and $\mathcal T_2 \subseteq_k \mathcal T_1$.
We say $\mathcal T_1$ \emph{$k$-origin uniformizes} $\mathcal T_2$ if $\mathcal T_1 \subseteq_k \mathcal T_2$ and $\mathcal T_1$ uniformizes $\mathcal T_2$.
The $k$-origin decision problems are defined as expected.

We need some additional notations, before we can introduce the concept of delay.

\mysubparagraph{Partial and prefix trees.}

Let $N_{\SigmaI}$ be the set of all trees over $\SigmaI$ which can have symbols from $\SigmaI$, that is, symbols with rank $\geq 0$, at their leaves.
The set $N_\Sigma$ is the set of all \emph{partial trees} over $\Sigma$.
Note that $N_\Sigma$ includes $T_\Sigma$.  
We say a tree $t' \in N_\Sigma$ is a \emph{prefix tree} of a tree $t \in N_\Sigma$, written $t' \sqsubseteq t$, if $\dom{t'} \subseteq \dom{t}$, and $\val{t'}(u) = \val{t}(u) \text{ for all } u\in\dom{t'}.$
Given $t_1,t_2 \in N_\Sigma$, its \emph{greatest common prefix}, written $t_1 \wedge t_2$, is the tree $t \in N_\Sigma$ such that $\dom{t}$ is the largest subset of $\dom{t_1} \cap \dom{t_2}$ such that $t \sqsubseteq t_1$ and $t \sqsubseteq t_2$.
Removing $t_1 \wedge t_2$ from $t_1$ and $t_2$ naturally yields a set of partial trees (we omit a formal definition) called \emph{difference trees}.
These notions are visualized in \cref{fig:k-delay}.
 
 
\mysubparagraph{Delay.}

Given words $w_1,w_2$, to compute their delay, we remove their greatest common prefix $w = w_1 \wedge w_2$, say $w_1 = wv_1$ and $w_2 = wv_2$, and their delay is the maximum of the length of their respective reminders, i.e., $\mathrm{max} \{|v_1|,|v_2|\}$.
We lift this to trees, given (partial) trees $t_1,t_2$, we remove their greatest common prefix $t_1 \wedge t_2$ from $t_1$ and $t_2$ which yields a set $S$ of partial trees, we define their delay as $\mathit{delay}(t_1,t_2) = \mathrm{max} \{ h(t)+1 \mid t \in S\}$.
An example is given in \cref{fig:k-delay}.
Note that for trees over unary and leaf symbols (a way to see words) the definitions for words and trees are equal.
Recall that the length of the word $a$ is one, but the height of the tree $a$ is zero.

In order to define a similarity measure between transducers using delay, we take two transducer runs on the same input and compute the delay between their produced outputs throughout their runs.
Although we have defined delay between words and trees, we only provide a formal definition for top-down tree transducers.
However, for word transducers, examples are given in \cref{ex:incomparable}, and a formal definition can be found in \cite{DBLP:conf/icalp/FiliotJLW16}. 

Now, let $\mathcal T_1$ and $\mathcal T_2$ be {\tdtt}s, and $\rho_1$ and $\rho_2$ be runs of $\mathcal T_1$ and $\mathcal T_2$, respectively, over the same input tree $t \in T_\Sigma$ such that $\rho_1\colon (t,q_0^{\mathcal T_1},\varphi_0) \rightarrow_{\mathcal T_1}^* (t,t_1,\varphi) \text{ with } t_1 \in T_\Gamma$, and $\rho_2\colon (t,q_0^{\mathcal T_2},\varphi_0) \rightarrow_{\mathcal T_2}^* (t,t_2,\varphi') \text{ with } t_2 \in T_\Gamma$.

Basically, we take a look at all configurations that occur in the runs and compute the delay between the output trees of compatible configurations where compatible means in both configurations the same prefix (level-wise, see below) of the input tree has been processed.

Let us be a bit more clear what we mean with compatible.
Note that when comparing two configuration sequences (i.e., runs) of word transducers the notion of `have processed the same input so far' is clear.
For tree transducers, in one configuration sequence, a left-hand subtree might be processed before the right-hand subtree, and in another configuration sequence vice versa.
Since these computation steps are done in a parallel fashion (just written down in an arbitrary order in the configuration sequence), we need to make sure to compare configurations where the subtrees have been processed equally far (we call this level-wise).
Also, a tree transducer might not even read the whole input tree, as, e.g., in \cref{ex:zwei}.
We also (implicitly) take care of this in our definition.

The result is the maximum of the delay between output trees of compatible configurations.
Given $t \in T_\Sigma$, let $\mathit{Prefs}_{\mathit{level}}(t)$ denote the set of all prefix trees of $t$ such that if a node at level $i$ is kept, then all other nodes at level $i$ are kept, i.e., for $t = f(h(a),h(a))$, $\mathit{Prefs}_{\mathit{level}}(t)$ contains $f(h,h)$, but not $f(h(a),h)$.
Given an intermediate configuration $(t,t'_i,\varphi')$ of the run $\rho_i$, we recall that $t'_i \in T_{\Gamma \cup Q_{\mathcal T_i}}$ meaning $t'_i$ contains states of $\mathcal T_i$ as leaves.
 Let $t'_i|\Gamma$ denote the partial tree obtained from $t_i$ by removing all non-$\Gamma$-labelled nodes.
 We define $\mathit{delay}(\rho_1,\rho_2)$ as
 \begin{align*}
  \hspace{-5pt}\max\{ \mathit{delay}(t'_1|\Gamma,t'_2|\Gamma) \mid \ & \text{there is } t' \in \mathit{Prefs}_{\mathit{level}}(t), \text{ there is } (t,t_1',\varphi') \text{ in } \rho_1 \text{ with } t_1' \in \mathcal T_1(t'),\\
  & \text{and, there is } (t,t_2',\varphi'') \text{ in } \rho_2 \text{ with } t_2' \in \mathcal T_2(t')\ \}.
 \end{align*}
 The conditions $t_1' \in \mathcal T_1(t')$ and $t_2' \in \mathcal T_2(t')$ are introduced to make sure that all input nodes that can be processed from $t'$ are processed in the selected configurations.

 
  
\begin{figure}[t!]
    \begin{center}

      \begin{subfigure}{0.3\textwidth}
        \begin{tikzpicture}[thick,baseline=(current bounding box.north)]
        \tikzstyle{level 1}=[sibling distance=16mm]
         \tikzstyle{level 2}=[sibling distance=10mm]
      
         \path[level distance=8mm] node[draw = black, circle, inner sep=0pt, minimum size = 4mm] (root){$g$}
          child{
            node[draw = black, circle, inner sep=0pt, minimum size = 4mm](0){$f$}
              child{
                node[draw = black, circle, inner sep=0pt, minimum size = 4mm](00){$a$}
              }
              child{
                node[draw = black, circle, inner sep=0pt, minimum size = 4mm](01){$f$}
                  child{
                    node(010){$b$}
                  }
                  child{
                    node(011){$h$}
                     child{
                      node(0110){$a$}
                     }
                  }
              }
          }
          child{
            node[draw = black, circle, inner sep=0pt, minimum size = 4mm](1){$f$}
          }
        ;
        
        \end{tikzpicture}
        \caption{partial tree $t_1$}
        \end{subfigure}
        \begin{subfigure}{0.3\textwidth}
          \begin{tikzpicture}[thick,baseline=(current bounding box.north)]
            \tikzstyle{level 1}=[sibling distance=16mm]
             \tikzstyle{level 2}=[sibling distance=10mm]

            \path[level distance=8mm] node[draw = black, circle, inner sep=0pt, minimum size = 4mm] (root){$g$}
              child{
                node[draw = black, circle, inner sep=0pt, minimum size = 4mm](0){$f$}
                  child{
                    node[draw = black, circle, inner sep=0pt, minimum size = 4mm](00){$a$}
                  }
                  child{
                    node[draw = black, circle, inner sep=0pt, minimum size = 4mm](01){$f$}
                  }
              }
              child{
                node[draw = black, circle, inner sep=0pt, minimum size = 4mm](1){$f$}
                child{
                  node(10){$a$}
                }
                child{
                 node(11){$b$}
                }
              }
            ;
            
            \end{tikzpicture}
          \caption{partial tree $t_2$}
          \end{subfigure}
        \begin{subfigure}{0.3\textwidth}
          \quad
        \begin{tikzpicture}[thick,baseline=(current bounding box.north)]
          \tikzstyle{level 1}=[sibling distance=16mm]
          \tikzstyle{level 2}=[sibling distance=10mm]
            \path[level distance=8mm] node (root){$b$}
            ;
            \begin{scope}[xshift=0.75cm]
            \path[level distance=10mm] node (root){$h$}
            child{
            node(0){$a$}
            }
            ;
            \end{scope}
            \begin{scope}[xshift=1.5cm]
            \path[level distance=10mm] node (root){$a$}
            ;
            \end{scope}
            \begin{scope}[xshift=2.25cm]
            \path[level distance=10mm] node (root){$b$}
            ;
            \end{scope}
          \end{tikzpicture}
        \caption{partial trees resulting from removing $t_1 \wedge t_2$ from $t_1$ and $t_2$}
        \end{subfigure}

    \end{center}
    \caption{The greatest common prefix of the partial trees $t_1$ and $t_2$, $t_1 \wedge t_2$, is marked with circles in $t_1$ and $t_2$.
    The delay between $t_1$ and $t_2$ is computed from their non-common parts as $\mathit{delay}(t_1,t_2) = \mathrm{max} \{ h(t) + 1 \mid t \in \{b,h(a),a\}\} = 2$. 
    }
    \label{fig:k-delay}
    \end{figure}
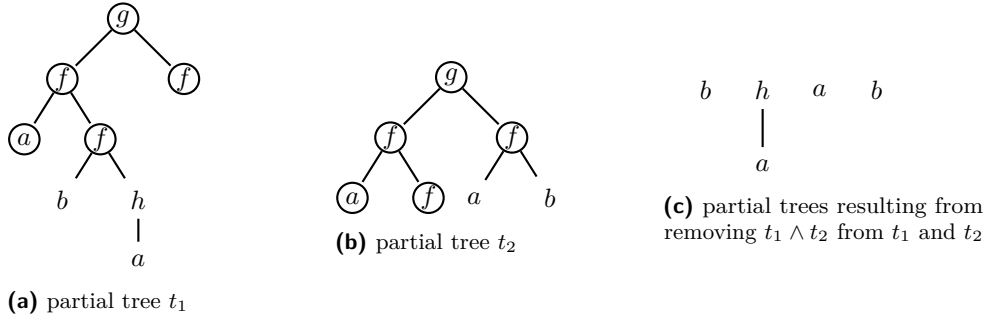

We introduce (shorthand) notations.
Let $\mathcal T_1,\mathcal T_2 \in \mathcal C$, where $\mathcal C$ is a class of transducers.
Given $(t,s) \in R(\mathcal T_1)$, we say $(t,s)$ is \emph{$k$-delay included} in $R(\mathcal T_2)$, written $(t,s) \in_{\mathbb D_k} R(\mathcal T_2)$, if there are runs $\rho$ and $\rho'$ of $\mathcal T_1$ and $\mathcal T_2$, respectively, with input $t$ and output $s$ such that $\mathit{delay}(\rho,\rho') \leq k$.
We say $\mathcal T_1$ is \emph{$k$-delay included} in $\mathcal T_2$, written $\mathcal T_1 \subseteq_{\mathbb D_k} \mathcal T_2$, if $(t,s) \in_{\mathbb D_k} R(\mathcal T_2)$ for all $(t,s) \in R(\mathcal T_1)$.
We say $\mathcal T_1$ and $\mathcal T_2$ are \emph{$k$-delay equivalent}, written $\mathcal T_1 =_{\mathbb D_k} \mathcal T_2$, if $\mathcal T_1 \subseteq_{\mathbb D_k} \mathcal T_2$ and $\mathcal T_2 \subseteq_{\mathbb D_k} \mathcal T_1$.
We say $\mathcal T_1$ \emph{$k$-delay uniformizes} $\mathcal T_2$ if $\mathcal T_1 \subseteq_{\mathbb D_k} \mathcal T_2$ and $\mathcal T_1$ uniformizes $\mathcal T_2$.
The $k$-delay decision problems are defined as expected.




In order to get a better understanding of the expressiveness and differences between the two similarity measures, we first explore their properties on word transductions since words are a particular case of trees (i.e., monadic trees).

\mysubparagraph{Word transducers.}
\label{sec:similarword}

We denote by \fst a \emph{finite state transducer}, the class of word transductions recognized by \fst{s} is the class of \emph{rational transductions}, conveniently also denoted by \fst.
We omit a formal definition of \fst{s}, because they are not considered outside of this section.
An \fst is \emph{sequential} if its transitions are input-deterministic, more formally, a \dtdtt over ranked alphabets with only unary and leaf symbols can be seen as an \fst.

The results below concern origin distance, the same results were proven for delay in \cite{DBLP:conf/icalp/FiliotJLW16}.

\begin{restatable}{proposition}{proporigin}\label{prop:origin}
  \begin{enumerate}
    \item There exist \fst{s} $\mathcal T_1$, $\mathcal T_2$ such that $\mathcal T_1 \subseteq \mathcal T_2$, but $\mathcal T_1 \not\subseteq_{k} \mathcal T_2$ for all $k \geq 0$.
    \item There exist \fst{s} $\mathcal T_1$, $\mathcal T_2$ such that $\mathcal T_1 = \mathcal T_2$, but $\mathcal T_1 \neq_{k} \mathcal T_2$ for all $k \geq 0$.
    \item There exists an \fst $\mathcal T$ such that $\mathcal T$ is sequentially uniformizable, but $\mathcal T$ is not $k$-origin sequentially uniformizable for all $k \geq 0$.
  \end{enumerate}
\end{restatable}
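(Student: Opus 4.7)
The plan is to establish all three items via concrete small FSTs, mirroring the delay analogues already known for word transducers. The driving observation is that an FST emits its output in non-decreasing order of origins, so two transducers can induce the same relation while producing the very same output letter with origin positions that lie arbitrarily far apart.

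For items (1) and (2), I would fix the domain $\#a^*\#$ and the single-letter output $c$. Let $\mathcal T_1$ be the sequential FST that emits $c$ on the first $\#$ and then consumes the rest silently, and let $\mathcal T_2$ be the sequential FST that emits nothing on the first $\#$ and the $a$'s and produces $c$ only on the closing $\#$. Both realize the relation $R = \{(\#a^n\#, c) \mid n \geq 0\}$, so $\mathcal T_1 = \mathcal T_2$ and a fortiori $\mathcal T_1 \subseteq \mathcal T_2$. On input $\#a^n\#$ the unique origin of the output letter is $1$ in $R_o(\mathcal T_1)$ and $n+2$ in $R_o(\mathcal T_2)$; the distance between the only candidate pair of origin mappings is therefore $n+1$, which exceeds any fixed $k$ for sufficiently large $n$. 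Hence $\mathcal T_1 \not\subseteq_k \mathcal T_2$, and consequently $\mathcal T_1 \ne_k \mathcal T_2$, for every $k$.

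For item (3) the idea is to force a non-deterministic transducer into a dichotomy that no single deterministic machine can straddle. Take the domain $a^*(b+c)$ and the output letter $c$, and let $\mathcal T$ be the union of two non-deterministic branches that are selected by the final input letter: the $b$-branch produces $c$ on the first input symbol and accepts inputs of the form $a^nb$, while the $c$-branch skips all $a$'s and produces $c$ on the final $c$, accepting inputs of the form $a^nc$. Then $R_o(\mathcal T)$ pairs $(a^nb,c)$ with the unique origin $o(1) = 1$ and $(a^nc,c)$ with the unique origin $o(1) = n+1$. The relation $R(\mathcal T)$ is sequentially uniformizable, for instance by the sequential FST that emits $c$ on the first input symbol and nothing afterwards.

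The main step is to argue that no sequential uniformizer $\mathcal T'$ is $k$-origin included in $\mathcal T$ for any $k$. Because $\mathcal T'$ is input-deterministic, its behaviour on the common prefix $a^n$ cannot depend on whether a $b$ or a $c$ will follow, so the position at which $\mathcal T'$ emits its single $c$ is fixed by the $a$-prefix alone. Either $\mathcal T'$ commits to emitting $c$ at some position $i \leq n$ during the $a$-prefix, in which case on $a^nc$ the distance $|i - (n+1)|$ grows without bound, or it postpones emission to the final symbol at position $n+1$, in which case on $a^nb$ the distance $|(n+1) - 1| = n$ grows without bound. The main technical obstacle I anticipate is handling $\varepsilon$-input transitions and the small-$n$ corner cases cleanly; both only shift origins by a bounded amount and so do not disturb the unboundedness conclusion.
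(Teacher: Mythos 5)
Your constructions for items (1) and (2) are, up to renaming of letters, exactly the paper's: two sequential transducers realizing the same constant single-letter function on $\#a^*\#$ (the paper uses $ab^*c$), one emitting the output letter on the first input letter and one on the last, so the unique origins drift apart linearly in the input length. For item (3) you take a genuinely different and simpler route, and it is correct. The paper's witness is a relation with unboundedly long outputs ($ab^nA\mapsto ab^n$ and $ab^nB\mapsto ab^m$ for $0\le m\le 2n-1$); its lower bound rests on the fact that the $B$-branch must emit two $b$'s per input letter once it starts producing, so any run of $\mathcal T$ on $ab^{6k}B$ producing $ab^{6k}$ places the origin of the first output $b$ in the second half of the input, far from where a sequential uniformizer---already pinned down by the origin constraint on $ab^{2k}A$ and by determinism on the common prefix---must have emitted it. Your witness instead has a constant one-letter output and encodes the tension directly in the origins of the two nondeterministic branches (origin $1$ on $a^nb$, origin $n+1$ on $a^nc$); the lower bound reduces to the observation that a deterministic machine either commits, once and for all, to an emission position $i$ along $a^\omega$ or never emits before the last letter, and each alternative is defeated by one of the two input families. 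This is more elementary (your relation is even a function, so uniformizability is trivial and only the origin constraint bites), whereas the paper's example also shows the phenomenon survives when the output content itself is nondeterministic. The one step to make explicit in a full write-up is the determinism argument: the runs of a sequential \fst{} on $a^n$ and $a^m$ with $n<m$ coincide on the common prefix, so the emission position cannot vary with $n$; with that stated, the $\varepsilon$-transition and small-$n$ caveats you flag are harmless (and moot for real-time \fst{s}, which is the class the paper works with).
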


\begin{proof}
First, consider the \fst{s} $\mathcal T_1, \mathcal T_2$ depicted in \cref{fig:example-a}.
Both recognize the same function $f\colon \{a,b,c\}^* \to \{a\}^*$ defined as $f(ab^*c) = a$.
Clearly, $\mathcal T_1 \subseteq \mathcal T_2$ and $\mathcal T_1 = \mathcal T_2$.
However, the origin distance between $\mathcal T_1$ and $\mathcal T_2$ is unbounded.
In $\mathcal T_1$, the origin of the single output letter $a$ is always the first input letter.
In contrast, in $\mathcal T_2$, the origin of the output letter $a$ is always the last input letter.
Secondly, consider the \fst{s} $\mathcal T, \mathcal T'$ depicted in \cref{fig:example-a}.
The recognized relation $\mathcal R(\mathcal T) \subseteq \{a,b,A,B\}^* \times \{a,b\}^*$ consists of $\{(ab^nA,ab^n) \mid n \in \mathbbm N \}$ and $\{(ab^nB,ab^m) \mid 0 \leq m \leq 2n-1, n \in \mathbbm N\}$.
The sequential transducer $\mathcal T'$ recognizes the function $f\colon \{a,b,A,B\}^* \to \{a,b\}^*$ defined by $f(ab^nX) = ab^n$ for $X \in \{A,B\}$ and all $n\in\mathbbm N$.
Clearly, $\mathcal T'$ is a sequential uniformization of $\mathcal T$.
However, no sequential uniformization with bounded origin distance exists, see \cref{app:body}.
\end{proof}

\begin{figure}[t!]

\begin{subfigure}{\textwidth}
  \begin{tikzpicture}[thick,node distance=2.5em]
    \tikzstyle{every state}+=[inner sep=4pt, minimum size=3pt];
    \node[state, initial, initial text=$\mathcal T_1$] (0) {};
    \node[state, right of= 0]                          (1) {};
    \node[state, accepting, right of= 1]               (2) {};

    \draw[->] (0) edge               node  {$a|a$} (1);
    \draw[->] (1) edge[loop above]   node  {$b|\varepsilon$} ();       
    \draw[->] (1) edge               node  {$c|\varepsilon$} (2);

    \begin{scope}[xshift=10em]
    \tikzstyle{every state}+=[inner sep=4pt, minimum size=3pt];
    \node[state, initial, initial text=$\mathcal T_2$] (0) {};
    \node[state, right of= 0]                          (1) {};
    \node[state, accepting, right of= 1]               (2) {};

    \draw[->] (0) edge               node  {$a|\varepsilon$} (1);
    \draw[->] (1) edge[loop above]   node  {$b|\varepsilon$} ();       
    \draw[->] (1) edge               node  {$c|a$} (2);
    \end{scope}

    \begin{scope}[xshift=20em]
    \tikzstyle{every state}+=[inner sep=4pt, minimum size=3pt];
    \node[state, initial, initial text=$\mathcal T$] (0) {};
    \node[state, right of= 0]                          (1) {};
    \node[state, accepting, right of= 1]               (2) {};

    \node[state, below of= 0] (00) {};
    \node[state, right of= 00]                          (01) {};
    \node[state, accepting, right of= 01]               (02) {};

    \draw[->] (0) edge               node  {$a|a$} (1);
    \draw[->] (1) edge[loop above]   node  {$b|b$} ();       
    \draw[->] (1) edge               node  {$A|\varepsilon$} (2);

    \draw[->] (0) edge               node[swap]  {$a|a$} (00);

    \draw[->] (00) edge               node  {$b|b$} (01);
    \draw[->] (00) edge[swap]         node  {$b|\varepsilon$} (01);
    \draw[->] (00) edge[loop below]   node  {$b|\varepsilon$} ();  
    \draw[->] (01) edge[loop below]   node  {$b|bb$} ();       
    \draw[->] (01) edge               node  {$B|\varepsilon$} (02);
    \end{scope}

    \begin{scope}[xshift=30em]
    \tikzstyle{every state}+=[inner sep=4pt, minimum size=3pt];
    \node[state, initial, initial text=$\mathcal T'$] (0) {};
    \node[state, right of= 0]                          (1) {};
    \node[state, accepting, right of= 1]               (2) {};

    \draw[->] (0) edge               node  {$a|a$} (1);
    \draw[->] (1) edge[loop above]   node  {$b|b$} ();       
    \draw[->] (1) edge               node  {$A|\varepsilon$} (2);
    \draw[->] (1) edge               node[swap]  {$B|\varepsilon$} (2);
    \end{scope}
  \end{tikzpicture}
\caption{We have $\mathcal T_1 = \mathcal T_2$, and $\mathcal T_1 =_{\mathbb D_1} \mathcal T_2$, but $\mathcal T_1 \neq_{k} \mathcal T_2$ for all $k \geq 0$; $\mathcal T'$ is a sequential uniformizer of $\mathcal T$.}
\label{fig:example-a}
\end{subfigure}

\begin{subfigure}{\textwidth}
  \begin{tikzpicture}[thick,node distance=2.5em]
    \tikzstyle{every state}+=[inner sep=4pt, minimum size=3pt];
    \node[state, initial, initial text=$\mathcal T_3$] (0) {};
    \node[state, right of= 0]                          (1) {};
    \node[state, accepting, right of= 1]               (2) {};

    \draw[->] (0) edge               node  {$a|a$} (1);
    \draw[->] (1) edge[loop above]   node  {$\varepsilon|c$} ();       
    \draw[->] (1) edge               node  {$b|\varepsilon$} (2);

    \begin{scope}[xshift=10em]
      \tikzstyle{every state}+=[inner sep=4pt, minimum size=3pt];
      \node[state, initial, initial text=$\mathcal T_4$] (0) {};
      \node[state, right of= 0]                          (1) {};
      \node[state, accepting, right of= 1]               (2) {};

      \draw[->] (0) edge               node  {$a|a$} (1);
      \draw[->] (2) edge[loop above]   node  {$\varepsilon|c$} ();       
      \draw[->] (1) edge               node  {$b|\varepsilon$} (2);
    \end{scope}
    \end{tikzpicture}
\caption{We have $\mathcal T_3 = \mathcal T_4$, and $\mathcal T_3 =_1 \mathcal T_4$, but $\mathcal T_3 \neq_{\mathbb D_k} \mathcal T_4$ for all $k \geq 0$.}
\label{fig:example-b}
\end{subfigure}

\caption{Comparing origin distance and delay for word transducers, see the proof of \cref{prop:origin} and \cref{ex:incomparable}.}
\label{fig:example}
\end{figure}

We give an example (depicted in \cref{fig:example} and described in detail in \cref{ex:incomparable}) that shows that the two notions are orthogonal to each other.
However, if we restrict the class \fst to \emph{real-time}\footnote{$\varepsilon$-transitions (as, e.g., the loop in $\mathcal T_3$ from \cref{fig:example}, not be confused with non-producing transitions) are standard for \fst, and non-standard for \tdtt in the literature. We consider `real-time' \tdtt by default. } \fst, that is, word transducers such that in every transition exactly one input symbol is read, the notion of delay is more powerful than origin distance, see below.
It is important to note that we have proven \cref{prop:origin} for real-time \fst{s} which are equivalent to \tdtt{s} on monadic trees, i.e., \cref{prop:origin} is true for the class \tdtt.

\begin{restatable}{proposition}{lemmaoriginincludedelay}\label{lemma:origin-include-delay}
Let $\mathcal T_1, \mathcal T_2$ be real-time \fst{s}, if $\mathcal T_1 \subseteq_{i} \mathcal T_2$ for some $i \geq 0$, then $\mathcal T_1 \subseteq_{\mathbb D_{j}} \mathcal T_2$ for some $j \geq 0$.
\end{restatable}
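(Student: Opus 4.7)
The plan is to leverage the real-time assumption (one input letter per transition) to argue that after the same number of transitions in both runs, the same input prefix has been processed, and then use the origin-closeness bound to translate a bound on the difference in input positions into a bound on the difference in produced output lengths.

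Concretely, fix $(t,s) \in R(\mathcal T_1)$ witnessed by a run $\rho_1$ with origin $o$. By $\mathcal T_1 \subseteq_i \mathcal T_2$ there is a run $\rho_2$ of $\mathcal T_2$ on $t$ producing $s$ with origin $o'$ such that $\lvert o(k) - o'(k) \rvert \leq i$ for every output position $k$ (for monadic trees, the tree-distance $\mathit{dist}(o(k),o'(k))$ coincides with the difference of depths). Let $M$ be the maximum number of output letters produced by any single transition of either $\mathcal T_1$ or $\mathcal T_2$; this is a constant of the machines. I will show that $\mathcal T_1 \subseteq_{\mathbb D_j} \mathcal T_2$ with $j := i \cdot M$.

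Let $\ell_a(n)$ denote the length of the output prefix of $s$ produced by $\rho_a$ after reading the first $n$ letters of $t$, for $a \in \{1,2\}$. Since both transducers are real-time, after $n$ transitions each has read exactly the length-$n$ input prefix, and the outputs produced are both prefixes of $s$ (and therefore one is a prefix of the other). The key inequality is: for $k = \ell_2(n)$ we have $o'(k) \leq n$, hence $o(k) \leq o'(k) + i \leq n+i$, so position $k$ of $s$ lies within the output that $\rho_1$ has produced after $n+i$ steps, giving $\ell_1(n+i) \geq \ell_2(n)$. Since each transition contributes at most $M$ output letters, $\ell_1(n+i) \leq \ell_1(n) + iM$, whence $\ell_2(n) - \ell_1(n) \leq iM$. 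A symmetric argument yields $\ell_1(n) - \ell_2(n) \leq iM$.

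The set $\mathit{Prefs}_{\mathit{level}}(t)$ for a monadic $t$ is exactly the set of prefixes of $t$, so the delay in the definition is taken over all levels $n$. At each such level the two outputs are nested prefixes of $s$, so their delay equals $\lvert \ell_1(n) - \ell_2(n) \rvert \leq iM = j$. Hence $\mathit{delay}(\rho_1, \rho_2) \leq j$, and this bound holds for every choice of $(t,s) \in R(\mathcal T_1)$, proving $\mathcal T_1 \subseteq_{\mathbb D_j} \mathcal T_2$.

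The only subtle point, and where I would be most careful, is the translation between the word/monadic-tree viewpoints: verifying that the distance between origin positions in a monadic tree agrees with the absolute difference of their depths, and that for two nested prefixes of $s$ the tree-delay $\max\{h(t)+1\mid t\in S\}$ reduces to their length difference (using $h(a)=0$, $\lvert a\rvert =1$). Once these identifications are made the counting argument above is straightforward; no induction on configurations is needed because the real-time assumption synchronises the two runs by the number of input letters consumed.
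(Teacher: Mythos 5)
Your proof is correct and follows essentially the same route as the paper's: use real-time to synchronise the two runs by the number of input letters consumed, use the origin bound $i$ to show that whatever one run has output after $n$ letters the other has output after $n+i$ letters, and bound the resulting length gap by $i$ times the maximal per-step output. Your version is in fact slightly more careful than the paper's, since you take $M$ over both transducers and argue both directions explicitly rather than via a ``wlog'', and you flag the monadic-tree identifications of distance and delay.
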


The notion of bounded delay is suitable to regain decidability.

\begin{restatable}[\cite{DBLP:conf/icalp/FiliotJLW16}]{theorem}{thmkdelayword}\label{thm:kdelayword}
  Given $k \geq 0$, $k$-delay inclusion, $k$-delay equivalence and $k$-delay sequential uniformization are decidable for the class \nfst.
\end{restatable}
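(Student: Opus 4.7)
The plan is to reduce all three problems to reachability or game-solving on a finite-state product construction that explicitly tracks output delay between the candidate runs.

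For $k$-delay inclusion $\mathcal{T}_1 \subseteq_{\mathbb D_k} \mathcal{T}_2$, I would build a product that, on an input $t$, simulates a single run of $\mathcal{T}_1$ while maintaining in parallel the set of all runs of $\mathcal{T}_2$ on $t$ that are still compatible with producing the same output as $\mathcal{T}_1$ with delay at most $k$. Each surviving candidate branch of $\mathcal{T}_2$ can be represented by (its current state, the current delay word $u \in \Gamma^{\leq k}$, a bit indicating whether $\mathcal{T}_1$ or $\mathcal{T}_2$ is ahead on the output tape). The product state is thus $(q_1, S)$ with $q_1$ a state of $\mathcal{T}_1$ and $S \subseteq Q_{\mathcal{T}_2} \times \Gamma^{\leq k} \times \{0,1\}$, which is finite. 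On each step $\mathcal{T}_1$ advances and every branch in $S$ is updated accordingly; a branch is discarded whenever its delay would exceed $k$. Then $\mathcal{T}_1 \subseteq_{\mathbb D_k} \mathcal{T}_2$ holds iff along every accepting run of the $\mathcal{T}_1$-component, $S$ still contains at least one entry $(q_2, \varepsilon, \cdot)$ with $q_2$ final in $\mathcal{T}_2$. This is a standard universality-style reachability check on the finite product, hence decidable, and $k$-delay equivalence follows immediately by checking both directions.

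For $k$-delay sequential uniformization, I would phrase the problem as a two-player turn-based game. Player In plays input letters and signals input termination; Player Out plays output chunks, representing the sequential transducer to be synthesised. The game position maintains exactly the same subset $S \subseteq Q_{\mathcal{T}_2} \times \Gamma^{\leq k} \times \{0,1\}$ as above, now tracking, for each surviving run of $\mathcal{T}_2$ on the input prefix, the current delay to the output prefix committed by Out. Out wins iff (i) the set $S$ never becomes empty while the current input prefix can still be extended to an element of $\mathrm{dom}(R(\mathcal{T}_2))$ and (ii) whenever In terminates with an input in $\mathrm{dom}(R(\mathcal{T}_2))$, $S$ contains an entry $(q, \varepsilon, \cdot)$ with $q$ final. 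Since the arena is finite, the game is decidable by standard safety-game techniques, and a positional winning strategy for Out directly reads off as a sequential transducer whose states are the game positions; conversely any $k$-delay sequential uniformizer induces such a strategy.

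The principal obstacle is the subset bookkeeping in the presence of non-determinism: because $\mathcal{T}_2$ is non-deterministic, many runs of $\mathcal{T}_2$ must be tracked in parallel, since it is not known a priori which one will eventually match the committed output with small delay. Collapsing them into a subset of $Q_{\mathcal{T}_2} \times \Gamma^{\leq k} \times \{0,1\}$ is what yields finiteness, but one has to justify early branch-deletion carefully, namely that once a branch's delay would exceed $k$ it can never be brought back into bound by future steps without violating the $k$-delay constraint. A secondary subtlety, specific to the uniformization game, is formulating Out's objective so that Out is not forced to commit output prematurely on inputs that may later leave $\mathrm{dom}(R(\mathcal{T}_2))$; this is what gives the objective its mixed safety/reachability flavor and must be engineered via a regular domain-extensibility test baked into the game.
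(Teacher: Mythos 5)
This theorem is not proved in the paper at all: it is imported verbatim from \cite{DBLP:conf/icalp/FiliotJLW16}, so there is no in-paper proof to compare against. Your sketch is, in essence, the approach of that cited reference — track the bounded output difference as a word of length at most $k$ together with a ``who is ahead'' bit, run a subset construction over $\mathcal T_2$ for inclusion/equivalence, and solve a finite safety game with a domain-extensibility side condition for sequential uniformization. The branch-deletion step you worry about is actually immediate: $\mathit{delay}(\rho,\rho')$ is a maximum over all input prefixes, so once the difference word would exceed length $k$ (or the two outputs mismatch on a letter) the pair of runs is permanently disqualified, and the branch can be dropped.

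The one point that would fail as written is the acceptance condition for inclusion. The paper defines $(t,s)\in_{\mathbb D_k} R(\mathcal T_2)$ by an \emph{existential} quantification over runs of \emph{both} transducers, so $\mathcal T_1 \subseteq_{\mathbb D_k} \mathcal T_2$ only demands that each pair $(t,s)\in R(\mathcal T_1)$ admit \emph{some} run of $\mathcal T_1$ that is delay-$k$-matched by some run of $\mathcal T_2$. Your check ``along every accepting run of the $\mathcal T_1$-component, $S$ contains a surviving final entry'' decides the stronger $\forall$-over-runs-of-$\mathcal T_1$ property: if $\mathcal T_1$ has two accepting runs on $t$ producing the same $s$ and only one of them is matchable within delay $k$, your algorithm rejects although the definition is satisfied. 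You either need to argue that the two formulations coincide for the class at hand, or restructure the construction so that the existential choice of the $\mathcal T_1$-run is also resolved nondeterministically per pair $(t,s)$ rather than checked universally; this is exactly the kind of quantifier bookkeeping the cited paper handles carefully. The game-based argument for uniformization and the equivalence reduction are otherwise sound in outline.
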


%
Ideally, we would like to lift \cref{thm:kdelayword} from word to tree transducers, but it turns out that the notion of delay is too expressive to yield decidability results for tree transducers as shown in the next paragraph.

\mysubparagraph{Tree transducers.}
\label{sec:similartree}

It is undecidable whether a given tree-automatic relation has a uniformization by a synchronous \dtdtt \cite{DBLP:conf/mfcs/LodingW16}.
A \tdtt is called synchronous if for one processed input node one output node is produced as, e.g., in \cref{ex:zwei}.
Tree-automatic relations are a subclass of the relations that are recognizable by synchronous \tdtt.
To prove the result, the authors showed that

\begin{restatable}[\cite{DBLP:conf/mfcs/LodingW16}]{lemma}{lemmadelayundec}\label{lemma:delayundec}
  There exists a synchronous \tdtt $\mathcal T_M$, based on a Turing machine $M$, that is 0-delay \dtdtt-uniformizable iff $M$ halts on the empty input.
\end{restatable}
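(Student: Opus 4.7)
The plan is to reduce from the halting problem for deterministic Turing machines on the empty input. Given $M$, I would encode candidate space-time diagrams of $M$ as trees over a ranked alphabet, and design a synchronous \tdtt $\mathcal T_M$ which, for each such input tree $t$, non-deterministically produces a labelling of every node of $t$ by a tape-cell content (with state and head information where appropriate), subject to local consistency with the transition rules of $M$ checked between neighbouring input positions. Because $\mathcal T_M$ is synchronous, the output has the same shape as $t$, and each ``cell'' is produced exactly when its input position is read.

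Any \dtdtt uniformizer is likewise synchronous on this domain, and the 0-delay constraint forces it to commit to the label of each output node as soon as the corresponding input node is processed, with no lookahead into sibling or descendant subtrees. I would then engineer $\mathcal T_M$ so that the only way to produce a consistent labelling deterministically and on-the-fly is to replay a halting computation of $M$ on the empty tape. One direction is easy: a finite halting run of $M$ of length $T$ can be hard-coded into the state set of a \dtdtt, which then deterministically labels every input node by the corresponding cell of that run and falls back to a distinguished ``beyond halting'' symbol once the input extends past step $T$. For the other direction, a hypothetical \dtdtt uniformizer can be probed on arbitrarily large inputs, and its forced, on-the-fly choices of cell contents extract a single infinite consistent computation of $M$ from the empty tape; by the way local consistency is set up, such a computation must reach a halting configuration.

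The main obstacle I expect is to calibrate the image of $\mathcal T_M$: it must be rich enough that $R(\mathcal T_M)$ is totally defined on its domain (so uniformization is possible in principle precisely when $M$ halts), yet constrained enough that no deterministic 0-delay strategy can exploit non-local structure of the input to dodge the simulation of $M$. Synchronicity together with 0-delay is exactly what removes the slack: any positive delay, or any opportunity to inspect other subtrees before committing, would let the uniformizer postpone its decisions until it had already seen enough of the input to mimic a halting run without actually computing one. The concrete construction then follows the standard ``check a transition of $M$ between each input position and its successor'' pattern, adapted so that it fits into a single synchronous top-down pass and so that the two implications above both go through.
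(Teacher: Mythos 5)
The paper does not actually prove this lemma; it is imported verbatim from \cite{DBLP:conf/mfcs/LodingW16}, and the surrounding text only records that the cited construction produces, for a TM $M$, a \dtdtt $\mathcal T_M'$ that $0$-delay uniformizes $\mathcal T_M$ iff $M$ halts. Measured against that construction, your easy direction (hard-coding the finite halting run into a \dtdtt) is fine in spirit, but the hard direction -- no $0$-delay \dtdtt{} uniformizer when $M$ does not halt -- is where your proposal has a genuine gap. The mechanism you invoke, namely that ``by the way local consistency is set up, such a computation must reach a halting configuration,'' cannot work: a non-halting deterministic TM has a perfectly locally consistent \emph{infinite} space-time diagram, so local consistency alone can never force halting, and a relation whose only consistent labellings are halting ones would have an empty (or $M$-dependent) domain, breaking the reduction. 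What the cited proof actually uses is an information-flow argument: the specification is arranged so that at certain output positions the uniformizer must reproduce data that sits in a part of the input it has not yet read (a sibling subtree), and the only escape from this impossible copying task is to certify that the input encodes a valid halting computation -- a check that a finite-state top-down device can perform precisely when $M$ halts, because then the (unique, deterministic) halting run is a fixed finite object. Your proposal explicitly flags ``calibrating the image of $\mathcal T_M$'' as the main obstacle, but that calibration \emph{is} the lemma; without the information-flow trick the two implications do not both go through.

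A secondary but real problem is the encoding itself. A top-down tree transducer processes the children of a node in independent parallel copies and cannot compare sibling subtrees, so a synchronous \tdtt{} cannot check ``local consistency with the transition rules of $M$ between neighbouring input positions'' when those neighbours are horizontal (same time step, adjacent tape cells) and laid out as siblings. The construction in \cite{DBLP:conf/mfcs/LodingW16} therefore encodes configurations along \emph{paths} (configurations written vertically, consecutive configurations on consecutive path segments) and reserves the branching of the tree for placing the challenge information that drives the information-flow argument. As written, your $\mathcal T_M$ is not realizable as a \tdtt, let alone a synchronous one.
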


In the proof, for a TM $M$, a \dtdtt $\mathcal T_M'$ is constructed such that $\mathcal T_M'$ $0$-delay uniformizes $\mathcal T_M$ iff $M$ halts on the empty input.
Recall that this implies that $\mathcal T_M' \subseteq_{\mathbb D_0} \mathcal T_M$ iff $M$ halts on the empty input.
Consequently, we obtain that

\begin{restatable}{theorem}{thmkdelayundec}\label{thm:k-delay-undec}
Given $k \geq 0$, $k$-delay inclusion and $k$-delay \dtdtt-uniformization are undecidable for the class \ntdtt (even for $k = 0$).
\end{restatable}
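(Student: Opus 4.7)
The plan is to reduce the halting problem for Turing machines on empty input, using Lemma \ref{lemma:delayundec} as the core ingredient.

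The base case $k = 0$ is essentially free. For $0$-delay \dtdtt-uniformization, the lemma directly supplies a synchronous \tdtt $\mathcal T_M$ that is $0$-delay \dtdtt-uniformizable iff $M$ halts on the empty input. For $0$-delay inclusion, the paragraph preceding the theorem extracts from the lemma's proof a concrete \dtdtt $\mathcal T_M'$ with $\mathcal T_M' \subseteq_{\mathbb D_0} \mathcal T_M$ iff $M$ halts on the empty input.

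For general $k \geq 1$, I would amplify the construction by a deterministic padding of height $k+1$. Pick a fresh unary output symbol $\#$ and modify every rule $q(f(x_1,\dots,x_i)) \to w[q_1(x_{j_1}),\dots,q_n(x_{j_n})]$ of $\mathcal T_M$ to $q(f(x_1,\dots,x_i)) \to w[\#^{k+1}(q_1(x_{j_1})),\dots,\#^{k+1}(q_n(x_{j_n}))]$. Call the resulting \tdtt $\mathcal T_M^{(k)}$, and define $(\mathcal T_M')^{(k)}$ analogously from $\mathcal T_M'$; determinism is preserved. Because the $\#^{k+1}$ chains sit between every two consecutive ``real'' output nodes and are identical across all rules, any two runs diverging at some real output position yield difference trees whose $\#$-chain alone has height $k+1$, hence delay at least $k+1 > k$. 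Conversely, two runs of $\mathcal T_M$ that agree with delay $0$ lift to runs of $\mathcal T_M^{(k)}$ agreeing with delay $0 \leq k$. Consequently, $(\mathcal T_M')^{(k)} \subseteq_{\mathbb D_k} \mathcal T_M^{(k)}$ iff $\mathcal T_M' \subseteq_{\mathbb D_0} \mathcal T_M$, and $\mathcal T_M^{(k)}$ admits a \dtdtt $k$-delay uniformization iff $\mathcal T_M$ admits a \dtdtt $0$-delay uniformization.

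The main obstacle is verifying that the padding cannot be exploited by a clever \dtdtt uniformizer to ``hide'' a genuine divergence within the delay budget. The argument relies on the padding being unary, deterministic, and uniform across branches: any deviation at a real output node propagates into a chain of $k+1$ fresh $\#$ symbols in the difference tree, which has height exactly $k+1$. Making this precise requires unrolling the definitions of $\mathit{Prefs}_{\mathit{level}}$ and $\mathit{delay}$ on the padded trees, but is otherwise routine bookkeeping.
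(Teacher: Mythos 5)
Your $k=0$ argument coincides with the paper's entire proof: Theorem~\ref{thm:k-delay-undec} is obtained there solely by extracting, from the proof of \cref{lemma:delayundec}, the \dtdtt $\mathcal T_M'$ with $\mathcal T_M' \subseteq_{\mathbb D_0} \mathcal T_M$ iff $M$ halts on the empty input; no padding or amplification for larger $k$ appears. The paper treats the single $k=0$ instance as settling the statement for every fixed $k$ (the parenthetical ``even for $k=0$'' is doing that work), so your amplification step is additional machinery relative to the paper, aimed at a point the paper glosses over.

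The amplification itself, however, has a genuine gap. The claim that ``any two runs diverging at some real output position yield difference trees whose $\#$-chain alone has height $k+1$'' is not true in general. First, rules whose right-hand side contains no state call receive no padding under your scheme (there is no hole to insert $\#^{k+1}$ into), so a divergence in which one run has applied such a leaf rule and the other has not yet reached the corresponding input leaf (possible when the two runs descend to input leaves at different levels for the same output position) contributes delay $1$, independent of $k$. Second, and more seriously for the uniformization direction, a $k$-delay \dtdtt uniformizer of $\mathcal T_M^{(k)}$ is an arbitrary \dtdtt and need not respect the padding discipline: the specification's run produces each real node together with the entire $\#^{k+1}$-chain below it in a single step, so a uniformizer that has matched the specification up to the bottom of a chain may additionally emit the single real node one level further down; the resulting difference tree is that one node, of height $0$ and delay $1 \leq k$. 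The padding therefore forces synchronization of real nodes only up to a slack of one level, and from a $k$-delay uniformizer of $\mathcal T_M^{(k)}$ you can at best extract a $1$-delay uniformizer of $\mathcal T_M$ --- about which \cref{lemma:delayundec} says nothing. Closing this requires either a stronger property of the underlying construction (e.g., that $\mathcal T_M$ is $d$-delay uniformizable for every fixed $d$ iff $M$ halts) or a padding scheme that also controls partially produced chains; it is not routine bookkeeping.
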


We do not know whether $k$-equivalence is decidable for a given $k \geq 0$.
Note that \cref{thm:k-delay-undec} does not imply that $0$-origin inclusion and $0$-origin \dtdtt-uniformization is undecidable for the class \tdtt.
For the class \fst, the notions of $0$-origin and $0$-delay fall together, but for \tdtt this is no longer the case.
Recall the $\tdtt$ given in \cref{ex:zwei} and its unique runs that yield the origin mappings depicted in \cref{fig:k-origin}.
The delay between these runs is zero, but their origin mappings are different.
An analysis of the (un)decidability proof(s) in \cite{DBLP:conf/mfcs/LodingW16} pins the problems down to the fact that in the specification and in the possible implementations the origins for the same output node lie on different paths in the input tree.
For trees, this fact has no influence when measuring the delay between computations (as seen in \cref{ex:zwei}).
However, it is recognizable using the origin distance as measure.
Since the notion of delay is so powerful that the decision problems under bounded delay become undecidable for tree transducers (see \cref{thm:k-delay-undec}) in contrast to word transducers (see \cref{thm:kdelayword}), in the next section, we focus on bounded origin distance.

\section{Decision problems for origin-close transducers}
\label{sec:origin-close}

We show that the decision problems become decidable for top-down tree transducers with bounded origin distance, see \cref{thm:korigin}.
The next part is devoted to explaining our proof ideas and introducing our main technical lemma (\cref{lemma:regular}) which is used in all proofs.

\mysubparagraph{Origin-close transductions are representable as regular tree languages.}
\label{sec:regular}

 
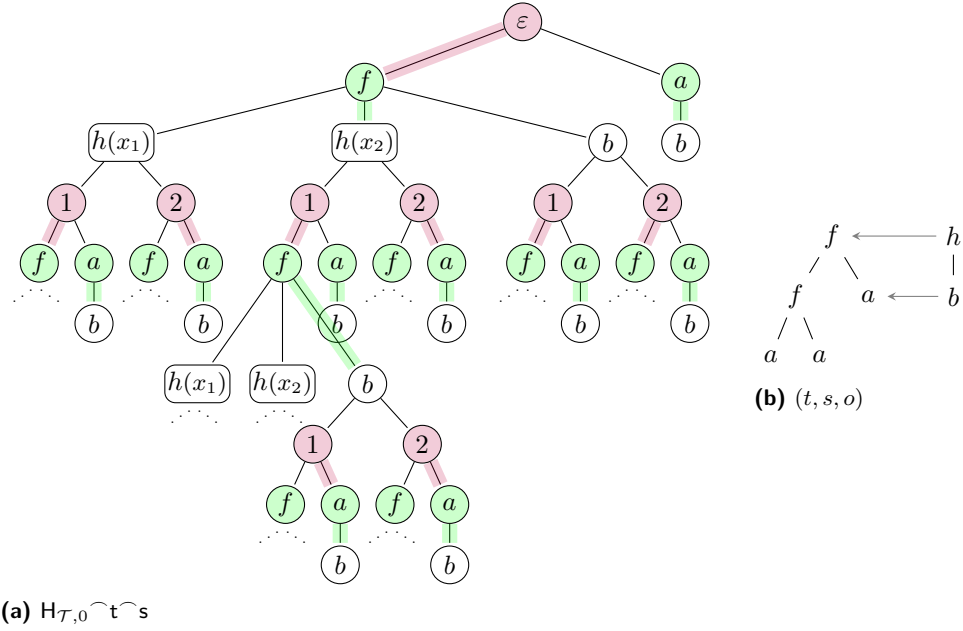
\begin{figure}[t!]
  \begin{center}

    \begin{subfigure}{0.7\textwidth}
      \begin{tikzpicture}[scale=0.8,baseline=(current bounding box.base)]
        \tikzstyle{level 1}=[sibling distance=52mm]
        \tikzstyle{level 2}=[sibling distance=40mm]
        \tikzstyle{level 3}=[sibling distance=18mm]
        \tikzstyle{level 4}=[sibling distance=9mm]
        \tikzstyle{level 6}=[sibling distance=18mm]
        \tikzstyle{level 7}=[sibling distance=9mm]

        \tikzstyle{myblack}=[circle,draw=black,inner sep=1pt, minimum size = 5mm, ]
        \tikzstyle{mybblack}=[rectangle,rounded corners,draw=black,inner sep=1pt, minimum size = 5mm, align=center,]
        \tikzstyle{myred}=[circle,draw=black,fill=purple!20,inner sep=1pt, minimum size = 5mm]
        \tikzstyle{mygreen}=[circle,draw=black,fill=green!20,inner sep=1pt, minimum size = 5mm]

        \tikzstyle{myredline}=[line width=2mm, purple, draw opacity=.2]
        \tikzstyle{mygreenline}=[line width=2mm, green, draw opacity=.2]

        \path[level distance=10mm] node[myred] (root) {$\varepsilon$}
        child{
          node[mygreen](0) {$f$}
          child{
            node[mybblack](00) {$h$\small{$(x_1)$}}
              child{
                node[myred](000) {$1$}
                child{
                  node (pp) [mygreen] {$f$}
                  child[opacity=0,yshift=0.55cm,xshift=-0.5mm]{
                    node[opacity=1,text = black] {\tiny$\iddots\ddots$}
                  }
                }
                child{
                  node(a)[mygreen] {$a$}
                  child{
                    node(b)[myblack] {$b$}
                  }
                }
              }
              child{
                node[myred](001) {$2$}
                child{
                  node[mygreen] {$f$}
                  child[opacity=0,yshift=0.55cm,xshift=-0.5mm]{
                    node[opacity=1,text = black] {\tiny$\iddots\ddots$}
                  }
                }
                child{
                  node(aa)[mygreen] {$a$}
                  child{
                    node(bb)[myblack] {$b$}
                  }
                }
              }
          }
          child{
            node[mybblack](01) {$h$\small{$(x_2)$}}
            child{
              node[myred](010) {$1$}
              child{
                  node[mygreen] (i) {$f$}
                  child[yshift=-1cm,xshift=-0.5cm]{
                    node[mybblack] {$h$\small{$(x_1)$}}
                    child[opacity=0,yshift=0.55cm,xshift=-0.5mm]{
                      node[opacity=1,text = black] {\tiny$\iddots\ddots$}
                    }
                  }
                  child[yshift=-1cm]{
                    node[mybblack] {$h$\small{$(x_2)$}}
                    child[opacity=0,yshift=0.55cm,xshift=-0.5mm]{
                      node[opacity=1,text = black] {\tiny$\iddots\ddots$}
                    }
                  }
                  child[yshift=-1cm,xshift=0.5cm]{
                    node[myblack] (ii) {$b$}
                    child{
                      node[myred] (j) {$1$}
                      child{
                        node[mygreen] {$f$}
                        child[opacity=0,yshift=0.55cm,xshift=-0.5mm]{
                          node[opacity=1,text = black] {\tiny$\iddots\ddots$}
                        }
                      }
                      child{
                        node(rr)[mygreen] {$a$}
                        child{
                          node(ss)[myblack] {$b$}
                        }
                      }
                    }
                    child{
                      node(p)[myred] {$2$}
                      child{
                        node[mygreen] {$f$}
                        child[opacity=0,yshift=0.55cm,xshift=-0.5mm]{
                          node[opacity=1,text = black] {\tiny$\iddots\ddots$}
                        }
                      }
                      child{
                        node(rrr)[mygreen] {$a$}
                        child{
                          node(sss)[myblack] {$b$}
                        }
                      }
                    }
                  }
                }
                child{
                  node(aaa)[mygreen] {$a$}
                  child{
                    node(bbb)[myblack] {$b$}
                  }
                }
            }
            child{
              node[myred](011) {$2$}
              child{
                  node[mygreen] {$f$}
                  child[opacity=0,yshift=0.55cm,xshift=-0.5mm]{
                    node[opacity=1,text = black] {\tiny$\iddots\ddots$}
                  }
                }
                child{
                  node(aaaa)[mygreen] {$a$}
                  child{
                    node(bbbb)[myblack] {$b$}
                  }
                }
            }
          }
          child{
            node[myblack](02) {$b$}
            child{
              node[myred](020) {$1$}
              child{
                  node[mygreen](q) {$f$}
                  child[opacity=0,yshift=0.55cm,xshift=-0.5mm]{
                    node[opacity=1,text = black] {\tiny$\iddots\ddots$}
                  }
                }
                child{
                  node(c)[mygreen] {$a$}
                  child{
                    node(d)[myblack] {$b$}
                  }
                }
            }
            child{
              node[myred](021) {$2$}
              child{
                  node[mygreen](qq) {$f$}
                  child[opacity=0,yshift=0.55cm,xshift=-0.5mm]{
                    node[opacity=1,text = black] {\tiny$\iddots\ddots$}
                  }
                }
                child{
                  node(cc)[mygreen] {$a$}
                  child{
                    node(dd)[myblack] {$b$}
                  }
                }
            }
          }
        }
        child{
          node[mygreen](1) {$a$}
          child{
            node[myblack](10) {$b$}
          }
        }
        ;

        \draw[myredline]  (root) -- (0);
        \draw[myredline]  (j) -- (rr);
        \draw[myredline]  (011) -- (aaaa);

        \draw[myredline]  (000) -- (pp);
        \draw[myredline]  (001) -- (aa);
        \draw[myredline]  (010) -- (i);
        \draw[myredline]  (p) -- (rrr);
        \draw[myredline]  (020) -- (q);
        \draw[myredline]  (021) -- (qq);

        \draw[mygreenline] (1) -- (10);
        \draw[mygreenline] (0) -- (01);
        \draw[mygreenline] (a) -- (b);
        \draw[mygreenline] (aa) -- (bb);
        \draw[mygreenline] (aaa) -- (bbb);
        \draw[mygreenline] (aaaa) -- (bbbb);
        \draw[mygreenline] (c) -- (d);
        \draw[mygreenline] (cc) -- (dd);
        \draw[mygreenline] (i) -- (ii);
        \draw[mygreenline] (rr) -- (ss);
        \draw[mygreenline] (rrr) -- (sss);

      \end{tikzpicture}
      \caption{${\mathsf H_{\mathcal T,0}}^{\mathsf\frown}{\mathsf t}^{\mathsf\frown}{\mathsf s}$}
      \end{subfigure}
      \begin{subfigure}{0.25\textwidth}
        \begin{tikzpicture}[scale=0.8,baseline=(current bounding box.base)]
          \tikzstyle{level 1}=[sibling distance=12mm]
          \tikzstyle{level 2}=[sibling distance=8mm]
          \tikzstyle{level 3}=[sibling distance=8mm]
          \tikzstyle{level 4}=[sibling distance=8mm]
          \begin{scope}[xshift = -2cm]
          \path[level distance=10mm] node (root){$f$}
          child{
            node(0){$f$}
            child{
                node {$a$}
            }
            child{
              node {$a$}
            }
          }
          child{
            node(1){$a$}
          }
          ;
          \end{scope}
          \path[level distance=10mm] node (roota){$h$}
            child{
              node(a){$b$}
              }

  (roota) edge [bend right = 0, ->, >=stealth, below, color = gray] node {} (root)
  (a) edge [bend right = 0, ->, >=stealth, below, color = gray] node {} (1)
          ;
        \end{tikzpicture}
        \caption{$(t,s,o)$}
        \end{subfigure}

  \end{center}
   \caption{Infinite tree ${\mathsf H_{\mathcal T,0}}$ based on $\mathcal T$ from \cref{ex:zwei}.
   On red nodes \In{} must make a choice, on green nodes \Out{} must make a choice. Their respective strategies $\mathsf{t}$ and $\mathsf{s}$ which define their choices are highlighted on the edges in red and green, respectively. Together, $\mathsf{t}$ and $\mathsf{s}$ encode the input tree $t = f(f(a,a),a)$, the output tree $s = h(b)$ and origin mapping $o\colon \dom{s} \to \dom{t}$ as depicted.
   Note that since $\mathsf t$ and $\mathsf s$ are strategies, choices are made whatever the other player does, that is why in ${\mathsf H_{\mathcal T,0}}^{\mathsf\frown}{\mathsf t}^{\mathsf\frown}{\mathsf s}$, we also have, e.g., a green annotation at node 1112 even though {\In} picked node 1111.
   }
   \label{fig:proofidea}
  \end{figure}

Given $k \geq 0$, and a \tdtt $\mathcal T$, we construct an infinite tree ${\mathsf H_{\mathcal T,k}}$, given as the unfolding of a finite graph ${\mathsf G_{\mathcal T,k}}$, such that a node in this infinite tree represents an input sequence from a finite input tree and an output sequence (where the intuition is that this output sequence was produced while processing this input sequence).
The idea is that in ${\mathsf H_{\mathcal T,k}}$, we define choices (aka.\ strategies) of two so-called players \In{} and \Out{}, where a strategy $\mathsf{t}$ of \In{} together with a strategy $\mathsf{s}$ of \Out{} defines an input tree $t$, an output tree $s$, and an origin mapping $o\colon \dom{s} \to \dom{t}$ of $s$ in $t$.
We can annotate the tree ${\mathsf H_{\mathcal T,k}}$ with the strategies $\mathsf{t}$ and $\mathsf{s}$ which yields a tree ${\mathsf H_{\mathcal T,k}}^{\mathsf\frown}{\mathsf t}^{\mathsf\frown}{\mathsf s}$.
We use this game-like view for all considered decision problems.
We illustrate this view.

\begin{example}\label{ex:proof}
Recall the \tdtt $\mathcal T$ over $\Sigma$ and $\Gamma$ given in \cref{ex:zwei}.
First, we explain how the graph ${\mathsf G_{\mathcal T,0}}$ looks like.
Its unfolding is the infinite tree ${\mathsf H_{\mathcal T,0}}$ (with annotations $\mathsf{t}$ and $\mathsf{s}$) depicted in \cref{fig:proofidea}.
We have three types of nodes: $\{\varepsilon,1,2\}$ to indicate that the current node is the root, a first or a second child. The maximum rank of $\Sigma$ is two, hence $\{\varepsilon,1,2\}$.
These nodes belong to \In{} who can choose the input label, represented by nodes $\{f,a\}$.
Then \Out{} chooses which output (from $T_\Gamma(X)$) should be produced while processing a node.
Since $k=0$, and all right-hand sides of rules in $\mathcal T$ have height at most one, only outputs of height at most one are suitable to maintain origin distance $k=0$.
For input $f$ possible choices are $h(x_1)$ and $h(x_2)$, indicating whether to continue to process the left or the right subtree, or $b$.
For input $a$ only output $b$ is possible.
After the output, edges to $\{1,\cdots,\mathit{rk}(\sigma)\}$ exist, where $\sigma$ is the last seen input letter.
Further explanation is given in \cref{fig:proofidea}.
\end{example}

We present our main technical lemma which states that origin-close transductions are representable as tree language recognizable by a parity tree automaton (a \pta).

\begin{restatable}{lemma}{lemmaregular}\label{lemma:regular}
Given $k \geq 0$ and a \tdtt $\mathcal T$, there exists a \pta that recognizes the tree language $\{ {\mathsf H_{\mathcal T,k}}^{\mathsf\frown}{\mathsf t}^{\mathsf\frown}{\mathsf s} \mid (t,s,o) \in_k R_o(\mathcal T)\}$. 
\end{restatable}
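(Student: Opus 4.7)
The plan is to construct $\mathcal A$ in two layers. The first layer is a regular check that $\mathsf t$ and $\mathsf s$ are bona fide strategies making a single consistent choice at every node of ${\mathsf H_{\mathcal T,k}}$, and that along the branches actively traversed by the strategies they encode a finite input tree $t \in T_\Sigma$, a finite output tree $s \in T_\Gamma$, and an origin mapping $o\colon \dom{s} \to \dom{t}$; branches that the strategies do not enter are accepted vacuously. Finiteness of $t$ and $s$ is enforced by a parity condition that rejects any actively-played branch on which the strategies keep descending forever without selecting a leaf-producing alternative.

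The main layer simulates, on the fly, a run $\rho$ of $\mathcal T$ on $t$ that produces $(t,s,o')$ with $\mathit{dist}(o(u),o'(u)) \leq k$ for every $u \in \dom{s}$. The crucial enabling observation is that the $k$-bound \emph{localises} this guess: $o'(u)$ must lie in the $k$-neighborhood of $o(u)$, and the number of positions within distance $k$ of any node in a ranked tree is bounded by a constant $M_k$ depending only on $k$ and the maximum rank in $\Sigma$. Hence the part of $\rho$ relevant to the current specification position can be summarised by a mapping from a bounded $k$-window into multisets of $\mathcal T$-states together with their pending output obligations. Since both the window size and the state set of $\mathcal T$ are finite, the summary lies in a finite set and can be carried in the PTA state.

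As $\mathcal A$ descends in ${\mathsf H_{\mathcal T,k}}$ it updates the window-summary: the input letter exposed by $\mathsf t$ is incorporated at the current position, and the output piece of height at most $k$ prescribed by $\mathsf s$ must be matched by nondeterministically scheduling a finite sequence of $\mathcal T$-rules at window positions whose combined effect is exactly this piece, thereby consuming pending $\mathcal T$-states at those positions and installing new pending ones at the variable positions. A parity condition is further used to require that every pending $\mathcal T$-obligation is eventually discharged: an odd priority is assigned to transitions that leave at least one obligation pending, an even priority to transitions that clear the summary, so that infinitely-deferred obligations lose.

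The main obstacle will be designing the window and its shifts so that $\mathcal T$-states never drift outside the $k$-neighborhood of their outstanding output positions, which is what guarantees that the bounded summary faithfully captures all information needed to verify origin-closeness; in particular a position should leave the window only once no obligation scheduled there can still produce an output whose specification origin lies at distance at most $k$. Establishing soundness and completeness of this bookkeeping—that $\mathcal A$ accepts an annotated tree if and only if the encoded triple $(t,s,o)$ is $k$-origin included in $R_o(\mathcal T)$—will be the bulk of the technical work.
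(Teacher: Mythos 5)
Your core idea coincides with the paper's: nondeterministically simulate a run of $\mathcal T$ producing $(t,s,o')$ and exploit the bound $\mathit{dist}(o(u),o'(u))\leq k$ to argue that all ``difference'' information (pending states of $\mathcal T$, unmatched output pieces, and the input labels within distance $k$ that $\mathcal T$ may read on paths diverging from the one currently followed) is of bounded size and hence storable in the finite state space of a \pta. The paper implements this with annotated output information trees plus a guess-and-verify mechanism for $k$-bounded input neighborhoods and output trees on sibling branches; that mechanism is what your ``window'' must become once you account for the fact that the automaton, descending along one branch of ${\mathsf H_{\mathcal T,k}}$, cannot directly see the labels it needs in order to fire $\mathcal T$-rules at window positions lying on divergent paths.

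There is, however, a genuine gap concerning non-linear rules, and the lemma is stated for arbitrary \tdtt{}s. A copying rule such as $q(f(x_1,x_2))\to f(q_1(x_1),q_2(x_1))$ lets the number of output positions whose pending production reads from one and the same input node grow without bound. Consequently your ``multisets of $\mathcal T$-states together with their pending output obligations'' are not finitely many objects unless you prove that multiplicities can be collapsed; and, more fundamentally, ${\mathsf H_{\mathcal T,k}}$ must be a tree over a \emph{ranked} alphabet, so \Out{} cannot record unboundedly many output choices at a single input node. The paper resolves this by showing that two pending output branches depending on the same input continuation and carrying the same summary (same reached state of $\mathcal T$ and same output difference tree) can be continued identically, so a bounded number of simultaneous output choices per input node suffices, and it redefines ${\mathsf H_{\mathcal T,k}}$ accordingly; your proposal never engages with this, and without it neither the finite-state summary nor the encoding itself is available for general \tdtt{}s. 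A secondary, fixable issue: demanding only that obligations are \emph{eventually} discharged via a parity condition is too weak, since an obligation must be discharged within origin distance $k$; the paper enforces this by counting annotations down and failing into an error state, i.e., by a safety condition, which your window-exit rule would have to replicate, rendering the liveness part of your parity condition for obligations superfluous.
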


\begin{proof}[Proof sketch.]
  The infinite tree ${\mathsf H_{\mathcal T,k}}^{\mathsf\frown}{\mathsf t}^{\mathsf\frown}{\mathsf s}$ encodes a triple $(t,s,o)$.
  We construct a \pta (which has in fact a safety acceptance condition) that guesses a run of $\mathcal T$ over the input tree $t$ with output tree $s$ that yields an origin mapping $o'$ such that $(t,s,o') \in R_o(\mathcal T)$ and $\mathit{dist}(o(i),o'(i)) \leq k$ which implies that $(t,s,o) \in_k R_o(\mathcal T)$.

  Checking whether $\mathit{dist}(o(i),o'(i)) \leq k$ can be done on-the-fly because the origin distance is bounded which implies that the difference trees of so-far produced output by the guessed run and the productions encoded by the annotations are of bounded size.
  Thus, they can be stored in the state space of the \pta.
  Even tough the construction idea is rather simple, the implementation and correctness proof are non-trivial.
  We face two difficulties.
  Firstly, we have to account for the fact that in $o$ and $o'$ origins for the same output node can lie on different paths of the input tree.
  However, since their distance is bounded, the amount of shared information that the \pta has to check on different paths is also bounded.
  Secondly, it is possible to have non-linear transformation rules (that is, rules with copy, e.g., $q(f(x_1,x_2)) \to f(q_1(x_2),q_2(x_2)))$ which adds another layer of complication.
  This causes that an unbounded number of output nodes can have the same input node as origin.
  We require that ${\mathsf H_{\mathcal T,k}}^{\mathsf\frown}{\mathsf t}^{\mathsf\frown}{\mathsf s}$ is a tree over a ranked alphabet, hence we have to bound the number of output choices that can be made at an input node.
  We show that it suffices to only make a bounded number of output choices for each input node.
  The main insight is that when two continuations of the output tree depend on the same continuation of the input tree, then it suffices to only consider one of them (because the other one can be continued in the same way) if they share the same relevant information where relevant basically means that the state that $\mathcal T$ has reached (guessed by the \pta) at these two output nodes and the output difference trees compared to \Out{}'s choices (given by $\mathsf s$) are the same.
\end{proof}

\mysubparagraph{Solving decision problems for origin-close transducers.}
\label{sec:decision-problems}

We show that deciding $k$-origin inclusion and equivalence for \tdtt{s} reduces to deciding language inclusion for \pta{s}.

\begin{restatable}{proposition}{propkoriginequiv}\label{thm:korginequiv}
Given $k \geq 0$, $k$-origin inclusion and $k$-origin equivalence are decidable for the class \ntdtt.
\end{restatable}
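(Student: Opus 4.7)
The plan is to reduce $k$-origin inclusion (and hence equivalence) to language inclusion between two parity tree automata obtained via \cref{lemma:regular}. Given \tdtt{s} $\mathcal T_1,\mathcal T_2$, observe that $\mathcal T_1 \subseteq_k \mathcal T_2$ holds exactly when every triple $(t,s,o) \in R_o(\mathcal T_1)$ (equivalently, $(t,s,o) \in_0 R_o(\mathcal T_1)$) is $k$-origin included in $R_o(\mathcal T_2)$. I would therefore invoke \cref{lemma:regular} twice: once with parameter $0$ applied to $\mathcal T_1$, producing a \pta $A_1$ that accepts the annotated trees encoding triples in $R_o(\mathcal T_1)$, and once with parameter $k$ applied to $\mathcal T_2$, producing a \pta $A_2$ that accepts the annotated trees encoding triples that are $k$-origin included in $R_o(\mathcal T_2)$.

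For this reduction to be well-posed, both automata must operate over the same labelled infinite tree ${\mathsf H}^{\mathsf\frown}{\mathsf t}^{\mathsf\frown}{\mathsf s}$. In the construction sketched around \cref{ex:proof}, the game graph depends on $\Sigma$, $\Gamma$, $k$, and a bound on the height of outputs producible in one step (linked to the maximum right-hand side height of the transducer's rules). I would therefore set up a common game graph ${\mathsf G_k}$ parameterised by $\Sigma$, $\Gamma$, $k$, and $M := \max(M_1,M_2)$, where $M_i$ is the maximum right-hand side height of $\mathcal T_i$. The proof of \cref{lemma:regular} then produces $A_1$ and $A_2$ over the very same unfolding ${\mathsf H_k}$, and by construction $\mathcal T_1 \subseteq_k \mathcal T_2$ holds iff $L(A_1) \subseteq L(A_2)$.

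Since parity tree automata are closed under complementation and have decidable emptiness (Rabin's theorem), inclusion of their languages is decidable; this yields decidability of $k$-origin inclusion. Finally, $k$-origin equivalence $\mathcal T_1 =_k \mathcal T_2$ is handled by applying the inclusion procedure in both directions.

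The main obstacle I expect is precisely this reconciliation step: one must verify that \cref{lemma:regular} can be rephrased so that $A_1$ and $A_2$ live on the same annotated tree, even though the lemma is stated for a single transducer. Concretely, one has to check that enlarging the output-branching of ${\mathsf G_{\mathcal T,k}}$ beyond what $\mathcal T$'s rules strictly require preserves soundness of the construction in \cref{lemma:regular} — intuitively, output choices unavailable to $\mathcal T$ simply yield encodings rejected by the corresponding \pta. Once this uniform encoding is in place, the rest of the argument is a routine application of standard \pta{} closure properties.
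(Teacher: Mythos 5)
Your reduction is the same as the paper's: apply \cref{lemma:regular} once with parameter $0$ to $\mathcal T_1$ and once with parameter $k$ to $\mathcal T_2$, then decide \pta{} language inclusion and run it in both directions for equivalence. You also correctly isolate the one non-routine step, namely that the two automata a priori live on different annotated trees. Where you diverge from the paper is in how you repair this. The paper keeps the two transducer-specific base trees ${\mathsf H_{\mathcal T_1,0}}$ and ${\mathsf H_{\mathcal T_2,k}}$ as given by \cref{lemma:regular} and instead introduces a $\mathit{purge}$ operation that deletes all non-strategy paths, arguing that the purged languages are still \pta-recognizable and that inclusion of the purged languages characterizes $\mathcal T_1 \subseteq_k \mathcal T_2$; this leaves \cref{lemma:regular} untouched but shifts the burden onto showing that purging is regularity-preserving and that two annotated trees encoding the same triple purge to the same tree. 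You instead build a single game graph large enough for both transducers (common output-height bound, and — a detail you should add — a common bound on the number of simultaneous output choices, which in the non-linear case depends on the state set $\mathcal S$ of each transducer) and rerun the construction of \cref{lemma:regular} over that common unfolding. This makes the final inclusion test immediate, at the cost of having to recheck that the construction and correctness proof of \cref{lemma:regular} are insensitive to offering \Out{} more choices than the given transducer can ever match; since the automaton sends $q_{acc}$ into all non-followed subtrees and rejects unmatched output via the failure of $\mathit{sync}$, this does go through, and you flag exactly this verification obligation. Both routes are sound; yours trades the $\mathit{purge}$ argument for a (mild) generalization of the main technical lemma.
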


\begin{proof}
Let $\mathcal T_1,\mathcal T_2$ be \tdtt{s} over the same input and output alphabet.
If $\mathcal T_1 \subseteq_k \mathcal T_2$, then $(t,s,o) \in R_o(\mathcal T_1)$ implies that $(t,s,o) \in_k R_o(\mathcal T_1)$ for all $(t,s,o) \in R_o(\mathcal T_1)$.
\cref{lemma:regular} yields that there are \pta{s} $\mathcal A_1,\mathcal A_2$ that recognize 
$\{ {\mathsf H_{\mathcal T_1,0}}^{\mathsf\frown}{\mathsf t}^{\mathsf\frown}{\mathsf s} \mid (t,s,o) \in R_o(\mathcal T)\}$
 and
 $\{ {\mathsf H_{\mathcal T_2,k}}^{\mathsf\frown}{\mathsf t}^{\mathsf\frown}{\mathsf s} \mid (t,s,o) \in_k R_o(\mathcal T_2)\}$,
respectively.
Basically, we want to check that $L(\mathcal A_1) \subseteq L(\mathcal A_2)$.
However, we have to overcome a slight technical difficulty.
If there are trees ${\mathsf H_{\mathcal T_1,0}}^{\mathsf\frown}{\mathsf t_1}^{\mathsf\frown}{\mathsf s_1} \in L(\mathcal A_1)$ and ${\mathsf H_{\mathcal T_2,k}}^{\mathsf\frown}{\mathsf t_2}^{\mathsf\frown}{\mathsf s_2} \in L(\mathcal A_2)$ such that for their encoded triples $(t_1,s_1,o_1)$ and $(t_2,s_2,o_2)$ holds that $t_1 = t_2$, $s_1 = s_2$ and $o_1$ and $o_2$ have an origin difference of at most $k$, i.e., $(t_1,s_1,o_1) \in_k \mathcal R_0(\mathcal T_2)$, it not necessarily holds that ${\mathsf H_{\mathcal T_1,0}}^{\mathsf\frown}{\mathsf t_1}^{\mathsf\frown}{\mathsf s_1} \in L(\mathcal A_2)$.
This is due to the fact that the base trees ${\mathsf H_{\mathcal T_1,0}}$ and ${\mathsf H_{\mathcal T_2,k}}$ look different in general because choices for \Out{} in the first tree are based on the rules of $\mathcal T_1$ and without origin distance and in the latter tree based on the rules of $\mathcal T_2$ with $k$-origin distance.
We only care whether the paths reachable by following the annotations $\mathsf{t_1}$ and $\mathsf{s_1}$ through ${\mathsf H_{\mathcal T_1,0}}$ and the paths reachability by following the annotations $\mathsf{t_2}$ and $\mathsf{s_2}$ through ${\mathsf H_{\mathcal T_2,k}}$ are the same.
Thus, we introduce the operation $\mathit{purge}$ which applied to a tree annotated with strategies of \In{} and \Out{} removes all non-strategy paths.
It is not difficulty to see that the sets 
  $L_1 := \{ \mathit{purge}\left({\mathsf H_{\mathcal T_1,0}}^{\mathsf\frown}{\mathsf t}^{\mathsf\frown}{\mathsf s}\right) \mid (t,s,o) \in R_o(\mathcal T)\}$ and $L_2 := \{ \mathit{purge}\left({\mathsf H_{\mathcal T_2,k}}^{\mathsf\frown}{\mathsf t}^{\mathsf\frown}{\mathsf s}\right) \mid (t,s,o) \in_k R_o(\mathcal T_2)\}$
are also \pta-recognizable.
Hence, in order to check whether $\mathcal T_1 \subseteq_k \mathcal T_2$, we have to check whether $L_1 \subseteq L_2$, which is decidable.
We have shown that $k$-origin inclusion for \tdtt{s} is decidable, consequently, $k$-origin equivalence for \tdtt{s} is decidable for all $k\geq 0$.
\end{proof}

We show that checking whether a \tdtt is $k$-origin \dtdtt-uniformizable reduces to deciding emptiness of \pta{s}.

\begin{restatable}{proposition}{propkoriginlunif}\label{thm:korginunif}
\mbox{Given $k \geq 0$, $k$-origin \dtdtt-uniformization is decidable for the class \ntdtt.}
\end{restatable}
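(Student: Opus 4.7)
My plan is to reduce $k$-origin \dtdtt-uniformization to solving a two-player game on a finite graph with an $\omega$-regular winning condition, reusing the game-theoretic framework underlying \cref{lemma:regular}.

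I would first apply \cref{lemma:regular} to obtain a \pta $\mathcal A$ recognizing the set of annotated trees ${\mathsf H_{\mathcal T,k}}^{\mathsf\frown}{\mathsf t}^{\mathsf\frown}{\mathsf s}$ that encode triples $(t,s,o) \in_k R_o(\mathcal T)$. Since the domain of a \tdtt is a regular tree language, I would also take a tree automaton $\mathcal B$ recognizing $\mathrm{dom}(R(\mathcal T))$; this can be lifted to a \pta $\mathcal B'$ that, running on ${\mathsf H_{\mathcal T,k}}^{\mathsf\frown}{\mathsf t}^{\mathsf\frown}{\mathsf s}$, checks whether the input tree encoded by $\mathsf t$ belongs to $\mathrm{dom}(R(\mathcal T))$. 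I would then view the finite graph $\mathsf G_{\mathcal T,k}$ as the arena of a two-player game, where \In chooses at \In-nodes and \Out chooses at \Out-nodes, and declare \Out the winner of a play if the resulting annotated tree is accepted by $\mathcal A$ or rejected by $\mathcal B'$. This is an $\omega$-regular game on a finite arena, and such games are decidable and admit positional winning strategies.

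The decisive step, which I expect to be the main technical obstacle, is to argue that a positional winning strategy for \Out on $\mathsf G_{\mathcal T,k}$ translates back to a genuine \dtdtt (and not merely a \dtdttr) that $k$-origin uniformizes $\mathcal T$. The construction behind \cref{lemma:regular} equips the \Out-nodes of $\mathsf G_{\mathcal T,k}$ with exactly the information a \dtdtt state can carry — essentially a state of $\mathcal T$ together with the bounded local origin discrepancy tracked when enforcing the $k$-bound — so that a positional choice at such a node should correspond cleanly to a deterministic rule $q(f(x_1,\dots,x_i)) \to w[q_1(x_{j_1}),\dots,q_n(x_{j_n})]$; I would need to spell this correspondence out in detail, both to extract the rules of the \dtdtt from the strategy and to check that these rules behave uniformly across all inputs in $\mathrm{dom}(R(\mathcal T))$. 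The converse direction, that any \dtdtt $k$-origin uniformizer induces a positional winning strategy for \Out, is routine. Decidability of the underlying game then yields decidability of $k$-origin \dtdtt-uniformization.
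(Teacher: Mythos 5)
Your overall architecture is the same as the paper's: build on \cref{lemma:regular}, view $\mathsf G_{\mathcal T,k}$/$\mathsf H_{\mathcal T,k}$ as an arena where \In{} reveals the input and \Out{} commits to outputs, and extract a \dtdtt{} from a finite-memory winning object for \Out{}. The addition of a domain automaton $\mathcal B'$ is a sensible refinement (the paper's displayed set quantifies over all $\mathsf t$ without restricting to $\mathrm{dom}(R(\mathcal T))$, so making the domain explicit does no harm and arguably tightens the argument). The converse direction and the strategy-to-\dtdtt{} extraction are handled in the paper essentially as you anticipate: the language $\{ {\mathsf H_{\mathcal T,k}}^{\mathsf\frown}{\mathsf s} \mid \forall \mathsf t : (t,s,o)\in_k R_o(\mathcal T)\}$, being \pta-recognizable and non-empty, contains a regular tree, which yields a finite-state strategy for \Out{} depending only on the input read so far, i.e.\ a \dtdtt{} without lookahead.

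The gap is in the step you treat as routine, not the one you flag. You declare \Out{} the winner of a \emph{play} if ``the resulting annotated tree is accepted by $\mathcal A$'' and then invoke decidability and positional determinacy of $\omega$-regular games on finite arenas. But those results apply to winning conditions on individual plays (paths), whereas your condition is a property of the entire tree generated by the pair of strategies: $\mathcal A$ is a \emph{nondeterministic} \pta{} that guesses a run of $\mathcal T$, and these guesses must be consistent across all branches, so acceptance does not decompose into a path condition that \In{} could test by additionally steering down a single branch. Handling the resulting $\exists\mathsf s\,\forall\mathsf t$ alternation over tree-shaped strategies is exactly the technical content of the paper's proof, which performs it by the chain complement--project--complement--project on \pta{s} and then checks emptiness. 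To repair your game formulation you would need either to carry out that automata-theoretic quantifier elimination anyway, or to enlarge \Out{}'s moves so that she also announces the transitions of $\mathcal A$ (i.e.\ the guessed run of $\mathcal T$ and the bounded origin-discrepancy bookkeeping), after which the condition becomes a parity condition along each branch and \In{} can be given the extra role of choosing directions; only then is the game a path-based parity game on a finite arena to which positional determinacy applies. As written, the reduction to ``an $\omega$-regular game on a finite arena'' skips the central step.
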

  
\begin{proof}
  \noindent Given a \tdtt $\mathcal T$, by \cref{lemma:regular}, there is a \pta that recognizes 
 
  $\{ {\mathsf H_{\mathcal T,k}}^{\mathsf\frown}{\mathsf t}^{\mathsf\frown}{\mathsf s} \mid (t,s,o) \in_k R_o(\mathcal T)\}.$
  
  \noindent By closure under complementation and intersection, there is a \pta that recognizes

  $\{ {\mathsf H_{\mathcal T,k}}^{\mathsf\frown}{\mathsf t}^{\mathsf\frown}{\mathsf s} \mid (t,s,o) \notin_k R_o(\mathcal T)\}.$
   
  \noindent By closure under projection, there is a \pta that recognizes

  $\{ {\mathsf H_{\mathcal T,k}}^{\mathsf\frown}{\mathsf s} \mid \exists\thinspace \mathsf{t}: (t,s,o) \notin_k R_o(\mathcal T)\}.$
   
  \noindent By closure under complementation and intersection, there is a \pta that recognizes

  $\{ {\mathsf H_{\mathcal T,k}}^{\mathsf\frown}{\mathsf s} \mid \forall\thinspace \mathsf{t}: (t,s,o) \in_k R_o(\mathcal T)\}.$
   
  \noindent By closure under projection, there is a \pta that recognizes

  $\{ {\mathsf H_{\mathcal T,k}} \mid \exists\thinspace \mathsf{s}: \forall\thinspace \mathsf{t}: (t,s,o) \in_k R_o(\mathcal T)\}.$\\

  Let $\mathcal A$ denote the \pta obtained in the last construction step.
  We show that $\mathcal T$ is $k$-origin $\dtdtt$-uniformizable iff $L(\mathcal A) \neq \emptyset$.
We have that $L(\mathcal A) = \{ {\mathsf H_{\mathcal T,k}} \mid \exists\thinspace \mathsf{s}: \forall\thinspace \mathsf{t}: (t,s,o) \in_k R_o(\mathcal T)\}.$
Colloquially, this means that we can fix output choices that only depend on the previously seen input choices, which exactly describes $\dtdtt$-uniformizability.

Assume $\mathcal T$ is $k$-origin $\dtdtt$-uniformizable, say by a $\dtdtt$ $\mathcal T'$.
There exists a strategy of {\Out} in ${\mathsf H_{\mathcal T,k}}$ that copies the computations of $\mathcal T'$.
Clearly, since $\mathcal T'$ is deterministic, we obtain that $\exists\thinspace \mathsf{s}:\forall\thinspace \mathsf{t}: (t,s,o) \in_k R_o(\mathcal T)$, $\mathsf s$ can be chosen to be the strategy that copies $\mathcal T'$.
Thus, $L(\mathcal A) \neq \emptyset$.

For the other direction, assume that $L(\mathcal A) \neq \emptyset$.
This implies that also the set $\{ {\mathsf H_{\mathcal T,k}}^{\mathsf\frown}{\mathsf s} \mid \forall\thinspace \mathsf{t}: (t,s,o) \in_k R_o(\mathcal T)\}$ is non-empty and \pta recognizable.
Since the set is \pta recognizable, it contains a regular infinite tree (meaning the tree has a finite representation).
This tree implicitly contains a finite representation of some strategy $\mathsf s$ such that $\forall\thinspace \mathsf{t}: (t,s,o) \in_k R_o(\mathcal T)$.
Hence, the strategy $\mathsf s$ can be translated into a finite-state \dtdtt that $k$-origin uniformizes~$\mathcal T$.
\end{proof}

Finally, combining \cref{thm:korginequiv,thm:korginunif}, we obtain our main result.

\begin{restatable}{theorem}{thmkorigin}\label{thm:korigin}
Given $k \geq 0$, $k$-origin inclusion, $k$-origin equivalence, and $k$-origin \dtdtt-uniformization are decidable for the class \ntdtt.
\end{restatable}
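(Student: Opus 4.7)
The plan is to obtain the theorem as an immediate corollary of the two propositions already established in this section, namely \cref{thm:korginequiv} (which handles $k$-origin inclusion and equivalence) and \cref{thm:korginunif} (which handles $k$-origin \dtdtt-uniformization). Since each of the three decision problems in the theorem's statement is literally the subject of one of these propositions, no further argument is needed beyond invoking them. I would simply write the combining sentence and cite both results.

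If, however, one wanted to view this as a single unified proof, I would structure it around the common technical backbone: \cref{lemma:regular}, which represents the $k$-origin-close transductions of a \tdtt as a regular tree language recognized by a \pta over the domain ${\mathsf H_{\mathcal T,k}}$. With that lemma in hand, each of the three decision questions becomes a standard algorithmic question on \pta{s}. For $k$-origin inclusion, the reduction is to language inclusion between two \pta-recognizable sets of strategy-annotated unfoldings (after applying the $\mathit{purge}$ normalization used in the proof of \cref{thm:korginequiv} to reconcile the different base trees ${\mathsf H_{\mathcal T_1,0}}$ and ${\mathsf H_{\mathcal T_2,k}}$); $k$-origin equivalence then follows by two inclusion checks. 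For $k$-origin \dtdtt-uniformization, the reduction is to emptiness of an alternating tower of projections and complementations of the \pta from \cref{lemma:regular}, exactly as carried out in the proof of \cref{thm:korginunif}; non-emptiness of the resulting \pta yields a regular strategy $\mathsf s$ that can be read off as a finite-state \dtdtt.

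The real work was already done in \cref{lemma:regular} and the two propositions, so the only step here is to glue them together and observe that they cover precisely the three problems in the theorem. The main conceptual obstacle, were one proving this from scratch rather than just assembling prior results, is the \pta construction of \cref{lemma:regular}: handling non-linear rules (copying) so that the strategy-annotated unfolding can still be presented as a tree over a ranked alphabet, and checking the origin-distance bound on-the-fly by storing the bounded-size output difference trees in the state space. Given that this lemma is available, the proof of the theorem itself is essentially a one-line combination.
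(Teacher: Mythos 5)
Your proposal matches the paper exactly: the theorem is stated there as an immediate consequence of \cref{thm:korginequiv} and \cref{thm:korginunif}, with all the technical work delegated to \cref{lemma:regular} and those two propositions. Your additional summary of the underlying reductions (the $\mathit{purge}$-normalized language inclusion for inclusion/equivalence and the projection/complementation tower plus regular-tree extraction for uniformization) is an accurate account of how those propositions are proved, so nothing is missing.
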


\section{Conclusion}
\label{sec:conclusion}

We introduced two similarity measures for \tdtt{s} based on their behavior and studied decision problems for similar \tdtt{s}.
For \tdtt{s} with bounded delay, the decision problems remain undecidable.
For origin-close \tdtt{s} they become decidable.
For future work, we plan to consider other tree transducer models.
In \cite{DBLP:journals/iandc/FiliotMRT18}, it was shown that origin inclusion and origin equivalence are decidable for MSO tree transducers and macro tree transducers.



\bibliography{content/biblio}

\appendix

\section{Missing example}

\begin{example}\label{ex:incomparable}
  To begin with, consider the \fst{s} $\mathcal T_1, \mathcal T_2$ depicted in \cref{fig:example-a}, we already explained in the proof of \cref{prop:origin} that $\mathcal T_1 = \mathcal T_2$, but $\mathcal T_1 \neq_{k} \mathcal T_2$ for all $k \geq 0$, i.e., their origin distance is unbounded.
  However, their delay is bounded by 1.
  It is easy to see that $\mathcal T_1 =_{\mathbb D_1} \mathcal T_2$, because their difference in the length of their outputs for the same input is at most one letter.
  Now, consider the \fst{s} $\mathcal T_3, \mathcal T_4$ depicted in \cref{fig:example-b}.
  Both recognize the relation $\{ (ab,c^n) \mid n\in\mathbbm N \}$, hence, $\mathcal T_3 = \mathcal T_4$.
  Clearly, their origin distance is bounded by 1.
  The whole output either has the first or the second letter as origin.
  However, $\mathcal T_3 \neq_{\mathbb D_k} \mathcal T_4$ for all $k \geq 0$, i.e., their delay is unbounded. 
  For any $k$, take the consider the unique runs that admit output $c^{k+1}$ in $\mathcal T_3$ and $\mathcal T_4$, respectively.
  We compare these runs for the input prefix $a$, $\mathcal T_3$, already has produced $c^{k+1}$, and $\mathcal T_4$ no output so far.
  Their delay is $k+1$.
\end{example}

\section{Missing proofs of \texorpdfstring{\cref{prop:origin,lemma:origin-include-delay}}{Propositions 7 and 9}}
\label{app:body}

\proporigin*

\begin{proof}
  Secondly, consider the \fst{s} $\mathcal T, \mathcal T'$ depicted in \cref{fig:example-a}.
  The recognized relation $\mathcal R(\mathcal T) \subseteq \{a,b,A,B\}^* \times \{a,b\}^*$ consists of $\{(ab^nA,ab^n) \mid n \in \mathbbm N \}$ and $\{(ab^nB,ab^m) \mid 0 \leq m \leq 2n-1, n \in \mathbbm N\}$.
  The sequential transducer $\mathcal T'$ recognizes the function $f\colon \{a,b,A,B\}^* \to \{a,b\}^*$ defined by $f(ab^nX) = ab^n$ for $X \in \{A,B\}$ and all $n\in\mathbbm N$.
  Clearly, $\mathcal T'$ is a sequential uniformization of $\mathcal T$.
  However, no sequential uniformization with bounded origin distance exists.
  Towards a contradiction, assume there is sequential transducer $\mathcal T''$ that uniformizes $\mathcal T$ such that $\mathcal T'' \subseteq_k \mathcal T$ for some $k \geq 0$.
  Consider the input word $ab^{2k}A$, in $\mathcal T$ there is only one run with the input which yields the output $ab^{2k}$ and the origin of the $i$th output letter is the $i$th input letter for all $i$.
  Since $\mathcal T'' \subseteq_k \mathcal T$ there exists a run of $\mathcal T''$ on $ab^{2k}A$ that yields $ab^{2k}$ and the origin of the first $b$ in the output is at latest the $k$th $b$ in the input.
  Now, consider the input $ab^{6k}B$, the output of $\mathcal T''$ on $ab^{6k}B$ is $ab^{6k}$.
  Since $\mathcal T''$ is sequential, the the runs of $\mathcal T''$ on $ab^{2k}A$ and $ab^{6k}B$ are the same up to the input $ab^{2k}$, thus, also for the output $ab^{6k}$ the origin first output $b$ is at latest the $k$th $b$ in the input.
  Now we compare this with all possible runs in $\mathcal T$ on $ab^{6k}B$ that also yield $ab^{6k}$.
  Note that $\mathcal T$ (after producing the first $b$) must always produce two $b$ at once, thus in order to produce $ab^{6k}$ for the input $ab^{6k}B$, the production of $b$ can only start after while reading the second half of the input.
  This implies that the first output has an origin in the second half of input which has a distance of more than $k$ (at least $2k$) to the $k$th $b$ in the input. 
\end{proof}

\lemmaoriginincludedelay*

\begin{proof}
Let $\mathcal T_1, \mathcal T_2$ be real-time \fst{s} such that $\mathcal T_1 \subseteq_{i} \mathcal T_2$ for some $i \geq 0$.
Let $\ell$ be the maximum number of output letters that $\mathcal T_1$ produces in a computation step.
Consider any $(u,v,o_1) \in R_o(\mathcal T_1)$, since $\mathcal T_1 \subseteq_{i} \mathcal T_2$, there is $(u,v,o_2) \in R_o(\mathcal T_2)$ such that $\mathit{dist}(o_1(d),o_2(d)) \leq i$ for all $d \in \dom{v}$.
Let $\rho_1$ and $\rho_2$ be the corresponding runs of $\mathcal T_1$ and $\mathcal T_2$, respectively.
We show that $\mathit{delay}(\rho_1,\rho_2) \leq \ell \cdot i$ which implies that $\mathcal T_1 \subseteq_{\mathbb D_{\ell \cdot i}} \mathcal T_2$.
Let $u = a_1\cdots a_n$ and $v = b_1\cdots b_m$.
Pick any prefix of $u$, say $a_1\cdots a_k$, and consider the prefixes of the runs $\rho_1$ and $\rho_2$ such that the input $a_1\cdots a_k$ has been processed.
Let $b_1\cdots b_{k_1}$ and $b_1\cdots b_{k_2}$ be the respective produced outputs.
Wlog., let $k_1 \leq k_2$.
If $k_1 = k_1$, then the output delay for the prefix $a_1\cdots a_k$ is zero.
So assume $k_1 < k_2$.
We have to show that $|b_{k_1+1}\cdots b_{k_2}|$ is less than $\ell \cdot i$.
Since the origin mappings of $\rho_1$ and $\rho_2$, that is, $o_1$ and $o_2$, have a distance of at most $i$, we know that the origin of $b_{k_1+1}\cdots b_{k_2}$ in $\rho_1$ is no later than at the letter $a_{k+i}$.
On $a_{k+1}\cdots a_{k+1}$, $\mathcal T_1$ can produce at most $\ell \cdot i$ output letters.
Consequently,  $|b_{k_1+1}\cdots b_{k_2}| \leq \ell \cdot i$.
\end{proof}

\section{Organization of the appendix}

The remainder of the appendix is devoted to the proof of \cref{lemma:regular}.

We first prove a simpler variant, \cref{lemma:linear}, for linear top-down tree transductions.
We present the formal setup in \cref{app:linear}.
In \cref{app:notations}, we introduce several auxiliary notations and definitions.
Subsequently, we construct the desired \pta in \cref{app:construction} and prove \cref{lemma:linear} in \cref{app:correctness}.
Finally, in \cref{app:fullcase}, we lift the construction and proof to the full set of top-down tree transductions, i.e., we prove \cref{lemma:regular}.

\section{Linear top-down tree transductions}
\label{app:linear}

We fix a linear \tdtt and some values.

\begin{assumption}\label{ass:linear}
  \begin{itemize}
    \item Let $\Sigma,\Gamma$ be ranked alphabets, and let $m$ be the maximal rank of $\Sigma$.
    \item Let $\mathcal T$ be a \emph{linear} {\tdtt} of the form $(Q,\Sigma,\Gamma,q_0,\Delta)$.
    \item Let $M$ be the maximal height of a tree appearing on the right-hand side of a transition rule in $\Delta$.
  \end{itemize}
\end{assumption}

As mentioned in \cref{sec:regular}, given $k \geq 0$, and a \tdtt $\mathcal T$, we construct an infinite tree ${\mathsf H_{\mathcal T,k}}$, given as the unfolding of a finite graph ${\mathsf G_{\mathcal T,k}}$, such that a node in this infinite tree represents an input sequence from a finite input tree and an output sequence (where the intuition is that this output sequence was produced while processing this input sequence).
The idea is that in ${\mathsf H_{\mathcal T,k}}$, we define choices (aka.\ strategies) of two so-called players \In{} and \Out{}, where a strategy $\mathsf{t}$ of \In{} together with a strategy $\mathsf{s}$ of \Out{} defines an input tree $t$, an output tree $s$, and an origin mapping $o\colon \dom{s} \to \dom{t}$ of $s$ in $t$.
We can annotate the tree ${\mathsf H_{\mathcal T,k}}$ with the strategies $\mathsf{t}$ and $\mathsf{s}$ which yields a tree ${\mathsf H_{\mathcal T,k}}^{\mathsf\frown}{\mathsf t}^{\mathsf\frown}{\mathsf s}$.

Recall the given example in \cref{ex:proof} and its visualization depicted in \cref{fig:proofidea}.

\begin{definition}[$\mathcal G_{\mathcal T}^k$]\label{def:graphG}
Given $k \geq 0$, we define the graph $\mathcal G_{\mathcal T}^k$ with edges
 \begin{itemize}
  \item $(\varepsilon \times \SigmaI) \cup (\{1,\dots,m\} \times \SigmaI)$, \hfill {\small Edges of Player \In}
  \item $\bigcup_{i=0}^m (\SigmaIrk{i} \times T_{out,i}$), \hfill {\small Edges of Player \Out}
  
 where $T_{out,i} = \{ t \in T_{\SigmaO}(X_i)\mid t \text{ is \emph{linear} and } h(t) \leq M + 2kM\}$,
  \item $\bigcup_{i=0}^m (T_{out,i} \times \{1,\dots,i\})$, and \hfill {\small Edges to all directions}
  \item $\varepsilon$ is the initial vertex.
 \end{itemize}
\end{definition}

The height of the output choices depends on $\mathcal T$ and the given parameter $k$; their height is bounded by $M + 2kM$, where the summand $M$ covers the case $k=0$.
The intuition behind this is that the output choices made by {\Out} have to mimic the outputs in the transitions of $\mathcal T$ which are of height at most $M$.
Since we are interested in $k$-origin close computations, the output produced by {\Out} can be made up to $k$ computation steps earlier (or later) compared to the computation of $\mathcal T$, meaning the height of the output difference is at most $kM$.
Then, the next choice of {\Out} might be such that the situation reverses.
Hence, output choices with a height of at most $2 \cdot kM$ suffice.

Our desired infinite tree is the following regular tree.

\begin{definition}[${\mathsf H_{\mathcal T,k}}$]
 Given $k \geq 0$, let ${\mathsf H_{\mathcal T,k}}$ be the unfolding of the graph $\mathsf G_{\mathcal T}^k$ with root node $\varepsilon$.
\end{definition}

We need the notion of strategy annotations and resulting encoding.

\begin{definition}[Strategy annotations]
 Given $k \geq 0$ and ${\mathsf H_{\mathcal T,k}}$, let $\mathsf t$ be an encoding of a strategy for \In, let $\mathsf s$ be an encoding of a positional strategy for \Out.
Note that the strategies are positional by default, because ${\mathsf H_{\mathcal T,k}}$ is a tree.
 We denote by ${\mathsf H_{\mathcal T,k}}^{\mathsf\frown}{\mathsf t}^{\mathsf\frown}{\mathsf s}$ the tree ${\mathsf H_{\mathcal T,k}}$ with annotated strategies $\mathsf s$ and $\mathsf t$.
 
 An annotated tree ${\mathsf H_{\mathcal T,k}}^{\mathsf\frown}{\mathsf t}^{\mathsf\frown}{\mathsf s}$ uniquely identifies an input tree $t$, an output tree $s$ and an origin function $o\colon \dom{s} \to \dom{t}$. 
\end{definition}

See \cref{ex:proof,fig:proofidea} for an example of the above definition.

The next lemma is our key lemma for linear top-down tree transductions, it states that the set of origin-close linear top-down tree transductions is a regular tree language recognizable by a parity tree automaton.

\begin{restatable}{lemma}{lemmawichtig}\label{lemma:linear}
  Given $k \geq 0$ and a linear \tdtt $\mathcal T$, there exists a \pta that recognizes the tree language $\{ {\mathsf H_{\mathcal T,k}}^{\mathsf\frown}{\mathsf t}^{\mathsf\frown}{\mathsf s} \mid (t,s,o) \in_k R_o(\mathcal T) \text{ and } o \text{ is a linear transduction}\}$.
\end{restatable}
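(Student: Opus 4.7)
The plan is to construct a tree automaton $\mathcal A$ with a safety acceptance condition (a particular case of a \pta) that reads ${\mathsf H_{\mathcal T,k}}^{\mathsf\frown}{\mathsf t}^{\mathsf\frown}{\mathsf s}$ top-down and, along the paths selected by the annotations, guesses on the fly a run $\rho$ of $\mathcal T$ on the encoded input tree $t$ whose origin $o'$ satisfies $\mathit{dist}(o(u),o'(u)) \leq k$ at every output node $u$. Edges that leave the strategy paths are sent to an always-accepting sink state, so only the strategy-selected sub-tree carries real work, and once a strategy path reaches an \Out-choice with no variables (so no more input is read below it) the \pta likewise descends into the sink state. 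Acceptance of an annotated tree will then correspond exactly to the existence of a $k$-close witness run.

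The state of $\mathcal A$ at a strategy node representing an input position $v$ will store three finite pieces of information: (i) the state $q \in Q$ of $\mathcal T$ that $\rho$ has assigned to $v$; (ii) a bounded-height ``delay'' partial tree $d \in N_{\SigmaO}$ recording the discrepancy between the output produced by $\rho$ so far along this branch and the output already committed by the \Out-strategy; and (iii) a small ``reconciliation list'' recording open commitments that link this branch to a sibling branch whose origin has been promised within input distance at most $k$. Because origin distance is bounded by $k$ and each rule of $\mathcal T$ contributes at most $M$ output layers, $d$ never has to exceed height $2kM$ -- this is precisely why \Out's choice alphabet in \cref{def:graphG} is capped at $M + 2kM$. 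A transition at an input-letter node $f$ will guess a rule $q(f(x_1,\dots,x_i)) \to w$ of $\mathcal T$, compare $w$ with the \Out-strategy's choice at that node to update $d$, and send the appropriately updated state to those children of $v$ that correspond to the variables occurring in the rule; a transition at an \Out-node simply consumes the chosen output fragment into $d$. The safety condition requires that $d$ never exceeds height $2kM$, that all reconciliation obligations are discharged within $k$ input steps, and that $d$ is empty whenever the strategy reaches a purely $\SigmaO$-labelled output leaf.

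The main obstacle I expect is the bookkeeping required because $o$ and $o'$ may assign the same output node to different siblings of a common input ancestor, so that $\rho$ and the annotated strategy momentarily live on different paths of $t$. I plan to handle this at a branching input node by splitting the current obligation in two: one piece is propagated down the child picked by $\rho$ and the other down the child picked by the strategy, together with a remaining-distance counter initialised at $k$; linearity of $\mathcal T$ ensures that each obligation refers to at most one input subtree, so these pieces never multiply and the state space stays finite. Correctness will then follow from a two-way argument: every safely accepting run of $\mathcal A$ can be read back as a concrete $k$-close witness run of $\mathcal T$ on $t$ producing $s$ with origin $o'$, and conversely any such witness run can be re-encoded as a safe run of $\mathcal A$ using exactly the bounds $2kM$ on $d$ and $k$ on the age of open obligations. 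This will give the desired \pta recognising the stated language.
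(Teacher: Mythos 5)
Your plan matches the paper's proof essentially step for step: a safety tree automaton that guesses a run of $\mathcal T$ on the fly, stores bounded-height output-difference trees in its state (the paper's ``output information trees'' with step-count annotations play the role of your delay tree $d$ and your remaining-distance counters), and handles diverging origins by guessing bounded-height input neighborhoods resp.\ annotated output trees at the branching point and verifying them on the sibling branch via dedicated check states (your ``reconciliation obligations''), with linearity ensuring that at most one such simulated computation is carried per branch. The one point to be careful about is your remark that the automaton descends into the accepting sink as soon as {\Out}'s choice has no variables: the guessed run of $\mathcal T$ may still need to read further input below that node to catch up on output {\Out} has already committed, so the automaton must keep simulating there until the stored difference is discharged --- which your safety condition on $d$ does in fact enforce, so this is a slip of phrasing rather than a gap.
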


For ease of presentation, in the next sections, we leave out the subscripts $\mathcal T$ and $k$ and simply write ${\mathsf H}$ instead of ${\mathsf H_{\mathcal T,k}}$.

\section{Notations and definitions}
\label{app:notations}

Towards the construction of our desired parity tree automaton we need further notations and definitions.

\begin{definition}[$k$-neighborhood]
For $k \geq 0$, let $N_{\SigmaI}^k$ be the subset of trees from  $N_{\SigmaI}$ that have at most height $k$.
\end{definition}

\begin{definition}[Annotations]
For a set $A$ and $k \geq 0$, let ${A}^{(k)}$ be the set that is $A$ extended by $a^{(1)},\dots,a^{(k)}$ for each $a \in A$.

For $t \in T_{A}$, let $t^{(j)}$ denote the tree that results from $t$ by replacing each occurrence of some $a \in A$ by $a^{(j)}$ for $j \in \{1,\dots,k\}$.

For $t \in T_{A^{(j)}}$, $\mathit{strip}(t)$ is $t$ without its annotations.

For $f^{(j)} \in A^{(k)}$, $\mathit{ann}(f^{(j)})$ is its annotation $j$.
\end{definition}

The introduced annotations will be used to mark symbols from the output alphabet $\Gamma$ and states $Q$ of $\mathcal T$ that are used as labels in trees.

Now, for the construction of the parity tree automaton, we compare computations of $\mathcal T$ and {\Out} on an input tree (given by the strategy annotations in $\mathsf H$).
Since the origins of these computations can have a distance of at most $k$, we are going to represent the difference in the state space of our parity tree automaton.

Towards this, we introduce two types of output information trees, one for representing the situation where $\mathcal T$ is ahead (see \cref{def:tdtt-info-tree}), and the other one where \Out\ is ahead (see \cref{def:out-info-tree}).

We begin with a representation for the case where $\mathcal T$ is ahead (or on par) compared to {\Out}.

\begin{definition}[Output information tree wrt.\ $\mathcal T$]\label{def:tdtt-info-tree}
 A tree $s \in N_{\SigmaO^{(k)} \cup \SigmaI \cup Q^{(k)} \cup Q}$ is said to be an \emph{output information tree} if there is some $n$ such that $s$ is of the form $C[s_1,\dots,s_n]$, where
 \begin{itemize}
  \item $C \in T_{\Gamma^{(k)}\setminus\Gamma}(X_n)$ is an $n$-context,
  \item $s_1,\dots,s_n \in N_{Q^{(k)} \cup \SigmaI}$ are trees such that for each $s_i$ either $s_i \in Q$ or $s_i = q^{(j)}(t)$ for some $q^{(j)} \in Q^{(k)}$ and some $t \in N_\Sigma$, and
  \item there is at most one such $s_i \in Q$.
 \end{itemize}
\end{definition}

Consider $\Sigma$ given by $\Sigma_2 = \{f\}$ and $\Sigma_0 = \{a,b\}$, and $\Gamma$ given by $\Gamma_2 = \{g\}$, $\Gamma_1 = \{h\}$ and $\Gamma_0 = \{c\}$.
Examples of output information trees over $\Sigma$ and $\Gamma$ are
  
\begin{tikzpicture}[thick,baseline=(current bounding box.center)]
  \tikzstyle{level 1}=[sibling distance=10mm]
   \tikzstyle{level 2}=[sibling distance=8mm]
   \tikzstyle{level 3}=[sibling distance=8mm]

  \path[level distance=10mm,]
  node (root1) at (0,0) {$g^{(7)}$}
   child{
     node(0){$q$}
   }
   child{
       node(1){$c^{(7)}$}
   }
  ;
\end{tikzpicture},
  \begin{tikzpicture}[thick,baseline=(current bounding box.center)]
  \tikzstyle{level 1}=[sibling distance=10mm]
   \tikzstyle{level 2}=[sibling distance=8mm]
   \tikzstyle{level 3}=[sibling distance=8mm]

  \path[level distance=10mm,]
  node (root1) at (0,0) {$g^{(7)}$}
   child{
     node(0){$h^{(5)}$}
      child{
       node(00){$c^{(3)}$}
      }
   }
   child{
       node(1){$c^{(7)}$}
   }
  ;
\end{tikzpicture},
  \begin{tikzpicture}[thick,baseline=(current bounding box.center)]
  \tikzstyle{level 1}=[sibling distance=10mm]
   \tikzstyle{level 2}=[sibling distance=8mm]
   \tikzstyle{level 3}=[sibling distance=8mm]

  \path[level distance=10mm,]
  node (root1) at (0,0) {$g^{(7)}$}
   child{
     node(0){$q$}
   }
   child{
       node(1){$q_1^{(2)}$}
       child{
        node{$f$}
        child{
        node{$f$}
        }
        child{
        node{$a$}
        }
       }
   }
  ;
\end{tikzpicture}, and
  \begin{tikzpicture}[thick,baseline=(current bounding box.center)]
  \tikzstyle{level 1}=[sibling distance=10mm]
   \tikzstyle{level 2}=[sibling distance=8mm]
   \tikzstyle{level 3}=[sibling distance=8mm]

  \path[level distance=10mm,]
  node (root1) at (0,0) {$g^{(7)}$}
   child{
     node(0){$g^{(4)}$}
     child{
      node{$q_2^{(6)}$}
      child{
      node{$b$}
      }
     }
     child{
     node{$c^{(4)}$}
     }
   }
   child{
       node(1){$q_1^{(2)}$}
       child{
        node{$f$}
        child{
        node{$f$}
        }
        child{
        node{$a$}
        }
       }
   }
  ;
\end{tikzpicture}
.

We explain the intuition between (the annotations of) output information trees wrt.\ $\mathcal T$.
The upper part of such a tree with labels from ${\Gamma^{(k)}\setminus\Gamma}$ describes the part of the output that $\mathcal T$ is currently ahead of {\Out}.
The annotation $a$ of an output symbol $g^{(a)} \in \Gamma^{(k)}$ indicates that {\Out} has to produce this output in at most $a$ computation steps, otherwise the bound on the origin distance will be violated.
The occurrence of a $q \in Q$, i.e., a state without annotation, indicates that {\Out} and $\mathcal T$ currently read the same node of the input tree and $\mathcal T$ is in $q$.
Furthermore, the annotation $a$ of a state symbol $p^{(a)} \in Q^{(k)}$ indicates that the current input node of {\Out} has a distance of $a$ to a node of the input tree that $\mathcal T$ processes while in $p$;
it is assumed that these vertices lie on divergent paths.
The subtree below $p^{(a)}$ indicates the (partial) input tree that $\mathcal T$ processes from this divergent vertex.

Now, given such an output information tree our goal is to define an update of the tree according to transitions of $\mathcal T$.
We give two auxiliary definitions first.

For the first auxiliary definition, consider the case that $\mathcal T$ and {\Out} process the same node of the input tree.
We define the application of a transition from $\mathcal T$ assuming $\mathcal T$ is in the state $q$ and the next input symbol is $f$.

An intuition is given in \cref{fig:expansion}.

\begin{figure}[p]
\begin{center}
  \begin{tikzpicture}[thick,baseline=(current bounding box.north)]
  \tikzstyle{level 1}=[sibling distance=16mm]
   \tikzstyle{level 2}=[sibling distance=8mm]
   \tikzstyle{level 3}=[sibling distance=8mm]

 \node[] at (-1.5,0) {$q(f):$};
 
  \path[level distance=10mm,]
  node (root) {$q$}
  child{
  node (root2){$f$}
   }
  ;
  
  \begin{scope}[xshift=5cm]
   \tikzstyle{level 1}=[sibling distance=16mm]
   \tikzstyle{level 2}=[sibling distance=8mm]

 \node[] at (-2,0) {$s \in \mathrm{ext}(q(f)):$};
 
  \path[level distance=10mm,] node (root){$g$}
    child{
      node(0){$q_1$}
        child{
          node(00){$x_1$}
        }
    }
    child{
      node(1){$q_2$}
       child{
        node(10){$x_2$}
       }
    }
  ;  
  \end{scope}
  \end{tikzpicture}
\end{center}
\caption{An example of the extension for $q$ and $f$ wrt.\ the rule $q(f(x_1,x_2)) \to g(q_1(x_2),q_2(x_1))$ according to \cref{def:expansion}.}
\label{fig:expansion}
  \vspace*{\floatsep}
\begin{center}
  \begin{tikzpicture}[thick,baseline=(current bounding box.north)]
  \tikzstyle{level 1}=[sibling distance=16mm]
   \tikzstyle{level 2}=[sibling distance=8mm]
   \tikzstyle{level 3}=[sibling distance=8mm]

 \node[] at (-1.5,0) {$q^{(2)}(t):$};
 
  \path[level distance=10mm,]
  node (root) {$q^{(2)}$}
  child{
  node (root2){$f$}
    child{
      node(0){$f$}
        child{
          node(00){$a$}
        }
        child{
          node(01){$b$}
        }
    }
    child{
      node(1){$a$}
    }
   }
  ;
  
  \begin{scope}[xshift=5cm]
   \tikzstyle{level 1}=[sibling distance=16mm]
   \tikzstyle{level 2}=[sibling distance=8mm]

 \node[] at (-2.5,0) {$s \in \mathrm{ext_k}(q^{(2)}(t)):$};
 
  \path[level distance=10mm,] node (root){$g^{(k-2)}$}
    child{
      node(0){$q_1^{(4)}$}
        child{
          node(00){$a$}
        }
    }
    child{
      node(1){$q_2^{(4)}$}
       child{
        node(10){$f$}
         child{
          node{$a$}
         }
         child{
          node{$b$}
         }
       }
    }
  ;  
  \end{scope}

  \end{tikzpicture}
\end{center}
\caption{An example of the extension for $q^{(2)}$ and $t = f(f(a,b),a)$ with distance update wrt.\ the rule $q(f(x_1,x_2)) \to g(q_1(x_2),q_2(x_1))$ according to \cref{def:expansion-dist}.}
\label{fig:expansion-dist}
  \vspace*{\floatsep}
   
\begin{center}
  \begin{tikzpicture}[thick,baseline=(current bounding box.north)]
  \tikzstyle{level 1}=[sibling distance=16mm]
   \tikzstyle{level 2}=[sibling distance=8mm]
   \tikzstyle{level 3}=[sibling distance=8mm]

 \node[] at (-1.5,0) {$s(f'):$};
 
  \path[level distance=10mm,]
  node (root){$g^{(7)}$}
    child{
      node(0){$q$}
       child{
        node{$f'$}
       }
    }
    child{
      node(1){$q_1^{(4)}$}
        child{
          node(10){$f$}
          child{
            node{$a$}
          }
          child{
            node{$b$}
          }
        }
    }
  ;
  
  \begin{scope}[xshift=5cm]
   \tikzstyle{level 1}=[sibling distance=16mm]
   \tikzstyle{level 2}=[sibling distance=8mm]

 \node[] at (-2,0) {$s' \in \mathrm{EXT}_k(s(f')):$};
 
  \path[level distance=10mm,] node (root){$g^{(7)}$}
    child{
      node(0){$h^{(k)}$}
        child{
          node(00){$q_1$}
           child{
            node{$x_1$}
           }
        }
    }
    child{
      node(1){$g^{(k-4)}$}
       child{
        node{$q_2^{(6)}$}
          child{
           node{$a$}
          }
       }
       child{
        node{$q_3^{(6)}$}
          child{
           node{$b$}
          }
       }
    }
  ;  
  \end{scope}

  \end{tikzpicture}
\end{center}
\caption{An example of an extension for the output information tree $s = f^{(7)}(q,q_1^{(4)}(g(a,b))$ and input symbol $f'$ to \cref{def:expansion-max}.
To compute the extension the rules $q(f'(x_1,x_2)) \to h(q_1(x_1))$ and $q_1(f(x_1,x_2)) \to g(q_2(x_1),q_3(x_2))$ were used.
}
\label{fig:expansion-max}
 \end{figure}



\begin{definition}[Extension]\label{def:expansion}
 For $f \in \SigmaI$ and $q \in Q$, let $\mathrm{ext}(q(f)) \subseteq  N_{\SigmaI \cup \SigmaO \cup Q}(X)$ be the \emph{set of extensions} such that an \emph{extension}
 \begin{center}\mbox{
 $s \in \mathrm{ext}(q(f) :\Leftrightarrow \left \{
 \begin{aligned}
  & \exists\thinspace q\big(f(x_1,\dots,x_i)\big) \rightarrow w[q_1(x_{j_1}),\dots,q_n(x_{j_n})] \in \Delta\\
  & s = w[q_1(x_{j_1}),\dots,q_n(x_{j_n})].
 \end{aligned} \right.$}
 \end{center}
\end{definition}

For the second auxiliary definition, consider the case that $\mathcal T$ and {\Out} process different nodes of the input tree and the distance between these nodes is given.
Assume {\Out} is currently at a node $u$ of the input tree and $\mathcal T$ is at another node $u'$ of the input tree such that $u$ and $u'$ lie on divergent paths.
Recall that the idea is that, given a state $q \in Q^{(k)}$ with annotation $a$, it should denote that $u$ has distance of $a$ to $u'$.
The result of an application of a rule is then defined under the assumption that {\Out} advances to a successor of $u$ and $\mathcal T$ advances to a successor of $u'$ which results in an increase of the distance by two, because $u$ and $u'$ lie on divergent paths.
Since the origin of the extended output is $u'$ with a distance of $a$ to $u$ it is implied that {\Out} has to recover the produced output at most $k-a$ steps later.
We define the application of a transition from $\mathcal T$ assuming $\mathcal T$ is in the state $q$, the distance between the nodes $u$ and $u'$ that $\mathcal T$ and {\Out} read, respectively, is $a$, and the next input(s) for $\mathcal T$ are given by a partial tree $t$.
For an intuition see \cref{fig:expansion-dist}.

\begin{definition}[Extension with distance update]\label{def:expansion-dist}
 For $q^{(a)} \in Q^{(k)}$ and a partial input tree $t \in N_{\SigmaI}$, let $\mathrm{ext_k}(q^{(a)}(t))$ be the \emph{set of extensions with distance update} such that an extension 
 \begin{center}\mbox{
 $s \in \mathrm{ext_k}(q^{(a)}(t))  :\Leftrightarrow \left \{
 \begin{aligned}
  & \exists\thinspace q\big(\val{t}(\varepsilon)(x_1,\dots,x_i)\big) \rightarrow w[q_1(x_{j_1}),\dots,q_n(x_{j_n})] \in \Delta\\
  & s = w^{(k-a)}[q_1^{(a+2)}(t|_{j_1}),\dots,q_n^{(a+2)}(t|_{j_n})].
 \end{aligned}\right.$}
 \end{center}
\end{definition}


Now, we are ready to define an update of an output information tree according to the transitions of $\mathcal T$, called extension.
The extension of an output information tree is obtained by extending all positions where states occur according to \cref{def:expansion}~and ~\cref{def:expansion-dist}.
See \cref{fig:expansion-max} for an intuition.


\begin{definition}[Extension of an OIT wrt.\ $\mathcal T$]\label{def:expansion-max}
 Let $s \in N_{\SigmaO^{(k)} \cup \SigmaI \cup Q^{(k)}}$ be an output information tree wrt.\ $\mathcal T$ and $f \in \SigmaI$.
 By definition of output information trees wrt.\ $\mathcal T$, $s$ contains at most one node whose label is in $Q$.
 
 If $s$ contains a node whose label is in $Q$, say $q$, we let $\mathrm{EXT}_k(s(f)) \subseteq N_{\SigmaO^{(k)} \cup \SigmaI \cup Q^{(k)}}(X)$ be the set such that $s' \in \mathrm{EXT}_k(s(f))$
 
 \begin{center}
 $:\Leftrightarrow$
 \mbox{
 $ \left \{
 \begin{aligned}
  & \exists\thinspace u,u_1,\ldots,u_n \in \dom{s} \text{ with } \val{s}(u)=q,\val{s}(u_1)=q_1^{(a_1)},\ldots,\val{s}(u_n)=q_n^{(a_n)}\\
 & \exists\thinspace s_0\in \mathrm{ext}(q(f)), s_1\in\mathrm{ext_k}\left(q_1^{(a_1)}(s|_{u_11})\right),\ldots,s_n\in\mathrm{ext_k}\left(q_n^{(a_n)}(s|_{u_n1})\right)\\
 & \forall\thinspace u' \in \dom{s}\setminus\{u,u_1,\ldots,u_n\}: \val{s}(u')\notin Q^{(k)}\\
 & s' = s[u \leftarrow s_0,u_1 \leftarrow s_1,\ldots,u_n \leftarrow s_n].
 \end{aligned}\right.$}
\end{center}

Otherwise, $\mathrm{EXT}_k(s(f))$ is similarly defined; the parameter $f$ is ignored and we extend all positions that have labels in $Q^{(k)} \setminus Q$.
\end{definition}

Secondly, we define a representation for the case that {\Out} is ahead (or on par) compared to $\mathcal T$.

\begin{definition}[Output information tree wrt.\ {\Out}]\label{def:out-info-tree}
 Recall $S_{\Gamma^{(k)}\setminus \Gamma}$ are special trees over $T_{\Gamma^{(k)}\setminus \Gamma}$.
 The special tree $s \in S_{\Gamma^{(k)}\setminus \Gamma}$ is said to be an \emph{output information tree}, if removing the $\circ$-labeled node yields a tree in $N_{\Gamma^{(k)}\setminus \Gamma}(X)$, or if $s = \circ$.
 Furthermore, a tree $s \in T_{\Gamma^{(k)}\setminus \Gamma}$ is also an \emph{output information tree}.
\end{definition}

Examples of output information trees wrt.\ {\Out} are 
  \begin{tikzpicture}[thick,baseline=(current bounding box.center)]
  \tikzstyle{level 1}=[sibling distance=10mm]
   \tikzstyle{level 2}=[sibling distance=8mm]
   \tikzstyle{level 3}=[sibling distance=8mm]

 
  \path[level distance=10mm,]
  node (root1) at (0,0) {$g^{(7)}$}
   child{
     node(0){$x_2$}
   }
   child{
       node(1){$\circ$}
   }
  ;
  \end{tikzpicture},
  \begin{tikzpicture}[thick,baseline=(current bounding box.center)]
  \tikzstyle{level 1}=[sibling distance=10mm]
   \tikzstyle{level 2}=[sibling distance=8mm]
   \tikzstyle{level 3}=[sibling distance=8mm]

 
  \path[level distance=10mm,]
  node (root1) at (0,0) {$g^{(4)}$}
   child{
     node(0){$\circ$}
   }
   child{
       node(1){$a^{(2)}$}
   }
  ;
  \end{tikzpicture}, and $b^{(3)}$.
As before, the intuition behind an annotation $a$ is that the other party has to recover the output in at most $a$ computation steps.
Since output information trees wrt.\ {\Out} represent by how much {\Out} is ahead, here, the annotations bound the number of computation steps $\mathcal T$ can use to recover the output.

Also, we define the update of such a tree for a new output choice of {\Out} (which has been annotated).

\begin{definition}[Extension of an OIT wrt.\ {\Out}]\label{def:out-expansion}
  For an output information tree $s \in S_{\Gamma^{(k)} \setminus \Gamma}$ and an annotated (partial) output tree $s' \in N_{\Gamma^{(k)}\setminus \Gamma}(X)$, we define the extension of $s$ by $s'$ as $s\cdot s'$.
  For an output information tree $s \in T_{\Gamma^{(k)}\setminus \Gamma}$, we define the extension of $s$ by an annotated (partial) output tree $s' \in N_{\Gamma^{(k)}\setminus \Gamma}(X)$ to be $s$ (ignoring the parameter $s'$), because $s$ can not be extended as it is already a (completely transformed) tree which only consists of (annotated) output symbols.
\end{definition}

Examples of extended output information trees are depicted in \cref{subfig:synca} and \cref{subfig:syncb} on the left-hand side, e.g., the extended tree 
  \begin{tikzpicture}[thick,baseline=(current bounding box.center)]
  \tikzstyle{level 1}=[sibling distance=10mm]
   \tikzstyle{level 2}=[sibling distance=8mm]
   \tikzstyle{level 3}=[sibling distance=8mm]

  \path[level distance=10mm,]
  node (root1) at (0,0) {$g^{(7)}$}
   child{
     node(0){$h^{(7)}$}
      child{
       node(00){$x_1$}
      }
   }
   child{
       node(1){$x_2$}
   }
  ;
\end{tikzpicture}
could be obtained from concatenating the output information tree
  \begin{tikzpicture}[thick,baseline=(current bounding box.center)]
  \tikzstyle{level 1}=[sibling distance=10mm]
   \tikzstyle{level 2}=[sibling distance=8mm]
   \tikzstyle{level 3}=[sibling distance=8mm]

 
  \path[level distance=10mm,]
  node (root1) at (0,0) {$g^{(7)}$}
   child{
     node(0){$\circ$}
   }
   child{
       node(1){$x_2$}
   }
  ;
  \end{tikzpicture}
with the (annotated partial) output tree
  \begin{tikzpicture}[thick,baseline=(current bounding box.center)]
  \tikzstyle{level 1}=[sibling distance=10mm]
   \tikzstyle{level 2}=[sibling distance=8mm]
   \tikzstyle{level 3}=[sibling distance=8mm]

 
  \path[level distance=10mm,]
     node(0){$h^{(7)}$}
      child{
       node(00){$x_1$}
      }
  ;
  \end{tikzpicture}.
  

Now that we have constructed ways to represent output information trees wrt.\ {\Out} and $\mathcal T$ and how to extend this information wrt.\ a computation step, we need a way to compare extensions of output information trees.
Assume $s$ is an extended output information tree wrt.\ {\Out} and $s'$ is an extended output information tree wrt.\ $\mathcal T$, then we want to ensure that it is either possible to extend $s$ to $s'$ or the other way around.
The function \emph{sync} can be seen as a function that removes the greatest common prefix of the outputs in $s$ and $s'$ from both trees, if one output can be extended to the other output and otherwise fails,
see \cref{fig:sync} for an intuition.

\begin{figure}[t!]
\begin{subfigure}{\textwidth}
\begin{center}
  \begin{tikzpicture}[thick,baseline=(current bounding box.north)]
  \tikzstyle{level 1}=[sibling distance=10mm]
   \tikzstyle{level 2}=[sibling distance=8mm]
   \tikzstyle{level 3}=[sibling distance=8mm]

 \node[xshift=-1cm] at (0,0) {$s:$};
 
  \path[level distance=10mm,]
  node (root1) at (0,0) {$g^{(7)}$}
   child{
     node(0){$h^{(7)}$}
      child{
       node(00){$x_1$}
      }
   }
   child{
       node(1){$x_2$}
   }
  ;
  
  \begin{scope}[xshift=2.5cm]
   \tikzstyle{level 1}=[sibling distance=10mm]
   \tikzstyle{level 2}=[sibling distance=8mm]

 \node[xshift=-1cm] at (0,0) {$s':$};
 
  \path[level distance=10mm,] node(root2) at (0,0) {$g^{(5)}$}
    child{
      node(0){$h^{(5)}$}
        child{
          node(00){$q_1$}
           child{
             node(000){$x_1$}
           }
        }
    }
    child{
      node(1){$f^{(5)}$}
       child{
        node(10){$q_2$}
         child{
          node{$x_2$}
         }
       }
       child{
         node{$b^{(5)}$}
       }
    }
  ;  
  \end{scope}
  
    \begin{scope}[xshift=5.5cm]
   \tikzstyle{level 1}=[sibling distance=10mm]
   \tikzstyle{level 2}=[sibling distance=8mm]

 
 \node[xshift=-1cm] at (0,0) {$x_1 \mapsto$};
  \path[level distance=10mm,] 
  
          node(00a) at (0,0){$q_1$}
           child{
             node(000){$x_1$}
           }
 
  ;
   \end{scope}
   
   \begin{scope}[xshift=7.5cm]
   \node[xshift=-1cm] at (0,0) {$x_2 \mapsto$};
  \path[level distance=10mm,]
      node(1b) at (0,0) {$f^{(5)}$}
       child{
        node(10){$q_2$}
         child{
          node{$x_2$}
         }
       }
       child{
         node{$b^{(5)}$}
       }
  ;

  \end{scope}

  \end{tikzpicture}
  \end{center}
  \caption{Let $s$ resp.\ $s'$ be extended output information trees wrt.\ {\Out} resp.\ $\mathcal T$.
 Consider the result of $sync(s,s')$.
  Following {\Out} in direction 1, {\Out} and $\mathcal T$ are now on par; following {\Out} in direction 2, $\mathcal T$ is now ahead.
  }
  \label{subfig:synca}
\end{subfigure}

\begin{subfigure}{\textwidth}
\vspace{1em}
\begin{center}
  \begin{tikzpicture}[thick,baseline=(current bounding box.north)]
  \tikzstyle{level 1}=[sibling distance=10mm]
   \tikzstyle{level 2}=[sibling distance=8mm]
   \tikzstyle{level 3}=[sibling distance=8mm]

 \node[xshift=-1cm] at (0,0) {$s:$};
 
  \path[level distance=10mm,]
  node (root1) at (0,0) {$g^{(3)}$}
   child{
     node(0){$x_1$}
   }
   child{
       node(1){$f^{(3)}$}
         child{
          node(10){$x_2$}
         }
         child{
          node(11){$x_3$}
         }
   }
  ;
  
  \begin{scope}[xshift=2.5cm]
   \tikzstyle{level 1}=[sibling distance=10mm]
   \tikzstyle{level 2}=[sibling distance=8mm]

 \node[xshift=-1cm] at (0,0) {$s':$};
 
  \path[level distance=10mm,] node(root2) at (0,0) {$g^{(7)}$}
    child{
      node(0){$q_1$}
        child{
         node(00){$x_3$}
        }
      }
    child{
      node(1){$q_2$}
        child{
         node(10){$x_2$}
        }
      }
  ;  
  \end{scope}
  
    \begin{scope}[xshift=5.5cm]
   \tikzstyle{level 1}=[sibling distance=10mm]
   \tikzstyle{level 2}=[sibling distance=8mm]

 
 \node[xshift=-1cm] at (0,0) {$x_1 \mapsto$};
  \path[level distance=10mm,] 
  
          node(00a) at (0,0){$q_1$}
           child{
             node(000){$x_3$}
           }
 
  ;
   \end{scope}
   
   \begin{scope}[xshift=6.7cm]
   
   \path[level distance=10mm,]
      node(1b) at (0,0) {$f^{(3)}$}
       child{
        node(10){$x_2$}
       }
       child{
         node{$x_3$}
       }
  ;  
   \node[xshift=-1cm] at (1.8,0) {$\mapsto$};
  \path[level distance=10mm,] 
  
          node(00a) at (1.5,0){$q_2$}
           child{
             node(000){$x_2$}
           }
 
  ;
  
  \end{scope}

  \end{tikzpicture}
\end{center}
 \caption{Let $s$ resp.\ $s'$ be extended output information tree wrt.\ {\Out} resp.\ $\mathcal T$.
 Consider the result of $sync(s,s')$.
  Following {\Out} in direction 1, {\Out} and $\mathcal T$ are now on par; following {\Out} in direction 2 and 3, {\Out} is now ahead.
 }
 \label{subfig:syncb}
\end{subfigure}
\caption{Two examples of the application of the $\mathit{sync}$ function according to \cref{def:sync}.}
\label{fig:sync}
\end{figure}

\begin{definition}[Synchronization]\label{def:sync}
 Let $s \in T_{\SigmaO^{(k)}}(X)$ and $s' \in N_{\SigmaO^{(k)} \cup \SigmaI \cup Q^{(k)}}(X)$ be extensions of output information trees of {\Out} and $\mathcal T$, respectively.
 The result of $sync(s,s')$ is defined if for all $u \in \dom{s} \cap \dom{s'}$ with $\val{s}(u) \in \Gamma^{(k)}$ and $\val{s'}(u) \in \Gamma^{(k)}$ holds that $strip(\val{s}(u)) = strip(\val{s'}(u))$.
 Otherwise, the result of $sync(s,s')$ is undefined.
 
 If defined, the result is a mapping $\lambda: T_{\SigmaO^{(k)}}(X) \rightarrow N_{\SigmaO^{(k)} \cup \SigmaI \cup Q^{(k)}}(X)$ which is computed as follows.
 We define a partition of $\dom{s}$ into $D_1$ and $D_2$, where $D_1$ contains each node $u$ with $\val{s}(u) \in \Gamma^{(k)}$ and $strip(\val{s}(u)) = strip(\val{s'}(u))$.
 The set $D_1$ are the nodes from the greatest common prefix of $s$ and $s'$.
 From the set $D_2$ select those nodes that do not have a predecessor in $D_2$, say these are the nodes $u_1,\dots,u_n$.
 Note that $u_1,\dots,u_n$ are also part of $\dom{s'}$, because all parent nodes of $u_1,\dots,u_n$ are in $D_1$ which is a subset of $\dom{s} \cap \dom{s'}$ and their labels have the same arity in both $s$ and $s'$.
 We define the result of $sync(s,s')$ to be the function $\lambda$ that maps $s|_{u_i}$ to $s'|_{u_i}$ for each $i$.
\end{definition}

Note that the mapping $\lambda$ is always of the form $x \mapsto s'$ for some $x \in X$ and $s' \in N_{\SigmaO^{(k)} \cup \SigmaI \cup Q^{(k)}}(X)$, or $s \mapsto q(x)$  for some $s\in T_{\SigmaO^{(k)}}(X)$, $q \in Q$ and $x \in X$.
The former case means that $\mathcal T$ is ahead (or on par) compared to \Out\ and the latter case means that $\mathcal T$ is behind.
See the mappings given on the right-hand side in \cref{fig:sync} for examples.

The next two definitions are used in the construction of the parity tree automaton according to \cref{lemma:linear}.
The parity tree automaton we construct is nondeterministic; sometimes, we need to guess an output tree $s$ that we require both \Out\ and $\mathcal T$ to produce in the future with certain distance conditions.
The idea behind the definition is that $\mathcal T$ and {\Out} continue in divergent directions to produce the same output.
The distance of the origins of the output should be inside the given bound $k$.
Thus, we guess how much computation steps $\mathcal T$ and {\Out} take, respectively, to produce the output.
This information is given by the annotations.
If the sum of the annotations (for each output node) is below $k$, the distance of the origins is inside the origin bound $k$.

\begin{definition}[Duplicate with annotations]
 For $k \geq 0$ and $s \in T_{\SigmaO}$, let
  \begin{align*}
 (s_1,s_2) \in \mathrm{DUP}_k(s) \Longleftrightarrow & \ strip(s_1) = strip (s_2) = s, \text{ and}\\
 & \ \forall \thinspace u \in \dom{s} : \mathit{ann}(\val{s_1}(u)) + \mathit{ann}(\val{s_2}(u)) \leq k.
 \end{align*}
\end{definition}

Finally, after a computation step, the number of computation steps that a party can wait to catch up the output decreases by one.

\begin{definition}[Decrease]
 For an annotated tree $s$ let $\mathrm{dec}(s)$ be the tree $s$ where every annotation $(a)$ has been replaced by $(a-1)$ if $a > 1$, otherwise the result is not defined.
\end{definition}

\section{Construction}
\label{app:construction}

Let $\mathcal A$ denote the parity tree automata we construct for the proof of \cref{lemma:linear}.
The idea behind a run of $\mathcal A$ on ${\mathsf H}^{\mathsf\frown}{\mathsf t}^{\mathsf\frown}{\mathsf s}$ (which encodes $(t,s,o\colon \dom{s} \to \dom{t})$) is to guess a run of $\mathcal T$ on $t$ that produces $s$ with origin function $o'\colon \dom{s} \to \dom{t}$, such that the distance bound of $k$ between origins mappings $o$ and $o'$ is respected, and the correctness of the guess is verified while reading ${\mathsf H}^{\mathsf\frown}{\mathsf t}^{\mathsf\frown}{\mathsf s}$.
To achieve this, $\mathcal A$ simulates a run of $\mathcal T$ on $t$ and in each step compares the outputs of both \Out\ and $\mathcal T$.
Furthermore, $\mathcal A$ keeps track of the parts of the output that have not yet been produced by the other party together with distance information (in terms of output annotations) to indicate the number of steps the other party can take to catch up again.
If at some point the outputs are not the same or either \Out\ or $\mathcal T$ fails to catch up, then the tree ${\mathsf H}^{\mathsf\frown}{\mathsf t}^{\mathsf\frown}{\mathsf s}$ is rejected.

\subsection{Parity tree automaton}

We now present the construction of the above described parity tree automaton $\mathcal A$ parameterized by $\mathcal T$ and a given $k \geq 0$.

\subsubsection{States} 
The state set $Q_{\mathcal A}$ consists of sets of states made up from

\begin{center}
\mbox{$\underbrace{\mathcal S}_{\substack{\text{Set of states that store information,}\\\text{implicitly defined further below}}}$ $\cup$ $\underbrace{Q_{O}}_{\text{Validate output component}}$ $\cup$ $\underbrace{Q_{I}}_{\text{Validate input component}}$ $\cup$}\\
\vskip 1em
\noindent$\{q_{acc}\} \cup \{q_{err}\}$,
\end{center}
where $Q_{O} := \{ q_{out = t} \mid t \in T_{\SigmaO^{(k)}} \text{ and } h(t) \leq Mk\}$ and $Q_{I} := \{ q_{in=t}\mid t \in N_{\SigmaI}^k\}$.

\phantom{\quad}

The intuition behind the size of the trees used as states in $Q_{I}$ and $Q_{O}$ is that these states are used when origins diverge.
If origins diverge, say $v$ is their last common ancestor, then their distance grows with each computation step, meaning the knowledge of a $k$-neighborhood of the input tree around $v$ is sufficient to compute the remaining output, thus the height of the trees in $Q_{I}$ is bounded by the given origin bound $k$.
Also, since the remaining output has to be computed in at most $k$ steps, its height can be at most $Mk$, thus the height of the trees in $Q_{O}$ is bounded by $Mk$.

\subsubsection{Priorities}

If a set of states contains $q_{err}$, its priority is $1$, otherwise its priority is $0$.

\subsubsection{The next relation}
\label{app:next}

Before we are able to define the transitions, we need to further describe the state set wrt.\ $\mathcal S$.
The states of $\mathcal S$ store information of the form $(f,o,S_{O},S_{\mathcal T})$, where
\begin{itemize}
 \item $f \in T_{\SigmaIrk{i}}$ is the current input symbol,
 \item $o \in T_{out,i}$ is the current output choice ,
 \item $S_{O} \in S_{\Gamma^{(k)}}$ is the current output information tree wrt.\ {\Out}, and
 \item $S_{\mathcal T} \in N_{\SigmaO^{(k)} \cup \SigmaI \cup Q^{(k)} \cup Q}$ is the current output information tree wrt.\ $\mathcal T$.
\end{itemize}

In the following, we use $/$ as a tuple entry to indicate that there is currently no information.

We define the $\mathrm{next}$ relation that sets a state with information $(f,o,S_{O},S_{\mathcal T})$ in relation to possible successor states which is the heart of the construction.
To ease the understanding, we recall that an annotation at a $\Gamma$-labeled node indicates how much steps a party can take to produce this output, and an annotation at a $Q$-labeled node indicates the distance between two input nodes.

First, we give an informal step-by-step construction how to compute from $(f,o,S_{O},S_{\mathcal T})$ with $f \in \Sigma_i$ successor states $P_1,\dots,P_i$ such that $(f,o,S_{O},S_{\mathcal T}) \rightarrow_{\mathrm{next}} (P_1,\dots,P_i)$.
\begin{itemize}
 \item Extend the stored output $S_O$ of \Out\ by the current output $o$ and set new annotations.
 \item Extend the stored output $S_\mathcal T$ of $\mathcal T$ by choosing an extension $e$ according to the current input symbol $f$ and set new annotations.
 \item Remove the greatest common prefix of $S_O \cdot o$ and $e$ by applying the sync function.
 \item From the result $\lambda$ of sync obtain the information for \Out\ resp.\ $\mathcal T$ that is passed to the children.
 \item In case that \Out\ and $\mathcal T$ follow divergent directions, guess and verify (partial) input trees (resp.\ output trees with annotations) for these divergent directions.
 \cref{ex:next-2} will make this clear.
 \item Update annotations by applying the dec function to indicate that one computation step has been made.
\end{itemize}

It is helpful to consider the following examples before reading the formal definition.

\begin{example}\label{ex:next-1}
 Consider $(f,o,S_{O},S_{\mathcal T})$ and let the extension $S_{O} \cdot o^{(k)}$ be $s$ from \cref{subfig:synca} and the chosen extension of $S_\mathcal{T}$ and $f$ from $\mathrm{EXT}_{(k)}$ be $s'$ from \cref{subfig:synca}.
 The result $\lambda$ of $\mathit{sync}(s,s')$ is also given in the same figure.
 
 Assume that $f$ was binary, we have to compute states $P_1$ and $P_2$ which are passed to the children of the current node in ${\mathsf H}^{\mathsf\frown}{\mathsf t}^{\mathsf\frown}{\mathsf s}$.
 
 Following {\Out} in dir.~1, we see that {\Out} and $\mathcal T$ are on par, and $\mathcal T$ also continues to read in dir.~1., thus we let $P_1 = \{(/,/,S_{O}^1,S_{\mathcal T}^1))\}$ with $S_{O}^1 = \circ$ and $S_{\mathcal T}^1 = dec(q_1) = q_1$. 
 
 Following {\Out} in dir.~2, we see that $\mathcal T$ is ahead, and $\mathcal T$ also continues to read in dir.~2., thus we let $P_2 = \{(/,/,S_{O}^2,S_{\mathcal T}^2))\}$ with $S_{O}^2 = \circ$ and 

 $S_{\mathcal T}^2 = dec\left(
   \begin{tikzpicture}[thick,baseline=(current bounding box.center)]
  \tikzstyle{level 1}=[sibling distance=10mm]
   \tikzstyle{level 2}=[sibling distance=8mm]
   \tikzstyle{level 3}=[sibling distance=8mm]

  \path[level distance=10mm,]
      node(1b) at (0,0) {$f^{(5)}$}
       child{
        node(10){$q_2$}
       }
       child{
         node{$b^{(5)}$}
       }
  ;
  \end{tikzpicture}\right) =$ 
  \begin{tikzpicture}[thick,baseline=(current bounding box.center)]
  \tikzstyle{level 1}=[sibling distance=10mm]
   \tikzstyle{level 2}=[sibling distance=8mm]
   \tikzstyle{level 3}=[sibling distance=8mm]

  \path[level distance=10mm,]
      node(1b) at (0,0) {$f^{(4)}$}
       child{
        node(10){$q_2$}
       }
       child{
         node{$b^{(4)}$}
       }
  ;
  \end{tikzpicture}.
\end{example}

Consider an example that is more involved.

\begin{example}\label{ex:next-2}
 Consider $(f,o,S_{O},S_{\mathcal T})$ and let the extension $S_{O} \cdot o^{(k)}$ be $s$ from \cref{subfig:sync-b} and the chosen extension of $S_\mathcal{T}$ and $f$ from $\mathrm{EXT}_{(k)}$ be $s'$ from \cref{subfig:sync-b}.
 The result $\lambda$ of $\mathit{sync}(s,s')$ is also given in the same figure.
 
 Assume that $f$ was ternary, we have to compute states $P_1$, $P_2$ and $P_3$ which are passed to the children of the current node in ${\mathsf H}^{\mathsf\frown}{\mathsf t}^{\mathsf\frown}{\mathsf s}$.
 
 Following {\Out} in dir.~1, we see that {\Out} and $\mathcal T$ are on par, but $\mathcal T$ continues to read in dir.~3.
 Since the construction is such that we follow {\Out} in dir.~1, we have to guess how the input looks in dir. 3., thus we pick some $t_3 \in N_\Sigma^{k}$.
 As explained before, choosing some tree from $N_\Sigma^{k}$ suffices, because the distances of the origins of the eventually produced outputs will only increase.
 
 Thus, in dir.~1, we go to a state $P_1 = \{(/,/,S_{O}^1,S_{\mathcal T}^1)\}$ with $S_{O}^1 = \circ$ and 
 $S_{\mathcal T}^1 = dec\left( q_1^{(2+1)}(t_3)\right) = q_1^{(2)}(t_3)$, where the annotation indicates that {\Out} and $\mathcal T$ are now on divergent positions with distance 2.
 Furthermore, we have to verify that the guess of $t_3$ was correct, thus the state $P_3$ at dir.~3 contains $q_{in=t_3}$.
 
 Following {\Out} in dir.~2, we see that {\Out} is ahead and that $\mathcal T$ also continues in dir.~2.
 Further we note that {\Out}, while ahead, also conditions from dir.~3.
 This means that $\mathcal T$ eventually produces both outputs that {\Out} chooses to produce from dir.~2 and dir.~3 solely from dir.~2.
 A consequence of this is that the outputs that {\Out} eventually produces from dir.~3 will have divergent origins when produced by $\mathcal T$ from dir.~2.
 Thus, we guess an output tree $s_3 \in T_\Gamma$ with height at most $Mk$, since the origins diverge.
 Since both parties have to produce this output as of yet, we pick some $(s_3^{a},s_3^{b}) \in \mathrm{DUP}_{k}$ .
 Combining all these information, in dir. 2, we go to a state $P_2 = \{(/,/,S_{O}^2,S_{\mathcal T}^2)\}$ with $S_\mathcal T^2 = q_2$ and 
 
 $S_{O}^2 = dec\left(
   \begin{tikzpicture}[thick,baseline=(current bounding box.center)]
  \tikzstyle{level 1}=[sibling distance=10mm]
   \tikzstyle{level 2}=[sibling distance=8mm]
   \tikzstyle{level 3}=[sibling distance=8mm]

   \path[level distance=10mm,]
      node(1b) at (0,0) {$f^{(3)}$}
       child{
        node(10){$\circ$}
       }
       child{
         node{$s_3^{a}$}
       }
  ; 
  \end{tikzpicture}\right) =$ 
  \begin{tikzpicture}[thick,baseline=(current bounding box.center)]
  \tikzstyle{level 1}=[sibling distance=10mm]
   \tikzstyle{level 2}=[sibling distance=8mm]
   \tikzstyle{level 3}=[sibling distance=8mm]

   \path[level distance=10mm,]
      node(1b) at (0,0) {$f^{(2)}$}
       child{
        node(10){$\circ$}
       }
       child{
         node{$dec(s_3^{a})$}
       }
  ; 
  \end{tikzpicture}.
  The annotations in $dec(s_3^{a})$ indicate the number of computations steps that $\mathcal T$ may use to produce $s_3$.
  
  Also, we have to verify that {\Out} indeed produces $s_3$ from dir.~3 according to the annotations from $dec(s_3^b)$.
  Thus, in dir.~3, the state $P_3$ contains $q_{out=dec(s_3^b)}$.
  
  Note that for dir. 3\ we have defined that $P_3$ contains both $q_{in=t_3}$ and $q_{out=dec(s_3^b)}$, so we set $P_3 = \{q_{in=t_3},q_{out=dec(s_3^b)}\}$.
\end{example}

Recall that the result of the sync function, cf.~ \cref{def:sync}, is a mapping $\lambda$ which is always of the form $x \mapsto s'$ for some $x \in X$ and $s' \in N_{\SigmaO^{(k)} \cup \SigmaI \cup Q^{(k)}}(X)$, or $s \mapsto q(x)$ for some $s\in T_{\SigmaO^{(k)}}(X)$, $q \in Q$ and $x \in X$.
The former case means that $\mathcal T$ is ahead (or on par) compared to \Out\ and the latter case means that $\mathcal T$ is behind.
With this in mind, we formally define the \emph{next relation $\rightarrow_{\mathrm{next}}$} in \cref{def:next} on \cpageref{def:next}.
We make the assumption that all operations that were applied in the construction given in \cref{def:next} had defined results.

\begin{figure*}
    \begin{minipage}{\textwidth}
    \begin{definition}[$\rightarrow_{\mathrm{next}}$]\label{def:next}
    \noindent Let $(f,o,S_{O},S_{\mathcal T}) \rightarrow_{\mathrm{next}} (P_1,\dots,P_i)$ for $f \in \Sigma_i$ if 
    \begin{itemize}
     \item there exists an extension $e \in \mathrm{EXT}_{(k)}(S_{\mathcal T}(f))$ of the OIT $S_\mathcal T$ according to the current input symbol $f$,
     \item there exists $\lambda$ as a result of $sync(S_{O} \cdot o^{(k)},e)$, where $S_O \cdot o^{(k)}$ is the extension of the OIT $S_O$ by $o^{(k)}$ according to the current output choice $o$, and
     \item the states $P_1,\dots,P_i$ passed to the children of the current input node are build up from $P_1 = \emptyset,\dots,P_i = \emptyset$ according to $\lambda$ as specified in $(\dagger)$, where $(\dagger) := $ \label{dagger}

    \vskip 0.75em
    $\left\lmoustache
    \begin{aligned}
     &\texttt{\small \color{grayish} $\slash\slash\!\!$ deal with the children where $\mathcal T$ is ahead (or on par) compared to {\Out}}\phantom{\Big(}\\
     &\texttt{\small \color{grayish} $\slash\slash\!\!$ both {\Out} and $\mathcal T$ continue to read at the $j$th child}\\
     & \exists\thinspace j \thinspace \big(\lambda: x_j \mapsto w[q_1(x_{j_1}),\dots,q_m(x_{j_m}),\dots,q_n(x_{j_n})] \text{ with } j_m = j\big) \rightarrow \phantom{\Big(}\\
     &\texttt{\small \color{grayish} $\slash\slash\!\!$ chose and verify input neighborhoods for all other ($\neq j$)}\\
     &\texttt{\small \color{grayish} children that $\mathcal T$ also reads}\\ 
     &\phantom{t}\bigg(\exists\thinspace t_1,\dots,t_{m-1},t_{m+1},\dots,t_n \in N_{\SigmaI}^k : \bigwedge_{\ell \neq m} P_{j_{\ell}} \ni q_{in = t_{\ell}}\\
     &\texttt{\small \color{grayish} $\slash\slash\!\!$ set new OITs for the $j$th child}\\
     &\phantom{text}\wedge P_j \ni (/,/,S_{O}^j,S_{\mathcal T}^j) \text{ with } S_{O}^j = \circ, \text{ and } S_{\mathcal T}^j =\phantom{\Big(}\\
     & \texttt{\small \color{grayish} $\slash\slash\!\!$ at a child only it's own input is followed, use chosen}\\
     &\texttt{\small \color{grayish} input neighborhoods in the OIT for the other children to}\\
     &\texttt{\small \color{grayish} be able to further select $\mathcal T$s computation}\\
     & \phantom{text}\mathrm{dec}\big(w[q_1^{(2+1)}(t_1),\dots,q_{m-1}^{(2+1)}(t_{m-1}),q_m,q_{m+1}^{(2+1)}(t_{m+1}),\dots,q_n^{(2+1)}(t_n)]\big)\bigg)\\
     &\\
     &\texttt{\small \color{grayish} $\slash\slash\!\!$ only {\Out} continues to read at the $j$th child }\quad\texttt{\small \color{grayish} (similar to above)}\\
     & \exists\thinspace j \thinspace \big(\lambda: x_j \mapsto w[q_1(x_{j_1}),\dots,q_n(x_{j_n})] \text{ with  no } j_m = j\big) \rightarrow \\
     &\phantom{t}\bigg(\exists\thinspace t_1,\dots,t_n \in N_{\SigmaI}^k : \bigwedge_{\ell} P_{j_{\ell}} \ni q_{in = t_{\ell}}\thinspace \wedge P_j \ni (/,/,S_{O}^j,S_{\mathcal T}^j)\\
     &\phantom{text}\text{with } S_{O}^j = \circ, \text{ and } S_{\mathcal T}^j = \mathrm{dec}\big(w[q_1^{(2+1)}(t_1),\dots,q_n^{(2+1)}(t_n)]\big)\bigg)\\
     &\\
     &\texttt{\small \color{grayish} $\slash\slash\!\!$ deal with the children where {\Out} is ahead compared to $\mathcal T$}\\
     \end{aligned}
     \right.$
     \vspace{-1mm}
     \flushleft $\left\rmoustache \phantom{text} \right.$
    \end{itemize}
    \hspace{2cm}\textit{Def.\ is continued at \cpageref{def:nextcont}} 
    \end{definition}
    \end{minipage}
    \end{figure*}

    \begin{figure*}\label{def:nextcont}
    \begin{minipage}{\textwidth}
     $\left\rmoustache
     \begin{aligned}
     & \phantom{texttext}\text{ \textit{Def.\ continued from \cpageref{def:next}}}\\
     &\texttt{\small \color{grayish} $\slash\slash\!\!$ deal with the children where {\Out} is ahead compared to $\mathcal T$}\phantom{\Big(}\\
     &\texttt{\small \color{grayish} $\slash\slash\!\!$ {\Out} and $\mathcal T$ continue to read at the same child}\\
     & \exists\thinspace j \thinspace \big(\lambda: w[x_{j_1},\dots,x_{j_n}] \mapsto q(x_j) \text{ with } j_m = j \big) \rightarrow \phantom{\Big(}\\
     &\texttt{\small \color{grayish} $\slash\slash\!\!$ chose output trees for all other ($\neq j$) children that {\Out} also reads}\\
     &\phantom{t}\bigg(\exists\thinspace s_1,\dots,s_{m-1},s_{m+1},\dots,s_n \in T_{\SigmaO} : \bigwedge_{\ell \neq m} h(s_{\ell}) \leq Mk \thinspace\wedge\phantom{\big(}\\
     &\texttt{\small \color{grayish} $\slash\slash\!\!$ both {\Out} and $\mathcal T$ have to produce these, chose annotations that}\\
     &\texttt{\small \color{grayish} indicate how much steps each party can use}\\ 
     &\phantom{text} \exists\thinspace (s_1^a,s_1^b) \in \mathrm{DUP}_{(k)}(s_1),\dots,(s_{m-1}^a,s_{m-1}^b) \in \mathrm{DUP}_{(k)}(s_{m-1}),\\
     &\phantom{texttext}(s_{m+1}^a,s_{m+1}^b) \in \mathrm{DUP}_{(k)}(s_{m+1}),\dots,(s_{n}^a,s_{n}^b) \in \mathrm{DUP}_{(k)}(s_{n}):\\
     &\phantom{texttext}\bigwedge_{\ell \neq m} P_{j_{\ell}} \ni q_{out = \mathrm{dec}(s_{\ell}^b)}\texttt{\small \color{grayish}\  $\slash\slash\!\!$ verify that {\Out} produces the chosen}\\
     &\texttt{\small \color{grayish} output trees (annotations wrt.\ {\Out})}\\
     &\phantom{texttext}\wedge P_{j} \ni (/,/,S_{O}^{j},S_{\mathcal T}^{j})\texttt{\small \color{grayish}\  $\slash\slash\!\!$ set new OITs for the $j_1$th child\ }\phantom{\Big(} \\
     &\texttt{\small \color{grayish} $\slash\slash\!\!$ use chosen output trees (annotations wrt.\ $\mathcal T$) in the OIT to be}\\
     &\texttt{\small \color{grayish} able to further synchronize this with $\mathcal T$s computation}\\ 
     &\phantom{texttextte}\text{with } S_{\mathcal T}^{j} = q \text { and } S_{O}^{j} = \mathrm{dec}\big(w[s_1^a,\dots,s_{m-1}^a,\circ,s_{m+1}^a,s_n^a]\big)\bigg)\\
     &\\
     &\texttt{\small \color{grayish} $\slash\slash\!\!$ {\Out} and $\mathcal T$ share no read direction}\quad\texttt{\small \color{grayish} (similar to above)}\\
    & \exists\thinspace j \thinspace \big(\lambda: w[x_{j_1},\dots,x_{j_n}] \mapsto q(x_j) \text{ with no } j_m = j \\
    &\phantom{t}\bigg(\exists\thinspace s_1,\dots,s_n \in T_{\SigmaO} : \bigwedge_{\ell} h(s_{\ell}) \leq Mk \thinspace\wedge\\
     &\phantom{text} \exists\thinspace (s_1^a,s_1^b) \in \mathrm{DUP}_{(k)}(s_1),\dots,(s_{n}^a,s_{n}^b) \in \mathrm{DUP}_{(k)}(s_{n}):\\
     &\phantom{textte} \bigwedge_{\ell} P_{j_{\ell}} \ni q_{out = \mathrm{dec}(s_{\ell}^b)} \wedge P_{j} \ni (/,/,S_{O}^{j_1},S_{\mathcal T}^{j_1}) \text{ with }\\
     &\phantom{texttextte} S_{\mathcal T}^{j} = q \text{ and } S_{O}^{j} = \mathrm{dec}\big(w[s_1^a,\dots,s_n^a]\big)\bigg)
    \end{aligned}
    \right.$
    \end{minipage}
    \end{figure*}

%

We define the following auxiliary relation.

\begin{definition}[$\mathrm{NEXT}$]
For $S \in \mathcal{S}$, let 
\[\mathrm{NEXT}(S) = \{ (P_1,\dots,P_i) \mid S \rightarrow_{\mathrm{next}} (P_1,\dots,P_i)\} \cup \{(\{q_{err}\})^i\}\]
be the set of possible successor states.
\end{definition}

The idea of the construction for the parity automaton is that the automaton guesses an application of the next relation (which means guessing an application of a rule from $\mathcal T$ and comparing the information with the output choice of {\Out} defined in ${\mathsf H}^{\mathsf\frown}{\mathsf t}^{\mathsf\frown}{\mathsf s}$) such that the origins obtained from computations of $\mathcal T$ and {\Out} never have a distance of more than $k$.

With all definitions in place, we are ready to present the construction of $\Delta_{\mathcal A}$ that implements the idea described in the paragraph before.
The automaton only checks properties on the part of ${\mathsf H}^{\mathsf\frown}{\mathsf t}^{\mathsf\frown}{\mathsf s}$ that are relevant for $t$ and $s$, and collect the needed information along the way.
However, before we do so, we take a closer look at one special case that can occur.

\begin{example}\label{ex:special-case}
In order to compute a pair such that $(f,o,S_{O},S_{\mathcal T}) \rightarrow_{\mathrm{next}} (P_1,\dots,P_i)$, assume that the result of the sync function used to compute this pair contains 
\begin{tikzpicture}[thick,baseline=(current bounding box.center)]
  \tikzstyle{level 1}=[sibling distance=10mm]
   \tikzstyle{level 2}=[sibling distance=8mm]
   \tikzstyle{level 3}=[sibling distance=8mm]
   \path[level distance=10mm,]
      node(1b) at (0,0) {$g^{(6)}$}
       child{
        node(10){$x_2$}
       }
       child{
         node{$c^{(3)}$}
       }
  ;  
   \node[xshift=-1cm] at (1.8,0) {$\mapsto$};
  \path[level distance=10mm,] 
  
          node(00a) at (1.5,0){$q$}
           child{
             node(000){$x_1$}
           }
 
  ;
  \end{tikzpicture}.
  We see that {\Out} is ahead compared to $\mathcal T$, and that they do not share a read direction.
  The construction given in \cref{def:next} for last case is applied.
  That means we follow $\mathcal T$ in dir.~1 and guess an output tree that {\Out} has to produce from dir.~2 and verify that both {\Out} and $\mathcal T$ will eventually produce the chosen output tree.
  Therefore, an output tree is chosen, for example $b \in T_\Gamma$, and a pair from $\mathrm{DUP}_{(k)}(b)$, for example $(b^{(2)},b^{(3)})$.
  Then, in dir.~2, the state $P_2$ that contains $q_{out=dec(b^{(3)})} = q_{out=b^{(2)}}$ is reached, and in dir.~1 the state $P_1$ that contains $(/,/,S_O^1,S_\mathcal T^1)$ is reached with $S_\mathcal T^1 = q$ and $S_O^1 = dec\left ( 
  \begin{tikzpicture}[thick,baseline=(current bounding box.center)]
  \tikzstyle{level 1}=[sibling distance=10mm]
   \tikzstyle{level 2}=[sibling distance=8mm]
   \tikzstyle{level 3}=[sibling distance=8mm]
   \path[level distance=10mm,]
      node(1b) at (0,0) {$g^{(6)}$}
       child{
        node(10){$b^{(2)}$}
       }
       child{
         node{$c^{(3)}$}
       }
  ;  
  \end{tikzpicture}
  \right) =$
    \begin{tikzpicture}[thick,baseline=(current bounding box.center)]
  \tikzstyle{level 1}=[sibling distance=10mm]
   \tikzstyle{level 2}=[sibling distance=8mm]
   \tikzstyle{level 3}=[sibling distance=8mm]
   \path[level distance=10mm,]
      node(1b) at (0,0) {$g^{(5)}$}
       child{
        node(10){$b^{(1)}$}
       }
       child{
         node{$c^{(2)}$}
       }
  ;  
  \end{tikzpicture}.
  
 According to the construction of the transitions of $\mathcal A$ presented below, the following can happen.
 Recall that ${\mathsf H}^{\mathsf\frown}{\mathsf t}^{\mathsf\frown}{\mathsf s}$ is identified with some input tree $t$, output tree $s$, and corresponding origin mapping $o$.
 Consider a run of $\mathcal A$ on ${\mathsf H}^{\mathsf\frown}{\mathsf t}^{\mathsf\frown}{\mathsf s}$, and assume that a node will be reached that corresponds to some $v \in \dom{t}$ and $\mathcal A$ is in a state that contains $(f,o,S_{O},S_{\mathcal T})$.
 Assume that in order to continue the run, the successor states $P_1,\dots,P_i$ from this example are used.
 From this node, following the strategy annotations of {\In}, $\mathcal A$ will reach a node in ${\mathsf H}^{\mathsf\frown}{\mathsf t}^{\mathsf\frown}{\mathsf s}$ that corresponds to $v1 \in \dom{t}$.
 This node will have a strategy annotation for {\Out}, however this strategy annotation has no relevance for the the output tree $s$, because only strategy annotations for {\Out} that are reachable from nodes in ${\mathsf H}^{\mathsf\frown}{\mathsf t}^{\mathsf\frown}{\mathsf s}$ that corresponds to (descendants of) $v2 \in \dom{t}$ contribute further to $s$.
 The described node that corresponds to $v1 \in \dom{t}$ will be reached with a state that contains some $(f',o',S_{O}^1,S_{\mathcal T}^1)$ with $S_{O}^1$ and $S_{\mathcal T}^1$ as above and $o'$ is the new strategy annotation for {\Out}.
 To continue the run of $\mathcal A$, we chose successor states according to the next relation. To compute these, as formally defined, $S_{O}^1 \cdot o'^{(k)}$ is used.
 Recall that $S_{O}^1 \cdot o'^{(k)}$ is $S_O^1$, because $S_O^1 = g^{(5)}(b^{(1)},c^{(2)}) \in T_{\Gamma^{(k)}}$.
 We see that the irrelevant strategy annotation plays no role in the computation of the successor states.
 Since this case is rather hidden below, as it is covered as a special case of the statements in the upcoming correctness proof, we explicitly showed this example here.
 
 The situation that $\mathcal A$ reaches a node of ${\mathsf H}^{\mathsf\frown}{\mathsf t}^{\mathsf\frown}{\mathsf s}$ (with a meaningful state), in which the strategy annotation of {\Out} is irrelevant, happens because for the special case where {\Out} is ahead compared to $\mathcal T$ and they do not share a read direction, we follow $\mathcal T$ (instead of {\Out} which we follow otherwise).
 This is only done because it simplifies the correctness proof of the following construction in this section.
\end{example}
 
\subsubsection{Transitions}

As we have seen in \cref{ex:next-2}, it can happen that more than one state is passed to a direction, then we follow each state as described below in the construction of $\Delta_{\mathcal A}$ and implicitly assume that we go into a set of states.

Recall that the parity tree automaton reads nodes of ${\mathsf H}^{\mathsf\frown}{\mathsf t}^{\mathsf\frown}{\mathsf s}$.
Hence, as alphabet of the parity tree automaton, we use triples.
The first entry is the label from $\mathsf H$, the second and third entry are the strategy annotations for {\In} and {\Out}, respectively.
We assume the strategy annotations are the number of the child that is chosen.
We use $\#$ in the second resp.\ third entry to indicate that there is no annotation for either {\In} or {\Out}.

\paragraph*{Construction of Delta wrt.\ stored output information trees}

\begin{itemize}
 \item We define $q_{in} := (/,/,\circ,q_0)$ to be the initial state.
 
 \item For $(/,/,S_O,S_{\mathcal T}) \in \mathcal S$, $f \in \SigmaIrk{i}$ and $j \in \{1,\dots,|T_{out,i}|\}$, we add
 \[\bigg((/,/,S_O,S_{\mathcal T}),(f,\#,j),q_{acc},\dots,q_{acc},\underbrace{(f,/,S_O,S_{\mathcal T})}_{j\text{th child}},q_{acc},\dots,\underbrace{q_{acc}}_{|T_{out,i}| \text{th child}}\bigg) \text{ to } \Delta_{\mathcal A}.\]
  
 \quad\emph{We collect the current input symbol and follow the choice of {\Out}.}
 
 \item For $(f,/,S_O,S_{\mathcal T}) \in S$, $f \in \SigmaIrk{i}$ and $o \in T_{out,i}$, we add \[\bigg((f,/,S_O,S_{\mathcal T}),(o,\#,\#),p_1,\dots,p_i\bigg) \text{ to } \Delta_{\mathcal A} \text{ for each } (p_1,\dots,p_i) \in \mathrm{NEXT}(f,o,S_O,S_{\mathcal T}).\]
 
 \quad\emph{We collect the current output choice and compute the successor states according to the next relation which guesses an application of rule from $\mathcal T$ as defined above.}
 
 \item For $(/,/,S_O,S_{\mathcal T}) \in \mathcal S$, $\ell \in (\{\varepsilon\} \cup \{1,\dots,m\})$ and $j \in \{1,\dots,|\SigmaI|\}$, we add
  \[\bigg((/,/,S_O,S_{\mathcal T}),(\ell,j,\#),q_{acc},\dots,q_{acc},\underbrace{(/,/,S_O,S_{\mathcal T})}_{j\text{th child}},q_{acc},\dots,\underbrace{q_{acc}}_{|\SigmaI|\text{th child}}\bigg) \text{ to } \Delta_{\mathcal A}.\]
  
  \quad\emph{We follow the choice of {\In}.}
\end{itemize}

For ease of presentation, we speak of collecting (input and output) information along the tree and choosing successor states as ``one step'' of $\mathcal A_k$.

\phantom{\quad}

The next two parts are the constructions used to verify the input resp.\ the output guesses.

\paragraph*{Construction of Delta wrt.\ input guesses}

\begin{itemize}
 \item For $q_{in=t} \in Q_I$, $f \in \SigmaIrk{i}$ and $j \in \{1,\dots,|T_{out,i}|\}$, we add
 \[\bigg(q_{in=t},(f,j,\#),q_{acc},\dots,q_{acc},\underbrace{q_{in=t}}_{j\text{th child}},q_{acc},\dots,\underbrace{q_{acc}}_{|T_{out,i}| \text{th child}}\bigg) \text{ to } \Delta_{\mathcal A} \text{ if } \val{t}(\varepsilon) = f.\] 
 
 \quad\emph{We check the correctness of (the root node of) the guess.}
 
   \phantom{\quad}
 
  \item For $q_{in=f} \in Q_I$, $f \in \SigmaIrk{i}$ and $j \in \{1,\dots,|T_{out,i}|\}$, we add
 \[\bigg(q_{in=f},(f,\#,j),q_{acc},\dots,\underbrace{q_{acc}}_{|T_{out,i}| \text{th child}}\bigg) \text{ to } \Delta_{\mathcal A}.\]
 
 \quad\emph{We follow the choice of {\Out}.}

  \phantom{\quad}
 
 \item For $q_{in=t} \in Q_I$ with $\val{t}(\varepsilon) \in \SigmaIrk{i}$ and $o \in T_{out,i}$, we add
  \[\bigg(q_{in=t},(o,\#,\#),q_{in=t|_1},\dots,q_{in=t|_i}\bigg) \text{ to } \Delta_{\mathcal A}.\]
  
  \quad\emph{We pass on the remainder of the guess.}

   \phantom{\quad}
 
 \item For $q_{in=t} \in Q_I$, $\ell \in (\{\varepsilon\} \cup \{1,\dots,m\})$ and $j \in \{1,\dots,|\SigmaI|\}$, we add
  \[\bigg( q_{in=t},(\ell,j,\#),q_{acc},\dots,q_{acc},\underbrace{ q_{in=t}}_{j\text{th child}},q_{acc},\dots,\underbrace{q_{acc}}_{|\SigmaI|\text{th child}}\bigg) \text{ to } \Delta_{\mathcal A}.\]
  
  \quad\emph{We follow the choice of {\In}.}
\end{itemize}

\paragraph*{Construction of Delta wrt.\ output guesses}

\begin{itemize}
 \item For $q_{out=t} \in Q_O$, $f \in \SigmaIrk{i}$ and $j \in \{1,\dots,|T_{out,i}|\}$, we add
 \[\bigg(q_{out=t},(f,\#,j),q_{acc},\dots,q_{acc},\underbrace{q_{out=t}}_{j\text{th child}},q_{acc},\dots,\underbrace{q_{acc}}_{|T_{out,i}| \text{th child}}\bigg) \text{ to } \Delta_{\mathcal A}.\]
 
 \quad\emph{We follow the choice of {\Out}.}

 \phantom{\quad}
 
 \item For $q_{out=t} \in Q_O$ and $o \in T_{out,i}$ such that $sync(o,t)$ is defined and $dom(\lambda) \neq \emptyset$, we add
 \[\bigg(q_{out=t},(o,\#,\#),q_1,\dots,q_i\bigg) \text{ to } \Delta_{\mathcal A}, \text{ where } q_j \text{ is } q_{out=dec(w)}, \]
 if $\lambda: w \mapsto x_j$ and $dec(w)$ is valid, otherwise $q_j = q_{err}$ for all $1 \leq j \leq i.$
 
\phantom{\quad}

 \quad\emph{We check the correctness of a part of the guess.}

 \phantom{\quad}
 
 \item For $q_{out=t} \in Q_O$ and $o \in T_{out,i}$ such that $sync(o,t)$ is defined and $dom(\lambda) = \emptyset$, we add
 \[\bigg(q_{out=t},(o,\#,\#),q_{acc},\dots,\underbrace{q_{acc}}_{i\text{th child}}\bigg) \text{ to } \Delta_{\mathcal A}.\]
 
 \quad\emph{We have successfully verified the guess.}

 \phantom{\quad}
 
 \item For $q_{out=t} \in Q_O$, $\ell \in (\{\varepsilon\} \cup \{1,\dots,m\})$ and $j \in \{1,\dots,|\SigmaI|\}$, we add
  \[\bigg( q_{out=t},(\ell,j,\#),q_{acc},\dots,q_{acc},\underbrace{ q_{out=t}}_{j\text{th child}},q_{acc},\dots,\underbrace{q_{acc}}_{|\SigmaI|\text{th child}}\bigg) \text{ to } \Delta_{\mathcal A}.\]
  
  \quad\emph{We follow the choice of {\In}.}

  \phantom{\quad}
\end{itemize}

\section{Correctness of the construction}
\label{app:correctness}

We prove our key lemma for linear top-down tree transductions, that is, \cref{lemma:linear}.
First, we prove the direction from left to right.

\begin{figure}[p]
\begin{subfigure}{\textwidth}
\begin{center}
  \begin{tikzpicture}[thick,baseline=(current bounding box.north),scale=1]
  
  \tikzstyle{zigzag}=[decoration = {zigzag,segment length = 5mm, amplitude = 0.5mm},decorate]
  
   \node (r0i) at ( -5.0,  0.0) {}; 
   \node (s0i) at (-7.0, -5.0) {}; 
   \node (s1i) at ( -3.0, -5.0) {}; 
   \node (sii) at ( -5.5, -3.0) {}; 
   
   \node at ($(r0i) + (-1.5,0)$) {\small input tree $t$}; 
   
   \node (p) at ( -5.4, -2) {}; 
   \node (u2) at ( -5, -3.0) {}; 

   \node (r0) at ( 0.0,  0.0) {}; 
   \node (s0) at (-2.0, -5.0) {}; 
   \node (s1) at ( 2.0, -5.0) {}; 
   \node (si) at ( 0.3, -2.0) {}; 
   
   \node at ($(si) + (-0.15,-0.7)$) {\tiny\color{black} $S_\mathcal T$}; 
   
   \node at ($(r0) + (-1.5,0)$) {\small output tree $s$}; 
   \node at ($(r0) + (-2,-0.3)$) {\scriptsize (output tree $s_\rho$ embedded)}; 
   
   \node (l) at ($(r0)!0.6!(s0)$) {};
   \node (r) at ($(r0)!0.6!(s1)$) {};
   
   \node (li) at ($(si) + (-0.5,-1)$) {};
   \node (ri) at ($(si) + ( 0.7,-1)$) {};
   \node (c) at ($(li)!0.4!(ri)$) {};
   
   \node (c2) at ($(li)!0.8!(ri)$) {};

   \path[draw] (r0.center)--(s0.center);
   \path[draw] (s0.center)--(s1.center);
   \path[draw] (s1.center)--(r0.center);
   
   \path[draw] (r0i.center)--(s0i.center);
   \path[draw] (s0i.center)--(s1i.center);
   \path[draw] (s1i.center)--(r0i.center);

   \path[draw,line width=0.5pt] (si.center)--(li.center);
   \path[draw,line width=0.5pt] (si.center)--(ri.center);
  
   \draw[color=black, line width=0.5pt, zigzag]
   (r0.center) to [out=-95, in=90] (si.center);

   \draw[color=black, line width=0.5pt, zigzag]
   (si.center) to [out=-95, in=90] (c.center);
   
   \draw[color=black, line width=0.5pt, zigzag]
   (si.center) to [out=-95, in=90] (c2.center);
   
   \draw[color=black, line width=0.5pt, zigzag]
   (r0i.center) to [out=-95, in=90] (sii.center);
   
      \draw[color=black, line width=0.5pt, zigzag]
   (p.center) to [out=-95, in=90] (u2.center);
   

  \path (c) edge [bend left=25, ->, >=stealth, above, color = gray,line width=0.5pt] node (text) {} (sii);
   \node[yshift=0.8cm] at (text) {\scriptsize\color{black}for the unique $x$};
   \node[yshift=0.55cm] at (text) {\scriptsize\color{black}with $\val{S_\mathcal T}(x) \in Q$:};
   \node[yshift=0.25cm] at (text) {\scriptsize{\color{black}\ \ $\varphi(vx) = u$ in $\rho$}};

  \path (c2) edge [bend left=40, ->, >=stealth, below, near start, color = gray,line width=0.5pt] node (text2) {} (u2);
   \node[xshift=0.3cm,yshift=-0.25cm] at (text2) {\scriptsize{\color{black}$\varphi(v\bar x) = \bar u$ in $\rho$ for $\bar x$}};
 \node[xshift=0.5cm,yshift=-0.5cm] at (text2) {\scriptsize{\color{black} with $\val{S_\mathcal T}(\bar x) \in Q^{(k)}\setminus Q$}};  
   
   \draw[color=black, line width=0.5pt]
   (l.center) to [out=340, in=190] (li.center) to [out=10, in=180] (c.center) to [out=0, in=190] (ri.center) to [out=10, in=200] (r.center);


    \draw[color=black!20!white, fill=black!20!white, line width=0.5pt]   (sii) circle (.15);
    \node[] at (sii) {\footnotesize\color{black}$u$};
    
    \draw[color=black!20!white, fill=black!20!white, line width=0.5pt]   (u2) circle (.15);
    \node[] at (u2) {\footnotesize\color{black}$\bar u$};
    
   \draw[color=black!20!white, fill=black!20!white, line width=0.5pt]   (p) circle (.15);

    \draw[color=black!20!white, fill=black!20!white, line width=0.5pt]   (si) circle (.15);
    \node[] at (si) {\footnotesize\color{black}$v$};

    \draw[color=black!20!white, fill=black!20!white, line width=0.5pt]   (c) circle (.15);
    \node[] at (c) {\footnotesize\color{black}$vx$};

    \draw[color=black!20!white, fill=black!20!white, line width=0.5pt]   (c2) circle (.15);
    \node[] at (c2) {\footnotesize\color{black}$v\bar x$};
   

%


   
   \begin{scope}[on background layer]
    \fill[green!20!white,on background layer] (r0) -- (l.center) to [out=340, in=190] (li.center) to [out=10, in=180] (c.center) to [out=0, in=190] (ri.center) to [out=10, in=200] (r.center) -- (r0.center) -- cycle;
   \end{scope}

\end{tikzpicture}
  \end{center}
  \caption{Situation where $S_O$ is $\circ$ and $S_\mathcal T$ is of the form $w[q_1^{(a_1)}(t_1),\dots,q,\dots,q_n^{(a_n)}(t_n)]$, where $w$ is a context over $\Gamma^{(k)}\setminus \Gamma$, $q \in Q$, $q_i^{(a_i)} \in Q^{(k)}\setminus Q$, and $t_i \in N_\Sigma$ is an input neighborhood for all $i$. This indicates that $\mathcal T$ is ahead compared to {\Out}.
  The tree $strip(w)[q_1,\dots,q,\dots,q_n]$ is located at $v$ in $s_\rho$.
  For the $x \in \dom{S_\mathcal T}$ with $\val{S_\mathcal T}(x) = q$ holds $\varphi(vx) = u$, for an $\bar x \in \dom{S_\mathcal T}$ with $\val{S_\mathcal T}(\bar x) \in Q^{(k)}\setminus Q$, say $q_i^{(a_i)}$, and $\varphi(v\bar x) = \bar u$ holds that the distance between $u$ and $\bar u$ is $a_i$ and $t_i$ is located at $\bar u$ in $t$.}
  \label{subfig:sync-a}
\end{subfigure}

\begin{subfigure}{\textwidth}
\vspace{1em}
\begin{center}
   \begin{tikzpicture}[thick,baseline=(current bounding box.north),scale=1]
  
  \tikzstyle{zigzag}=[decoration = {zigzag,segment length = 5mm, amplitude = 0.5mm},decorate]
  
   \node (r0i) at ( -5.0,  0.0) {}; 
   \node (s0i) at (-7.0, -5.0) {}; 
   \node (s1i) at ( -3.0, -5.0) {}; 
   \node (sii) at ( -5.5, -3.0) {}; 
   
   \node at ($(r0i) + (-1.5,0)$) {\small input tree $t$}; 

   \node (r0) at ( 0.0,  0.0) {}; 
   \node (s0) at (-2.0, -5.0) {}; 
   \node (s1) at ( 2.0, -5.0) {}; 
   \node (si) at ( 0.3, -3.0) {}; 
   
   \node at ($(si) + (-0.15,-0.7)$) {\tiny\color{black} $S_O$}; 
   
   \node at ($(r0) + (-1.5,0)$) {\small output tree $s$}; 
   \node at ($(r0) + (-2,-0.3)$) {\scriptsize (output tree $s_\rho$ embedded)}; 
   
   \node (l) at ($(r0)!0.6!(s0)$) {};
   \node (r) at ($(r0)!0.6!(s1)$) {};
   
   \node (li) at ($(si) + (-0.5,-1)$) {};
   \node (ri) at ($(si) + ( 0.5,-1)$) {};
   \node (c) at ($(li)!0.5!(ri)$) {};

   \path[draw] (r0.center)--(s0.center);
   \path[draw] (s0.center)--(s1.center);
   \path[draw] (s1.center)--(r0.center);
   
   \path[draw] (r0i.center)--(s0i.center);
   \path[draw] (s0i.center)--(s1i.center);
   \path[draw] (s1i.center)--(r0i.center);

   \path[draw,line width=0.5pt] (si.center)--(li.center);
   \path[draw,line width=0.5pt] (si.center)--(ri.center);
  
   \draw[color=black, line width=0.5pt, zigzag]
   (r0.center) to [out=-95, in=90] (si.center);

   \draw[color=black, line width=0.5pt, zigzag]
   (r0i.center) to [out=-95, in=90] (sii.center);
   
  \path (si) edge [bend left=25, ->, >=stealth, above, color = gray,line width=0.5pt] node (text) {\scriptsize{\color{black}\ $\varphi(v) = u$ in $\rho$}} (sii);
   
   \draw[color=black, line width=0.5pt]
   (l.center) to [out=330, in=180] (si.center) to [out=10, in=200] (r.center);
   
   \draw[color=black, line width=0.5pt]
   (li.center) to [out=10, in=190] (ri.center);


    \draw[color=black!20!white, fill=black!20!white, line width=0.5pt]   (sii) circle (.15);
    \node[] at (sii) {\footnotesize\color{black}$u$};

    \draw[color=black!20!white, fill=black!20!white, line width=0.5pt]   (si) circle (.15);
    \node[] at (si) {\footnotesize\color{black}$v$};

   \begin{scope}[on background layer]
    \fill[green!20!white,on background layer] (r0) -- (l.center) to [out=330, in=180] (si.center) to [out=10, in=200] (r.center) -- (r0.center) -- cycle;
   \end{scope}

\end{tikzpicture}
\end{center}
 \caption{Situation where $S_O$ is a (special) tree (which is not $\circ$) over $\Gamma^{(k)}$ and $S_\mathcal T \in Q$. This indicates that {\Out} is ahead compared to $\mathcal T$.
 The tree $S_O$, without annotations and without $\circ$, is located at $v$ in $s$. The tree $S_\mathcal T$ is located in $s_\rho$ at $v$.
 }
 \label{subfig:sync-b}
\end{subfigure}
\caption{Visualization of the proof of \cref{lemma:proof-left-to-right}.
In the run of $\mathcal A$ on ${\mathsf H}^{\mathsf\frown}{\mathsf t}^{\mathsf\frown}{\mathsf s}$, a node corresponding to some $u \in \dom{t}$ with information $(f,o,S_O,S_\mathcal T)$ is reached.
There is a (partial) run $\rho$ of $\mathcal T$ on $t$ with (partial output tree) $s_\rho$ such that $s_\rho$ (partially) defines the output tree $s$ (this is drawn as the green part of $s$) and there exists a unique $v \in \dom{s}$ such that the current configuration of $\mathcal T$ satisfies \cref{c:1,c:2,c:3,c:4,c:5,c:6}; some of these conditions are visualized.
}
\label{fig:proof}
\end{figure}

\begin{lemma}\label{lemma:proof-left-to-right}
 ${\mathsf H}^{\mathsf\frown}{\mathsf t}^{\mathsf\frown}{\mathsf s} \in L(\mathcal A) \Rightarrow (t,s,o) \in_k R_o(\mathcal T) \text{ and } o \text{ is a linear transduction}$
\end{lemma}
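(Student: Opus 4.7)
The plan is to extract from an accepting run of $\mathcal A$ on $\mathsf{H}^{\mathsf\frown}\mathsf{t}^{\mathsf\frown}\mathsf{s}$ a valid run $\rho$ of $\mathcal T$ on the encoded input tree $t$ that produces the encoded output tree $s$ with an origin function $o_\rho$ satisfying $\mathit{dist}(o(i), o_\rho(i)) \le k$ for all $i \in \dom{s}$; this directly gives $(t,s,o) \in_k R_o(\mathcal T)$. Linearity of $\mathcal T$ will be used to guarantee that $o_\rho$ is well-defined and itself linear.

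The heart of the argument is a loop invariant, maintained along the accepting run, that ties each meaningful state $(f,o,S_O,S_\mathcal T)$ reached by $\mathcal A$ at a node $\nu$ of ${\mathsf H}^{\mathsf\frown}{\mathsf t}^{\mathsf\frown}{\mathsf s}$ (corresponding to some $u \in \dom{t}$) to a partial run $\rho$ of $\mathcal T$ producing a partial output $s_\rho$ that is an embedded prefix of $s$. Concretely, the invariant separates the two cases illustrated in Figure~\ref{fig:proof}: if $S_O = \circ$ and $S_\mathcal T = w[q_1^{(a_1)}(t_1),\dots,q,\dots,q_n^{(a_n)}(t_n)]$, then $\rho$ has produced $\mathit{strip}(w)$ at some $v \in \dom{s}$, the unmarked state $q$ sits at the position $vx$ whose $\rho$-configuration currently reads $u$, every annotated state $q_i^{(a_i)}$ sits at $v\bar x$ with $\rho$-configuration reading a node $\bar u$ at distance exactly $a_i$ from $u$, and $t_i$ matches the $k$-neighborhood of $t$ around $\bar u$; dually, if $S_O$ is a nontrivial special tree and $S_\mathcal T \in Q$, then $\mathcal T$ is behind, $\rho$'s active configuration reads $u$, and $\mathit{strip}(S_O)$ (with $\circ$ removed) equals the committed $s$-output between $v$ and the $\circ$-hole, with annotations upper-bounding the number of steps $\rho$ still has to catch up.

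I would then argue by induction along the accepting run, doing a case analysis on which clause of the definition $(\dagger)$ of $\rightarrow_{\mathrm{next}}$ fires at $\nu$. Each clause encodes exactly one rule of $\mathcal T$ via the chosen extension $e\in\mathrm{EXT}_{(k)}(S_\mathcal T(f))$ and a synchronization $\lambda = \mathit{sync}(S_O\cdot o^{(k)},e)$, which lets me extend $\rho$ by one rule application and update $(s_\rho,o_\rho)$ while preserving the invariant. The auxiliary sub-runs in states $q_{in=t_\ell}$ and $q_{out=\mathrm{dec}(s_\ell^b)}$ are exactly what is needed to glue the local extensions into a globally consistent $\rho$: the $q_{in=\cdot}$ branches verify that the guessed input neighborhoods agree with the real $t$, and the $q_{out=\cdot}$ branches verify that {\Out} will eventually commit to the guessed duplicated output, with the $\mathrm{dec}$ operation forcing this to happen within the allotted number of steps. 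Since every annotation appearing in $S_O$ or $S_\mathcal T$ is at most $k$ and strictly decreases on each step (or is refreshed by a fresh extension), $\rho$ is forced to synchronize with {\Out} within $k$ steps of every divergence, which is precisely the statement $\mathit{dist}(o(i),o_\rho(i)) \le k$.

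The main obstacle I expect is twofold. First, the situation in Example~\ref{ex:special-case}, where $\mathcal A$ follows $\mathcal T$ rather than {\Out}, produces subtrees of $\mathsf H^{\mathsf\frown}\mathsf t^{\mathsf\frown}\mathsf s$ whose {\Out}-annotations are irrelevant to $s$; I will need to show that the stored special tree $S_O$ already fixes the relevant output fragment of $s$ via the surrounding $q_{out=\cdot}$ verifications, so the invariant still yields a well-defined embedding of $s_\rho$ into $s$. Second, because rules of $\mathcal T$ may split into several successor states, the origins of two sibling $q_i^{(a_i)},q_j^{(a_j)}$ in $S_\mathcal T$ can lie on divergent input paths from $u$; the invariant has to be flexible enough to track the shared prefix of these paths (of length at most $k$) and reconcile it with the input directions actually chosen by $\mathsf t$ further down. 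Handling this cleanly requires the $q_{in=t_\ell}$ certificates for all directions other than the one $\mathsf t$ picks, and the $k$-bound on neighborhoods is exactly what makes these finitely many and hence storable in $\mathcal A$'s state. Once the invariant is established at every node reached by the accepting run, assembling the local $\rho$-extensions yields the desired full run of $\mathcal T$ on $t$ with output $s$ and origin function $o_\rho$ at distance at most $k$ from $o$, concluding the proof.
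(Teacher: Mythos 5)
Your proposal matches the paper's proof essentially step for step: an induction along the accepting run (the paper phrases it as induction on the levels of $t$) maintaining an invariant that embeds a partial run $\rho$ of $\mathcal T$ into $s$ at a node $v$, distinguishing the two cases of \cref{fig:proof}, using the $q_{in=\cdot}$ and $q_{out=\cdot}$ branches to certify the guessed neighborhoods and duplicated outputs, and deriving the $k$-bound from the interplay of the two annotation families (the paper's \cref{c:5,c:6}). The only cosmetic slip is the claim that every annotation strictly decreases — the distance annotations on states in $S_{\mathcal T}$ grow by $2$ per divergent step while only the output annotations are decremented by $\mathrm{dec}$ — but your earlier description of the invariant already treats them correctly, so this does not affect the argument.
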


\begin{proof}
Each reachable (according to the strategies of {\In} and {\Out}) node of ${\mathsf H}^{\mathsf\frown}{\mathsf t}^{\mathsf\frown}{\mathsf s}$ is identified with a node of the input tree $t$.
We show by induction on the height of a level of the input tree $t$ that the following statement holds:

Fix an accepting run of $\mathcal A$ on ${\mathsf H}^{\mathsf\frown}{\mathsf t}^{\mathsf\frown}{\mathsf s}$.
For each level of the input tree $t$, there exists a sequence of configurations $\rho:~(t, q_0, \varphi_0)~\rightarrow_*~(t,s_\rho,\varphi)$ of $\mathcal T$ with associated origin function $o_\rho$ that extends the sequence of configurations from the previous level such that the following conditions are satisfied:

Refer to \cref{fig:proof} for a graphical representation of the relation between some of the mentioned components below.
For a reachable (according to the strategies of {\In} and {\Out}) node of ${\mathsf H}^{\mathsf\frown}{\mathsf t}^{\mathsf\frown}{\mathsf s}$ that corresponds to a node $u \in \dom{t}$ on that level, that was reached with a state that contains $(f,o,S_O,S_\mathcal T)$ in the run of $\mathcal A_k$, and $e \in \mathrm{EXT}_k(S_\mathcal T(f))$ was the extension chosen to compute the states at the children in the run of $\mathcal A_k$, there exists a unique node $v \in \dom{s}$ such that

\begin{enumerate}\label{enum:claims}
  \item \label{c:1} for all $x \in \dom{S_O \cdot {o^{(k)}}}$ with $\val{S_O \cdot {o^{(k)}}}(x) \in \SigmaO^{(k)}$ holds 
  \[
    \val{s}(vx) = strip(\val{S_O \cdot {o^{(k)}}}(x)),
  \]
\end{enumerate}

 This condition guarantees that the run of $\mathcal A$ is build according to the choices of {\Out}.

\begin{enumerate}\setcounter{enumi}{1}
  \item \label{c:2} for all $x \in \dom{e}$ with $\val{e}(x) \in \SigmaO^{(k)}$ holds 
  \[
    \val{s_\rho[v \leftarrow e]}(vx)~=~strip(\val{e}(x)), \text{ and}
  \]
  and if $x \in \dom{S_O \cdot {o^{(k)}}}$ and $\val{S_O \cdot {o^{(k)}}}(x) \in \SigmaO^{(k)}$, then 
  \[
    strip(\val{e}(x))~=~\val{s}(vx),
  \]
\end{enumerate}

This condition guarantees that the output tree $s_\rho$ (with the additional computation step(s) from the extension $e$) is build up from the computations of $\mathcal T$ guessed in the run of $\mathcal A$, and these correspond to the output choices of {\Out}.
Eventually, this condition guarantees that the computation of {\Out} and the guessed computation of $\mathcal T$ yield the same output tree.

\begin{enumerate}\setcounter{enumi}{2}
  \item \label{c:3} for the unique $x \in \dom{S_\mathcal T}$ with $\val{S_\mathcal T}(x) \in Q$ holds 
  \[
    \val{s_\rho}(vx) = strip(\val{S_\mathcal T}(x)), \varphi(vx) = u \text{ and } \val{t}(u) = f,
  \]
\end{enumerate}

This condition guarantees that the extension $e$ of the current output information tree $S_\mathcal T$ in the run of $\mathcal A$ was chosen wrt.\ the correct current input symbol.

\begin{enumerate}\setcounter{enumi}{3}
  \item \label{c:4} for all $xy \in \dom{S_\mathcal T}$ with $\val{S_\mathcal T}(x) \in Q^{(k)} \setminus Q$ holds 
  \[
    \val{t}(\varphi(vx)y) = \val{S_\mathcal T}(xy),
  \]
\end{enumerate} 

Recall that if the computation of $\mathcal T$ on $t$ (which is guessed in the run of $\mathcal A$) and the computation of ${\Out}$ on $t$ (which is obtained from ${\mathsf H}^{\mathsf\frown}{\mathsf t}^{\mathsf\frown}{\mathsf s}$) diverge, input neighborhoods for the divergent directions are chosen to continue the computation of $\mathcal T$, cf.~\cref{def:next}.
This condition guarantees that these input neighborhood choices are correct.
 Recall the form of output information trees wrt.\ $\mathcal T$, cf.~\cref{def:tdtt-info-tree}.
The input neighborhoods in such a tree are found below the nodes with labels from $Q^{(k)} \setminus Q$.
 
As a consequence of \cref{c:3,c:4}, we can define the run $\rho+e$ to be the run that is obtained from $\rho$ by additionally making the computation step(s) from the extension $e$.
Let $o_{\rho+e}$ be the associated origin function of this run.

\begin{enumerate}\setcounter{enumi}{4}
  \item \label{c:5} for all $x \in \dom{S_\mathcal T}$ with $\val{S_\mathcal T}(x) \in Q^{(k)} \setminus Q$ holds 
  \[
    \mathit{dist}(u,\varphi(vx)) = \mathit{ann}(\val{S_\mathcal T}(x)), \text{ and}
  \]
 for all $x \in \dom{e}$ with $\val{e}(x) \in \SigmaO^{(k)}$ holds
 \[
    \mathit{dist}(u,o_{\rho+e}(vx)) \text{ is at most } k - \mathit{ann}(\val{e}(x)),
 \]
\end{enumerate} 

This condition relates annotations of nodes from $S_\mathcal T$ resp.\ $e$ to distances.
The run of $\mathcal A$ follows the paths through $t$ that {\Out} reads.
The first part of the condition guarantees that the distance between the current input node $u$ that {\Out} reads and some other divergent input node $\varphi(vx)$ that $\mathcal T$ reads is given by the respective annotation.
The second part of the condition guarantees that the distance of current input node $u$ that {\Out} reads to the origin $\varphi(vx)$ of some already by $\mathcal T$ produced output node $vx$ is bounded by the respective annotation subtracted from $k$.
This condition is an auxiliary condition that is only needed to prove the next condition.

\begin{enumerate}\setcounter{enumi}{5}
  \item \label{c:6} for all $x \in \dom{S_O \cdot {o^{(k)}}} \cap \dom{e}$ with $\val{S_O \cdot {o^{(k)}}}(x) \in \SigmaO^{(k)}$ and $\val{e}(x) \in \SigmaO^{(k)}$ holds 
  \[
    \mathit{dist}(u,o_{\rho+e}(vx)) \text{ is at most } \mathit{ann}(\val{S_O \cdot {o^{(k)}}}(x)), \text{ and}
  \]
  \[
    \mathit{dist}(u,o(vx)) \text{ is at most } k - \mathit{ann}(\val{S_O \cdot {o^{(k)}}}(x)), 
  \]
  consequently, $\mathit{dist}(o(vx),o_{\rho+e}(vx)) \leq k$.
\end{enumerate}

This condition guarantees for output, that was in this computation step finally produced by both parties (one party could have produced the output before), that the distance between their origins is at most $k$.
Eventually, it guarantees that the desired distance bound of $k$ between the origins from the computation of {\Out} and the guessed computation of $\mathcal T$ is respected.

Altogether, it is then easy to see that $(t,s,o) \in_k R_o(\mathcal T)$, because we inductively build a run $\rho$ of $\mathcal T$ on $t$ with final transformed output $s$ and origin mapping $o_\rho$ such that $dist(o(x),o_{\rho}(x)) \leq k$ for all $x \in \dom{s}$.

\paragraph*{Induction base}

We begin with the root level.
We chose $\rho$ to be the initial configuration $(t,q_0,\varphi_0)$.
We have to consider $u = \varepsilon$ with current information $(f,o,\circ,q_0)$, where $f$ is the label of the root, $o$ is the output choice made by {\Out} in the root, and $e \in \mathrm{EXT}_{k}(S_{\mathcal T}(f))$ is the extension used in the run of $\mathcal A$.
We chose $v = \varepsilon$, it is easy to see that \cref{c:1,c:2,c:3,c:4,c:5,c:6} are satisfied

\paragraph*{Induction step}

Assume the claim holds for level $n$.
Let $\rho: (t, q_0, \varphi_0) \rightarrow_* (t,s_\rho,\varphi)$ be a sequence of configurations such that the claim holds.
We show that the claim also holds for level $n+1$ by showing for any reachable node that corresponds to some input node $u' \in \dom{t}$ with height $n+1$ that we can extend the sequence $\rho$ and pick some $v' \in \dom{s}$ such that \cref{c:1,c:2,c:3,c:4,c:5,c:6} are satisfied for $u'$.
The desired sequence of configurations for level $n+1$ is obtained by applying all extensions to $\rho$ defined for each reachable input node from level $n+1$.

To define the extension of $\rho$ for some $u' \in \dom{t}$, we consider the unique node that corresponds to an input node $u \in \dom{t}$ with height $n$ such that $u'$ is a child of $u$, say the $i$th child, thus $u' = ui$.
We assume that $(f,o,S_O,S_\mathcal T)$ is the state reached in the run of $\mathcal A$ for this node, and that $e \in \mathrm{EXT}_{k}(S_{\mathcal T}(f))$ is the extension that was used  to compute the states at the children in the run of $\mathcal A$.
Let $v \in \dom{s}$ be the unique node such that \cref{c:1,c:2,c:3,c:4,c:5,c:6} are satisfied.
For $u' \in \dom{t}$, assume that $(f',o',S_O',S_\mathcal T')$ is the state reached in the run of $\mathcal A$ for this node, and that $e' \in \mathrm{EXT}_{k}(S_{\mathcal T'}(f'))$ is the extension that was used to compute the states at the children in the run of $\mathcal A$.

Since for $u \in \dom{t}$ \cref{c:1,c:2,c:3,c:4,c:5,c:6} are satisfied, especially since \cref{c:3,c:4} are satisfied, we define $\rho'$ as a sequence $(t, q_0, \varphi_0) \rightarrow_* (t,s_\rho,\varphi) \rightarrow_* (t,s_{\rho'},\varphi')$ such that $s_{\rho'},$ is obtained from $s_\rho$ by applying the rules corresponding to the extension $e$.
Recall how $S_O'$ and $S_\mathcal T'$ are computed for $u' = ui$.
Therefore, recall \cref{def:next} which defines the next relation and \cref{def:sync} which defines the sync function.
The output information trees $S_O'$ and $S_\mathcal T'$ are computed from the result of $sync(S_O \cdot o^{(k)},e)$.
We are interested in the node $z$ from $\dom{S_O \cdot o^{(k)}} \cap \dom{e}$ such that $\lambda$ maps $S_O \cdot o^{(k)}|_z$ to $e|_z$, and these are the upper parts of $S_O'$ and and $S_\mathcal T'$.
We then define $v' \in \dom{s}$ as the node $vz \in \dom{s}$.
We now show that \cref{c:1,c:2,c:3,c:4,c:5,c:6} are satisfied for $u'$ and the choice of $v'$.

\phantom{\quad}

\noindent {\sffamily \bfseries \cref{c:1}.}
We have to show that for all $x \in \dom{S_O' \cdot {o'^{(k)}}}$ with $\val{S_O' \cdot {o'^{(k)}}}(x) \in \SigmaO^{(k)}$ holds 
\[
  \val{s}(v'x) = strip(\val{S_O' \cdot {o'^{(k)}}}(x)).
\]

We only have to show the claim for all newly added nodes that were not already part of $\dom{S_O \cdot {o^{(k)}}}$, formally, these are the nodes $x \in \dom{S_O' \cdot {o'^{(k)}}}$ such that $v'x \notin \{v\} \cdot \dom{S_O \cdot {o^{(k)}}}$.
There are two cases how such a new node could have been introduced.
First, $x$ belongs to the $o'^{(k)}$ part of $S_O' \cdot {o'^{(k)}}$.
Then clearly the claim is true, because the run of $\mathcal A$ uses $o'$ by construction.
Secondly, $x$ belongs to the $S_O'$ part of $S_O' \cdot {o'^{(k)}}$ not present in $S_O \cdot {o^{(k)}}$.
That means the node $x$ was introduced to $S_O'$ as follows.
Recall the computation of the next states of $\mathcal A$ in the run.
Since $S_O'$ is not $\circ$, the result of $sync(S_O \cdot o,e)$ must yield $\lambda$ such that $\lambda: w[x_{j_1},\dots,x_{j_n}] \mapsto q(x_i)$ for some context $w$, some state $q$ of $\mathcal T$, and direction $i$.
In order for $\mathcal A$ to continue the computation from $u$, states at the child nodes of $u$ have to be chosen.
Recall that therefore $\mathcal A$ guesses output trees $s_j$ produced by \Out\ while continuing to read from $uj$ for all $j \neq i$.
These guesses are correct, because the run of $\mathcal A$ is successful, i.e., especially the part of the run of $\mathcal A$ starting at the $j$th child with $q_{out=dec(s_j^b)}$ for some $(s_j^a,s_j^b) \in \mathrm{DUP}_{(k)}(s_j)$ is successful for all $j \neq i$.
The same guesses are also part in state at $ui$, that is, in the $S_O'$ information at the corresponding positions.
Thus, for all these $x$ the claim is true.

\phantom{\quad}

\noindent {\sffamily \bfseries \cref{c:2}.}
We have to show that for all $x \in \dom{e'}$ with $\val{e'}(x) \in \SigmaO^{(k)}$ holds 
\[
  \val{s_\rho'[v' \leftarrow e']}(v'x)~=~strip(\val{e'}(x)),
\] 
and if $x \in \dom{S_O' \cdot {o'^{(k)}}}$ and $\val{S_O' \cdot {o'^{(k)}}}(x) \in \SigmaO^{(k)}$, then 
\[
  strip(\val{e'}(x))~=~\val{s}(v'x).
\]

Since the run of $\mathcal A$ as well as $\rho'$ (and thus $s_{\rho'}$ and also $v'$) were defined according to $e$, the claim is true.

\phantom{\quad}

\noindent {\sffamily \bfseries \cref{c:3}.}
We have to show that for the unique $x \in \dom{S_\mathcal T'}$ with $\val{S_\mathcal T'}(x) \in Q$ holds 
\[
  \val{s_{\rho'}}(v'x) = strip(\val{S_\mathcal T'}(x)), \varphi(v'x) = u' \text{ and } \val{t}(u') = f'.
\]

Since the run of $\mathcal A$ as well as $\rho'$ were defined according to $e$, the claim is true.

\phantom{\quad}

\noindent {\sffamily \bfseries \cref{c:4}.}
We have to show that for all $xy \in \dom{S_\mathcal T'}$ with $\val{S_\mathcal T'}(x) \in Q^{(k)} \setminus Q$ holds 
\[
  \val{t}(\varphi(v'x)y) = \val{S_\mathcal T'}(xy).
\]

Recall the description of the condition, it states that this is to make sure that the neighborhood guesses included in $S_\mathcal T'$ (for $u'=ui$) are correct.
We only have to show this for neighborhoods that were introduced in the step from $u$ to $ui$, because for neighborhoods from previous steps this condition was already shown to be true.
For the newly introduced neighborhoods the condition is true, because in the node corresponding to $u$, the automaton $\mathcal A$ guesses neighborhoods $t_1,\dots,t_n$ to compute $S_\mathcal T'$ passed to the node corresponding to $ui$.
Subsequently, $\mathcal A$ validates the guesses, because the run of $\mathcal A$ is successful from the $j$th child (corresponding to $uj$) with state $q_{in=t_j}$ for all $j\neq i$.

\phantom{\quad}

As a consequence of \cref{c:3,c:4}, we can define the run $\rho'+e'$ to be the run that is obtained from $\rho'$ by additionally making the computation step(s) from the extension $e'$.
Let $o_{\rho'+e'}$ be the associated origin function of this run.

\phantom{\quad}

\noindent {\sffamily \bfseries \cref{c:5}.}
We have to show that for all $x \in \dom{S_\mathcal T'}$ with $\val{S_\mathcal T'}(x) \in Q^{(k)} \setminus Q$ holds 
\[
  \mathit{dist}(u',\varphi'(v'x)) = \mathit{ann}(\val{S_\mathcal T'}(x)),
\]
and for all $x \in \dom{e'}$ with $\val{e'}(x) \in \SigmaO^{(k)}$ holds
\[
  \mathit{dist}(u',o_{\rho'+e'}(v'x)) \text{ is at most } k - \mathit{ann}(\val{e'}(x)).
\]

Recall the description of this condition.
The first part of the claim is easily seen to be true, we only give a brief explanation.
Pick one $x \in \dom{S_\mathcal T'}$ such that $\val{S_\mathcal T'}(x) \in Q^{(k)} \setminus Q$.
If the annotation is $> 2$, then there is some node $y \in \dom{S_\mathcal T}$ with $\val{S_\mathcal T}(y) \in Q^{(k)}\setminus Q$ and $ann(\val{S_\mathcal T'}(x)) = ann(\val{S_\mathcal T}(y)) + 2$ such that $\varphi'(v'x)$ is the child of $\varphi(vy)$.
Such a node exists because $\varphi'$ is obtained from $\varphi$ according to $e$ which extends all computations present in $S_\mathcal T$ by one step.
From the induction hypothesis we know that $dist(u,\varphi(vy)) = ann(\val{S_\mathcal T}(y))$, and since $dist(u,u') = 1$ and $dist(\varphi(vy),\varphi'(v'x)) = 1$, we can conclude that $dist(u',\varphi'(v'x)) = ann(\val{S_\mathcal T}(y)) + 2 = ann(\val{S_\mathcal T'}(x))$.
The correctness of the statement for the case that the annotation is $2$ can be shown with similar arguments.
Note that by construction the annotation is never $0$ nor $1$, because when the computations read the same node and then follow divergent directions to read in the tree, both advance one step in divergent directions making their distance $2$.

We prove the second part of the claim.
For $x \in \dom{e'}$ with $\val{e'}(x) \in \SigmaO^{(k)}$ we distinguish whether $x$ comes from the $S_\mathcal T'$ part or was introduced by extending $S_\mathcal T'$.

In the former case, the node was already represented in $e$, meaning there exists some $y \in \dom{e}$ such that $vy = v'x$.
From the induction hypothesis we know that $dist(u,o_{\rho+e}(vy))$ is at most $k - ann(\val{e}(y))$.
From $e$, the information output tree $S_{\mathcal T}'$ is computed, thereby reducing the annotation by one.
Thus, $dist(u',o_{\rho'+e'}(v'x))$ is at most $k - ann(\val{e'}(x))$, because $dist(u',o_{\rho'+e'}(v'x)) = dist(u',o_{\rho+e}(vy)) = dist(u,o_{\rho+e}(vy)) + 1$ and $ann(\val{e'}(x)) = ann(\val{e}(x)) - 1$.

In the latter case, the node was introduced in $e'$.
There are two possibilities how nodes with labels in $\Gamma^{(k)}$ can be introduced in $e'$.
Either it is introduced by extending the computation (stored in $S_\mathcal T$) at the unique node that is labeled by a symbol from $Q$, or by extending the computation at a node that is labeled by a symbol from $Q^{(k)}\setminus Q$.
In the first case, from the third condition (which we already proved), we know that this node represents the state that is reached by $\mathcal T$ at the node $u'$, and the output node $v'x$ was produced at $u'$.
By construction of $e'$ the annotation is set to $k$, and thus clearly $dist(u',o_{\rho'+e'}(v'x))$ is at most $k-ann(\val{e'}(x))$, because it is exactly zero.
In the second case, as we proved in the paragraph before, we know that such a node $z$ represents the state that is reached by $\mathcal T$ at the node $\varphi'(v'z)$ with $dist(u',\varphi'(v'z)) = ann(\val{S_\mathcal T'}(x))$.
Let $a$ denote the value $ann(\val{S_\mathcal T'}(x))$.
From $dist(u',\varphi'(v'z)) = a$ it follows directly that $dist(u',o_{\rho'+e'}(v'x))$ is $a$, because $o_{\rho'+e'}$ derived from $\varphi'$ and $e'$, i.e., $o_{\rho'+e'}(v'x) = \varphi'(v'z)$.
By construction of $e'$, the annotation at $x$ is set to $k-a$.
We have to show that $dist(u',o_{\rho'+e'}(v'x))$ is at most $k - ann(\val{e'}(x)) = k - (k-a) = a$.
This is true, because we have proved that the distance is exactly $a$.

\phantom{\quad}

\noindent {\sffamily \bfseries \cref{c:6}.}
We have to show that for all $x \in \dom{S_O' \cdot {o'^{(k)}}} \cap \dom{e'}$ with $\val{S_O' \cdot {o'^{(k)}}}(x) \in \SigmaO^{(k)}$ and $\val{e'}(x) \in \SigmaO^{(k)}$ holds 
\[
  \mathit{dist}(u',o_{\rho'+e'}(v'x)) \text{ is at most } \mathit{ann}(\val{S_O' \cdot {o'^{(k)}}}(x)), \text{ and}
\]
\[
  \mathit{dist}(u',o(v'x)) \text{ is at most } k - \mathit{ann}(\val{S_O' \cdot {o'^{(k)}}}(x)).
\]
Consequently, $dist(o(v'x),o_{\rho'+e'}(v'x)) \leq k$.

\phantom{\quad}

We prove the first part of the claim, that is, for such an $x \in \dom{S_O' \cdot {o'^{(k)}}} \cap \dom{e'}$ holds $dist(u',o_{\rho'+e'}(v'x)) \leq ann(\val{S_O' \cdot {o'^{(k)}}}(x))$.
From the previous condition we know that $dist(u',o_{\rho'+e'}(v'x))$ is at most $k - ann(\val{e'}(x))$.
Showing that $k - ann(\val{e'}(x)) \leq ann(\val{S_O' \cdot {o'^{(k)}}}(x))$ proves the claim.

Towards this, we recall how $S_O'$ and $S_\mathcal T'$ can look like, we recall the definition of the next relation, cf.~\cref{def:next}.
Since only one of the players can be ahead, we observe one of the following two situations.
$S_O' = \circ$ and $S_\mathcal T'$ is a tree from $N_{\SigmaO^{(k)} \cup \SigmaI \cup Q^{(k)} \cup Q}$ (indicating that $\mathcal T$ is ahead (or on par if $w$ is the context $x_1$ (which is the same as $\circ$)), or $S_O'$ is a (special) tree (which is not $\circ$) over $\Gamma^{(k)}$ and $S_\mathcal T \in Q$ (indicating that {\Out} is ahead).

If $\mathcal T$ is ahead, we have $S_O' \cdot o'^{(k)} = o'^{(k)}$ which means every annotation in $S_O' \cdot o'^{(k)}$ is $k$.
Then it is easy to see that $k - ann(\val{e'}(x)) \leq ann(\val{S_O' \cdot {o'^{(k)}}}(x))$, because $k - ann(\val{e'}(x)) \leq k$ is obviously true.

If {\Out} is ahead, going from $S_T'$ to $e'$ only yields $\Gamma^{(k)}$-labeled nodes whose annotation is $k$.
Clearly, $k - ann(\val{e'}(x)) = 0 \leq ann(\val{S_O' \cdot {o'^{(k)}}}(x))$.

We prove the second part of the claim, that is, for such an $x \in \dom{S_O' \cdot {o'^{(k)}}} \cap \dom{e'}$ holds $dist(u',o(v'x))$ is at most $k - ann(\val{S_O' \cdot {o'^{(k)}}}(x))$.
We distinguish whether $x$ comes from $S_O'$ or from $o'^{(k)}$.

In the former case, where $x$ comes from $S_O'$, we need another distinction.
First, we consider the case that node was already present in $S_O \cdot o^{(k)}$, i.e., there exists some $y \in \dom{S_O \cdot o^{(k)}}$ such that $vy = v'x$.
From the induction hypothesis, we know that $dist(u,o(vy))$ is at most $k - ann(\val{S_O \cdot {o^{(k)}}}(y))$.
Thus, it follows that $dist(u',o(v'x))$ is at most $k - ann(\val{S_O' \cdot {o'^{(k)}}}(x))$, because $dist(u,u') = 1$, and $ann(\val{S_O' \cdot {o'^{(k)}}}(x)) = ann(\val{S_O \cdot {o^{(k)}}}(y)) - 1$.

Secondly, we consider the case that the node was not present in $S_O \cdot o^{(k)}$, i.e., it has been introduced when computing $S_O'$ from $S_O \cdot o^{(k)}$.
We recall how this can happen, therefore, we recall the definition of the next relation, cf.~ \cref{def:next}.
We are in the situation that {\Out} is ahead of $\mathcal T$ and follows more than one direction whereas $\mathcal T$ follows only one.
For the directions that are not shared by both, by construction, output trees are chosen that have to be produced eventually by {\Out} and $\mathcal T$.
The node $x$ we consider is part of such an output tree, say the tree $s$.
By construction, two numbers $a$ and $b$ with $a + b \leq k$ are chosen such that $dist(u,o(v'x))$ is at most $b$ (otherwise the run of $\mathcal A$ would not exist because this condition is verified from a sibling of $u'$ beginning in state $q_{out=dec(s^b)}$).
That means that $dist(u',o(v'x))$ is at most $b+1$.
The annotation of $x$ in $S_O'$ is $a-1$, thus, from the fact that $dist(u',o(v'x))$ is at most $b+1$, it follows directly that $dist(u',o(v'x))$ is at most $k - ann(\val{S_O' \cdot {o'^{(k)}}}(x)) = k - (a - 1)$, because $b+1 \leq k - a + 1$.

In the latter case, where $x$ comes from $o'^{(k)}$, it follows that $dist(u',o(v'x))$ is at most $k - ann(\val{S_O' \cdot {o'^{(k)}}}(x))$, because the output node $v'x$ has origin $u'$, i.e., their distance is zero, and $x$ is introduced with annotation $k$.

\phantom{\quad}

We have shown that \cref{c:1,c:2,c:3,c:4,c:5,c:6} are satisfied for $u'$ and $v'$ which completes the proof of the lemma.
\end{proof}

Now we show the other direction.

\begin{lemma}\label{lemma:proof-right-to-left}
${\mathsf H}^{\mathsf\frown}{\mathsf t}^{\mathsf\frown}{\mathsf s} \in L(\mathcal A) \Leftarrow (t,s,o) \in_k R_o(\mathcal T) \text{ and } o \text{ is a linear transduction}$
\end{lemma}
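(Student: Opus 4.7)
The plan is to invert the construction used in \cref{lemma:proof-left-to-right}. By hypothesis, there is a run $\rho\colon(t,q_0,\varphi_0)\rightarrow_{\mathcal T}^*(t,s,\varphi)$ of $\mathcal T$ on $t$ with associated linear origin function $o'$ such that $\mathit{dist}(o(x),o'(x))\leq k$ for every $x\in\dom{s}$. Fix such a $\rho$ and $o'$. I will build an accepting run $\pi$ of $\mathcal A$ on ${\mathsf H}^{\mathsf\frown}{\mathsf t}^{\mathsf\frown}{\mathsf s}$ level by level over $\dom{t}$, using $\rho$ to guide every nondeterministic choice in the construction of transitions and of $\mathrm{NEXT}$, and using $o$, $o'$, and the shape of $t$ to supply the neighborhood guesses $t_\ell\in N_{\SigmaI}^k$ and the duplicated output guesses $(s_\ell^a,s_\ell^b)\in\mathrm{DUP}_k(s_\ell)$.

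The key definitional step is, at every node of ${\mathsf H}^{\mathsf\frown}{\mathsf t}^{\mathsf\frown}{\mathsf s}$ reachable via the strategies $\mathsf t$ and $\mathsf s$ and corresponding to some $u\in\dom{t}$, to read off from $\rho$ the unique $v\in\dom{s}$ and the tuple $(f,o,S_O,S_{\mathcal T})$ that satisfies the six conditions \cref{c:1,c:2,c:3,c:4,c:5,c:6} of the forward proof; the existence of $v$ is witnessed by $\rho$ together with $o$ and $o'$, as follows. If at $u$ the node $v\in\dom{s}$ such that ${\mathcal T}$'s configuration in $\rho$ (restricted to the frontier above or around $u$) is ahead of {\Out}'s output at $v$, take $S_O=\circ$ and read $S_{\mathcal T}$ off the difference tree between what $\rho$ has produced at $vx$-positions and what {\Out} has yet to emit, with annotations of state-labelled nodes $q^{(a)}$ set to $a=\mathit{dist}(u,\varphi(vx))$ (bounded by $k$ by hypothesis on $\mathit{dist}(o,o')$); if instead {\Out} is ahead, take $S_{\mathcal T}\in Q$ equal to the current state of $\rho$ at $v$ and read $S_O$ off the output already committed by $\mathsf s$ above $v$, annotating each $\Gamma$-position with the number of remaining steps ${\mathcal T}$ has to emit it (again bounded by $k$). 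Choose the extension $e\in\mathrm{EXT}_k(S_{\mathcal T}(f))$ to be the one induced by the actual rule of $\Delta$ fired at $u$ in $\rho$; this is well-defined because $\rho$ uses exactly one rule at $u$ and $\mathcal T$ is linear, so copies do not interfere.

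With these choices fixed, I will verify that $(f,o,S_O,S_{\mathcal T})\rightarrow_{\mathrm{next}}(P_1,\dots,P_i)$ holds for the tuple $(P_1,\dots,P_i)$ prescribed by the forward-direction analysis, so no transition into $\{q_{err}\}^i$ is needed. The case distinction in $(\dagger)$ matches the case distinction described in the previous paragraph (who is ahead; do $\mathcal T$ and {\Out} share the next read direction), and every existentially quantified auxiliary in $(\dagger)$ — the $t_\ell$, the $s_\ell$, the $(s_\ell^a,s_\ell^b)$ — is supplied by restricting $t$ to the truncated subtree $t|_{ui}$ of height $\min(k,h(t|_{ui}))$, respectively by taking $s_\ell$ to be the subtree of $s$ that $\rho$ will produce starting from a divergent direction together with an annotation split $(a,b)$ with $a+b\leq k$ that exists because the overall origin distance is bounded by $k$. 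The auxiliary states $q_{in=t_\ell}\in Q_I$ and $q_{out=\mathrm{dec}(s_\ell^b)}\in Q_O$ introduced on the side branches are discharged by straightforward sub-runs that simply walk through ${\mathsf H}^{\mathsf\frown}{\mathsf t}^{\mathsf\frown}{\mathsf s}$, checking labels of $t$ on the input side and following $\mathsf s$ on the output side; these sub-runs succeed by the definition of $t_\ell$ and by the fact that $s_\ell$ is exactly the prefix of $s$ that $\mathsf s$ will produce in that branch.

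The main obstacle will be the bookkeeping required to make the annotations match. Concretely, I have to show that for every newly introduced annotation $a$ assigned during the definition of $S_O$ or $S_{\mathcal T}$ for the successor node, the inequalities governing annotations in \cref{def:expansion,def:expansion-dist,def:expansion-max} and the conditions of $\mathrm{DUP}_k$ are preserved, and that $\mathrm{dec}(\cdot)$ is always defined (no annotation reaches $0$ before the corresponding output is committed). Both follow from $\mathit{dist}(o(x),o'(x))\leq k$ for all $x$: the sum of ``$\mathcal T$'s lag'' and ``{\Out}'s lag'' at any committed output node is at most $k$, and each further step strictly increases the commitment of one side along a shared input path or, on divergent paths, increases distance by exactly two, which matches the increment in $\mathrm{ext}_k$. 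Once annotations are tracked, the resulting run of $\mathcal A$ never enters $q_{err}$, so every set of states along any path has priority $0$, and $\pi$ is accepting. Hence ${\mathsf H}^{\mathsf\frown}{\mathsf t}^{\mathsf\frown}{\mathsf s}\in L(\mathcal A)$, as required.
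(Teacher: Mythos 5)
Your proposal is correct and follows essentially the same route as the paper: fix the witnessing run $\rho$ with origin function within distance $k$ of $o$, and let $\mathcal A$ resolve all of its nondeterministic choices (the extension $e$, the neighborhood guesses $t_\ell$, the duplicated output guesses) according to $\rho$, so that no branch ever enters $q_{err}$. The paper states this in two sentences; your plan supplies the bookkeeping details (in particular why the annotations never underflow), which is a faithful elaboration rather than a different argument.
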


\begin{proof}

We show that $(t,s,o) \in_k R_o(\mathcal T)$ obtained from the strategy annotations $s$ an $t$ implies that $\mathcal A$ accepts ${\mathsf H}^{\mathsf\frown}{\mathsf t}^{\mathsf\frown}{\mathsf s}$.

Since $(t,s,o_\rho) \in_k R_o(\mathcal T)$, there is a run $\rho$ of $\mathcal T$ on $t$ with result $s$ and origin function $o_\rho$ that for all $x \in \dom{s}$ holds $dist(o(x),o_\rho(x)) \leq k$.
To obtain an accepting run on ${\mathsf H}^{\mathsf\frown}{\mathsf t}^{\mathsf\frown}{\mathsf s}$ the parity tree automaton $\mathcal A$ can make its guesses according to the run $\rho$, then it accepts.
\end{proof}

This completes the proof of \cref{lemma:linear}.

\section{The full set of top-down tree transductions}
\label{app:fullcase}

Now we extend the constructions from the previous sections to handle the full set of top-down tree transductions, in other words, we lift the restriction that the top-down tree transductions are linear.

We fix some values.

\begin{assumption}\label{ass:non-lin}
  \begin{itemize}
    \item Let $\Sigma,\Gamma$ be ranked alphabets, and let $m$ be the maximal rank of $\Sigma$.
    \item Let $\mathcal T$ be a (possibly non-linear) {\tdtt} of the form $(Q,\Sigma,\Gamma,q_0,\Delta)$.
    \item Let $M$ be the maximal height of a tree appearing on the right-hand side of a transition rule in $\Delta$.
  \end{itemize}
\end{assumption}

Recall the notations and definitions given in \cref{app:notations}.
None of them are restricted to linear computations.
We revisit the next relation, see \cref{def:next} in \cref{app:next}, and take a look at some examples where non-linear extensions of {\Out} and $\mathcal T$ occur.

First, consider the following example.
Let $S \in \mathcal S$ be a state for which we want to compute successor states $P_1,\dots, P_i$ such that $S \rightarrow_{next} (P_1,\dots,P_i)$.
Let $s$ be the extension of the output information tree wrt.\ {\Out} stored in $S$ and $s'$ be the considered extension of the output information tree wrt.\ $\mathcal T$ stored in $S$.
These trees and the result $\lambda$ of $sync(s,s')$ are depicted below.
\begin{center}
  \begin{tikzpicture}[thick,baseline=(current bounding box.north)]
  \tikzstyle{level 1}=[sibling distance=10mm]
   \tikzstyle{level 2}=[sibling distance=8mm]
   \tikzstyle{level 3}=[sibling distance=8mm]

 \node[xshift=-1cm] at (0,0) {$s:$};
 
  \path[level distance=10mm,]
  node (root1) at (0,0) {$g^{(3)}$}
   child{
       node(00){$x_1$}
      }
   child{
       node(1){$x_1$}
   }
     child{
       node(1){$x_1$}
   }
  ;
  
  \begin{scope}[xshift=3cm]
   \tikzstyle{level 1}=[sibling distance=10mm]
   \tikzstyle{level 2}=[sibling distance=8mm]

 \node[xshift=-1cm] at (0,0) {$s':$};
 
  \path[level distance=10mm,] node(root2) at (0,0) {$g^{(5)}$}
    child{
          node(00){$q_1$}
           child{
             node(000){$x_1$}
           }
    }
    child{
        node(10){$q_2$}
         child{
          node{$x_1$}
         }
    }
    child{
        node(10){$q_2$}
         child{
          node{$x_1$}
         }
    }
  ;  
  \end{scope}
  
    \begin{scope}[xshift=6cm]
   \tikzstyle{level 1}=[sibling distance=10mm]
   \tikzstyle{level 2}=[sibling distance=8mm]

 
 \node[xshift=-1.2cm] at (0,0) {$\lambda:\ x_1 \mapsto$};
  \path[level distance=10mm,] 
          node(00a) at (0,0){$q_1$}
           child{
             node(000){$x_1$}
           }
  ;
   \end{scope}
   
   \begin{scope}[xshift=8cm]
   \node[xshift=-1cm] at (0,0) {$x_1 \mapsto$};
  \path[level distance=10mm,] 
          node(00a) at (0,0){$q_2$}
           child{
             node(000){$x_1$}
           }
  ;
  \end{scope}
  
   \begin{scope}[xshift=10cm]
   \node[xshift=-1cm] at (0,0) {$x_1 \mapsto$};
  \path[level distance=10mm,] 
          node(00a) at (0,0){$q_2$}
           child{
             node(000){$x_1$}
           }
  ;
  \end{scope}
  \end{tikzpicture}
  \end{center}
This describes a situation where both {\Out} and $\mathcal T$ continue to read in dir.~1 in order to produce output for each of the three children of the current output symbol.
Thus, if we consider the definition of next relation, see \cref{def:next}, we have three states $p_1 = (/,/,\circ,q_1)$, $p_1' = (/,/,\circ,q_2)$, and $p_1'' = (/,/,\circ,q_2)$ that we want to pass to dir. 1 (of which two are the same).
Thus, coming from $\{S\}$ we send $P_1 = \{p_1,p_1',p_1''\} = \{(/,/,\circ,q_1),(/,/,\circ,q_2)\}$ to dir.~1.

It is important to note that before, as {\Out} was linear, it was not possible that more than one output branch was depended on the same input branch.
This was also reflected by the form of the states that were reachable in $\mathcal A_k$ before;
if a state was reached such that it contained two or more states of the form $(\ast,\ast,S_O,S_\mathcal T)$, then at most one the output information trees wrt.\ {\Out} were special trees, i.e., from $S_{\Gamma^{(k)}\setminus \Gamma}$, and all the other output information trees wrt.\ {\Out} were output trees, i.e., from $T_{\Gamma^{(k)}\setminus \Gamma}$.
We recall  \cref{ex:special-case} as a reminder how it can happen that an output information tree wrt.\ {\Out} is from $T_{\Gamma^{(k)}\setminus \Gamma}$.
Thus, before, it was sufficient that {\Out} could only make one output choice per node.

We return to the current example.
Here, three of the output branches are depended on the same input branch.
It is easy to see that, since for the second and third output branch the same output information trees are computed, it suffices to follow the computations for the first and second output branch.
Thus, in a run of $\mathcal A$ on $\mathsf H$ with strategy annotations for {\In} and {\Out}, if a node that corresponds to some node $v$ of the associated input tree was reached with some state $\{S\}$, and $s$ and $s'$ are computed from $S$ as above, then a node that corresponds to $v1$ is reachable with the state $\{(/,/,\circ,q_1),(/,/,\circ,q_2)\}$ and the strategy annotation of {\Out} must specify two output choices in order to meaningful continue the followed computations.

We take a look at one more example.
Starting from some $S \in \mathcal S$, let $s$ be the extension of the output information tree wrt.\ {\Out} stored in $S$ and $s'$ be the considered extension of the output information tree wrt.\ $\mathcal T$ stored in $S$.
These trees and the result $\lambda$ of $sync(s,s')$ are depicted below.
\begin{center}
  \begin{tikzpicture}[thick,baseline=(current bounding box.north)]
  \tikzstyle{level 1}=[sibling distance=10mm]
   \tikzstyle{level 2}=[sibling distance=8mm]
   \tikzstyle{level 3}=[sibling distance=8mm]

 \node[xshift=-1cm] at (0,0) {$s:$};
 
  \path[level distance=10mm,]
  node (root1) at (0,0) {$g^{(7)}$}
   child{
     node(0){$h^{(7)}$}
      child{
       node(00){$x_1$}
      }
   }
   child{
       node(1){$x_1$}
   }
  ;
  
  \begin{scope}[xshift=2.5cm]
   \tikzstyle{level 1}=[sibling distance=10mm]
   \tikzstyle{level 2}=[sibling distance=8mm]

 \node[xshift=-1cm] at (0,0) {$s':$};
 
  \path[level distance=10mm,] node(root2) at (0,0) {$g^{(5)}$}
    child{
      node(0){$h^{(5)}$}
        child{
          node(00){$q_1$}
           child{
             node(000){$x_1$}
           }
        }
    }
    child{
      node(1){$f^{(5)}$}
       child{
        node(10){$q_2$}
         child{
          node{$x_1$}
         }
       }
       child{
         node{$b^{(5)}$}
       }
    }
  ;  
  \end{scope}
  
    \begin{scope}[xshift=5.5cm]
   \tikzstyle{level 1}=[sibling distance=10mm]
   \tikzstyle{level 2}=[sibling distance=8mm]

 
 \node[xshift=-1.2cm] at (0,0) {$\lambda:\ x_1 \mapsto$};
  \path[level distance=10mm,] 
  
          node(00a) at (0,0){$q_1$}
           child{
             node(000){$x_1$}
           }
 
  ;
   \end{scope}
   
   \begin{scope}[xshift=7.5cm]
   \node[xshift=-1cm] at (0,0) {$x_1 \mapsto$};
  \path[level distance=10mm,]
      node(1b) at (0,0) {$f^{(5)}$}
       child{
        node(10){$q_2$}
         child{
          node{$x_1$}
         }
       }
       child{
         node{$b^{(5)}$}
       }
  ;

  \end{scope}

  \end{tikzpicture}
  \end{center}
In this case, from $\{S\}$, we go to the state $\left\{ (/,/,\circ,q_1), (/,/,\circ,
\begin{tikzpicture}[thick,baseline=(current bounding box.center)]
  \tikzstyle{level 1}=[sibling distance=10mm]
   \tikzstyle{level 2}=[sibling distance=8mm]
   \tikzstyle{level 3}=[sibling distance=8mm]
   \path[level distance=10mm,]
      node(1b) {$f^{(4)}$}
       child{
        node(10){$q_2$}
       }
       child{
         node{$b^{(4)}$}
       }
  ;  
   \end{tikzpicture}   
   )\right\}$ in dir.~1, and again {\Out} has to make two output choices at the next node.

We now formalize how to keep track of the different output choices that have to be made in the infinite tree ${\mathsf H_{\mathcal T,k}}$.
We have to adapt ${\mathsf H_{\mathcal T,k}}$ such that {\Out} can make an output choice for each output branch that depends on the current input branch, up to output branches where the same output information trees wrt.\ {\Out} and $\mathcal T$ are reached, meaning the computations of {\Out} and $\mathcal T$ can be continued in the same way for these output branches.
We recall that $\mathcal S$ are the states that store output information trees wrt.\ {\Out} and $\mathcal T$, thus the number of different computations of {\Out} and $\mathcal T$ for output branches (that can stay within the given origin bound $k$) is capped by $|\mathcal S|$.
Obviously, ${\mathsf H_{\mathcal T,k}}$ also has to represent non-linear output choices in this scenario.
It suffices to make the following changes.

\begin{definition}[Changes to ${\mathsf G_{\mathcal T,k}}$]\label{def:gzwei}
Given $k\geq 0$, let ${\mathsf G_{\mathcal T,k}}$ as defined in \cref{def:graphG} with the change that in each vertex of {\Out} she can make up to $|\mathcal S|$ (linear or non-linear) output choices (with the same hight as before) simultaneously.
  The output choices are represented as tuples of length at most $|\mathcal S|$.
  Recall that the set $\mathcal S$ is constructed in \cref{app:construction}.
 \end{definition}

\begin{definition}[Changes to ${\mathsf H_{\mathcal T,k}}$]
Given $k \geq 0$, let ${\mathsf H_{\mathcal T,k}}$ be the unraveling from ${\mathsf G_{\mathcal T,k}}$ as defined in \cref{def:gzwei} with root node $\varepsilon$.
\end{definition}

We also have to adapt the \pta $\mathcal A$.
Now, it must additionally verify whether a move of {\Out} is valid in the sense that for each of the outputs that depend on the same input a choice is made.
We consider a move valid if {\Out} makes as many output choices as the number of output information trees wrt.\ {\Out} (which are special trees) that are stored in the current state $S \subseteq \mathcal S$ (the set of output information trees) of $\mathcal A$.
As explained above, this captures the desired property.
As before, $\mathcal A$ has to collect the output choices in its run to build the extensions of output information trees wrt.\ {\Out}.
Since now possibly more than one output choice is made at a node, we define which output choice belongs to which followed pair of computations of {\Out} and $\mathcal T$.
Now, the \pta $\mathcal A$ stores sets of output information trees.
We fix an ordering of these trees meaning we have ordered sets.
Then the $i$th output choice of {\Out} is used to build the extension of the $i$th output information tree wrt.\ {\Out} that is a special tree, i.e., a tree that can indeed be meaningful extended, that is stored in the current state of $\mathcal A$.

Finally, we are ready to prove our main technical lemma, \cref{lemma:regular}, restated below.
We do not provide a full formal proof of the correctness, but give a proof sketch.

\lemmaregular*

\begin{proof}[Proof sketch.]
 Let $\mathcal A$ be the \pta described in the previous paragraph.

 \phantom{\quad}

 \noindent``${\mathsf H_{\mathcal T,k}}^{\mathsf\frown}{\mathsf t}^{\mathsf\frown}{\mathsf s} \in L(\mathcal A) \Rightarrow (t,s,o) \in_k R_o(\mathcal T)$''

 \phantom{\quad}
 
Let $t$, $s$, and $o$ be the input tree, the output tree, and the origin function obtained from ${\mathsf H_{\mathcal T,k}}^{\mathsf\frown}{\mathsf t}^{\mathsf\frown}{\mathsf s}$.
The proof is a simple adaptation from the proof of \cref{lemma:proof-left-to-right}.
Each reachable (according to the strategies of {\In} and {\Out}) node of ${\mathsf H_{\mathcal T,k}}^{\mathsf\frown}{\mathsf t}^{\mathsf\frown}{\mathsf s}$ is identified with a node of the input tree $t$.
We show by induction on the height of a level of the input tree $t$ that the following statement holds:

Fix an accepting run of $\mathcal A$ on ${\mathsf H_{\mathcal T,k}}^{\mathsf\frown}{\mathsf t}^{\mathsf\frown}{\mathsf s}$.
For each level of the input tree $t$, there exists a sequence of configurations $\rho:~(t, q_0, \varphi_0)~\rightarrow_*~(t,s_\rho,\varphi)$ of $\mathcal T'$ with associated origin function $o_\rho$ that extends the sequence of configurations from the previous level such that the following conditions are satisfied:

For a reachable (according to the strategies of {\In} and {\Out}) node of ${\mathsf H_{\mathcal T,k}}^{\mathsf\frown}{\mathsf t}^{\mathsf\frown}{\mathsf s}$ that corresponds to a node $u \in \dom{t}$ on that level, that was reached with a state that contains $(f,o,S_O,S_\mathcal T)$ in the run of $\mathcal A$, and $e \in \mathrm{EXT}_k(S_\mathcal T(f))$ was the extension chosen to compute the states at the children in the run of $\mathcal A$, there exists a unique maximal set $V \subseteq \dom{s}$ such that for each $v \in V$ we have that \cref{c:1,c:2,c:3,c:4,c:5,c:6} as in the proof of \cref{lemma:proof-left-to-right} on \cpageref{enum:claims} are satisfied.

The only difference to the previous proof is that each such input node is now associated with a set of output nodes, instead of a single output node.
 
From the induction it follows that $(t,s,o) \in_k R_o(\mathcal T)$.
To see this, one has to realize that we indeed build a run of $\mathcal T$ on $t$ that fully specifies the output tree $s$.
 
\phantom{\quad}
 
 \noindent``${\mathsf H_{\mathcal T,k}}^{\mathsf\frown}{\mathsf t}^{\mathsf\frown}{\mathsf s} \in L(\mathcal A) \Leftarrow (t,s,o) \in_k R_o(\mathcal T)$''

 \phantom{\quad}
 
 Let $t$, $s$, and $o$ be the input tree, the output tree, and the origin function obtained from ${\mathsf H_{\mathcal T,k}}^{\mathsf\frown}{\mathsf t}^{\mathsf\frown}{\mathsf s}$.
 By assumption we have that $(t,s,o) \in_k R_o(\mathcal T)$.
 We show that $\mathcal A$ accepts ${\mathsf H_{\mathcal T,k}}^{\mathsf\frown}{\mathsf t}^{\mathsf\frown}{\mathsf s}$.
 We let 
\begin{align*}
  \{\ \rho \mid & \ \rho \text{ is a run of } \mathcal T \text{ on } t \text{ with final transformed output } s \text{ and origin mapping } o_\rho\\
   & \ \text{such that } \mathit{dist}(o(x),o_\rho(x)) \leq k \text{ for all } x \in \dom{s} \}
\end{align*}
 be the set of all runs whose resulting origin mappings have at most an origin distance of $k$ to $o$.
 We select one run $\rho$ from this set such that the parity tree automaton $\mathcal A$ can make its guesses according to $\rho$ and accepts ${\mathsf H_{\mathcal T,k}}^{\mathsf\frown}{\mathsf t}^{\mathsf\frown}{\mathsf s}$.
 Therefore, $\rho$ has to fulfill the following property:
 
 Recall that {\Out} can make up to $|\mathcal S|$ many output choices simultaneously.
 Thus, it can happen that different branches of the output tree are build up from the same output choices.
 Each point from which this happens is call this a merge.
 We have to select $\rho$ such that these merges are reflected in $\rho$.
 Therefore, we compare the computation of {\Out} and $\rho$ using pairs of output state information trees $(S_O,S_\mathcal T)$ as used in the definition of $\rightarrow_{next}$.
 Assume {\Out} produces output that is mapped to some node $v$ and then the next outputs are mapped to, wlog., $v'$ and $v''$ with $v \sqsubset v'$ and $v \sqsubset v''$.
 If a merge has occurred for {\Out}, it must be the case that the same pair of output state information trees describes the situation at $v'$ and $v''$ and the part of $\rho$ that produces the output at $v'$ and below has to be the same as the part of $\rho$ that produces output at $v''$ and below.
 If this criterion is met for every merge, then $\rho$ describes a run that $\mathcal A$ can use to build an accepting run on ${\mathsf H_{\mathcal T,k}}^{\mathsf\frown}{\mathsf t}^{\mathsf\frown}{\mathsf s}$.
\end{proof}

\end{document}